\definecolor[named]{urlblue}{cmyk}{1,0.58,0,0.21}
\tikzstyle{smallvertex}=[draw,circle,fill=white,minimum size=4pt,inner sep=0pt]
\newtheorem{theorem}{Theorem}[section]
\newtheorem{lemma}[theorem]{Lemma}
\newtheorem{corollary}[theorem]{Corollary}
\newtheorem{observation}[theorem]{Observation}
\theoremstyle{definition}
\newtheorem{definition}[theorem]{Definition}
\theoremstyle{remark}
\newtheorem{remark}[theorem]{Remark}
\newtheorem{claim}[theorem]{Claim}
\newenvironment{claimproof}{\begin{proof}}{\end{proof}}
\newcommand{\WL}[2]{\chi^{(\infty)}_{#1}[#2]}
\newcommand{\WLit}[3]{\chi^{(#2)}_{#1}[#3]}
\newcommand{\refWL}[2]{\operatorname{step}_{#1}\!\left(#2\right)}
\DeclareMathOperator{\typ}{{\sf typ}}
\DeclareMathOperator{\atp}{atp}
\newcommand{\NN}{\mathbb N}
\newcommand{\FF}{\mathbb F}
\newcommand{\CC}{\mathcal C}
\newcommand{\CD}{\mathcal D}
\newcommand{\CF}{\mathcal F}
\newcommand{\CG}{\mathcal G}
\newcommand{\CM}{\mathcal M}
\newcommand{\CP}{\mathcal P}
\newcommand{\CQ}{\mathcal Q}
\newcommand{\CV}{\mathcal V}
\newcommand{\FA}{\mathfrak A}
\newcommand{\FB}{\mathfrak B}
\newcommand{\FC}{\mathfrak C}
\newcommand{\logic}[1]{{\sf #1}}
\newcommand{\FO}{\logic{FO}}
\newcommand{\LC}{\logic C}
\newcommand{\LL}{\logic L}
\newcommand{\LCk}[1]{\LC_{#1}}
\newcommand{\LCkq}[2]{\LC_{#1}^{(#2)}}
\newcommand{\LLk}[1]{\LL_{#1}}
\newcommand{\LLkq}[2]{\LL_{#1}^{(#2)}}
\newcommand{\hi}{{\sf hi}}
\newcommand{\lo}{{\sf lo}}
\newcommand{\Complex}{{\mathbb C}}
\newcommand{\Alg}{\mathbb A}
\newcommand{\FullMatrixAlg}{\mathsf M}
\renewcommand{\vec}[1]{\boldsymbol{#1}}
\newcommand{\Binary}{\operatorname{Bin}}
\DeclareMathOperator{\qr}{qr}
\DeclareMathOperator{\cl}{cl}
\DeclareMathOperator{\attr}{attr}
\DeclareMathOperator{\spn}{span}
\newcommand{\ceil}[1]{\left\lceil#1\right\rceil}
\newcommand{\angles}[1]{\left\langle#1\right\rangle}
\newcommand{\bigmid}{\mathrel{\big|}}
\newcommand{\Bigmid}{\mathrel{\Big|}}
\newcommand{\Biggmid}{\mathrel{\Bigg|}}
\newcommand{\orcid}[1]{\href{https://orcid.org/#1}{\includegraphics[height=1.8ex]{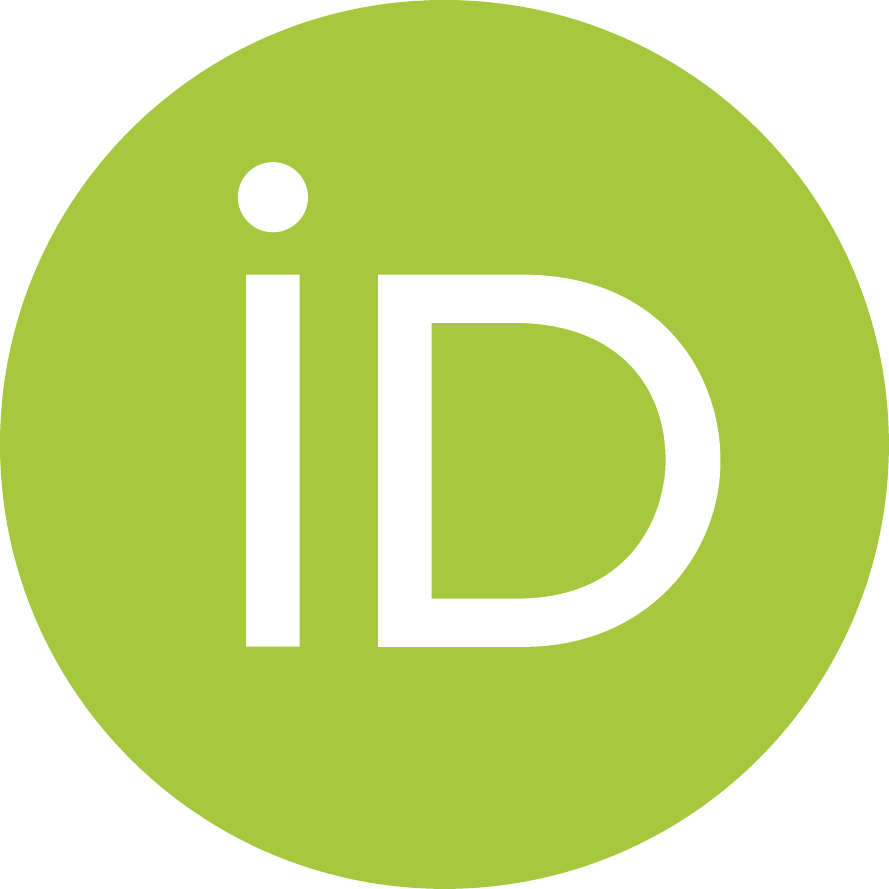}}}
\newcommand{\email}[1]{\href{mailto:#1}{\texttt{#1}}}
\title{The Iteration Number of the Weisfeiler-Leman Algorithm}
\author{
Martin Grohe \orcid{0000-0002-0292-9142}\\
RWTH Aachen University\\
\email{grohe@informatik.rwth-aachen.de}
\and
Moritz Lichter \orcid{0000-0001-5437-8074}\\
TU Darmstadt\\
\email{lichter@mathematik.tu-darmstadt.de}
\and
Daniel Neuen \orcid{0000-0002-4940-0318}\\
Simon Fraser University\\
\email{dneuen@sfu.ca}
}
\date{}
\begin{document}

\maketitle

\begin{abstract}
 We prove new upper and lower bounds on the number of iterations the $k$-dimensional Weisfeiler-Leman algorithm ($k$-WL) requires until stabilization.
 For $k \geq 3$, we show that $k$-WL stabilizes after at most $O(kn^{k-1}\log n)$ iterations (where $n$ denotes the number of vertices of the input structures), obtaining the first improvement over the trivial upper bound of $n^{k}-1$ and extending a previous upper bound of $O(n \log n)$ for $k=2$ [Lichter et al., LICS 2019].
 
 We complement our upper bounds by constructing $k$-ary relational
 structures on which $k$-WL requires at least $n^{\Omega(k)}$
 iterations to stabilize.  This improves over a previous lower bound
 of $n^{\Omega(k / \log k)}$ [Berkholz, Nordstr{\"{o}}m, LICS 2016].
 
 We also investigate tradeoffs between the dimension and the iteration number of WL, and show that $d$-WL, where $d = \lceil\frac{3(k+1)}{2}\rceil$, can simulate the $k$-WL algorithm using only $O(k^2 \cdot n^{\lfloor k/2\rfloor + 1} \log n)$ many iterations, but still requires at least $n^{\Omega(k)}$ iterations for any $d$ (that is sufficiently smaller than $n$).
 
 The number of iterations required by $k$-WL to distinguish two structures corresponds to the quantifier rank of a sentence distinguishing them in the $(k + 1)$-variable fragment $\LCk{k+1}$ of first-order logic with counting quantifiers.
 Hence, our results also imply new upper and lower bounds on the quantifier rank required in the logic $\LCk{k+1}$, as well as tradeoffs between variable number and quantifier rank.
\end{abstract}

\section{Introduction}
\label{sec:introduction}
The Weisfeiler-Leman (WL) algorithm is a combinatorial algorithm that, given a relational structure $\FA$ (in most applications, this structure is a graph), iteratively computes an isomorphism-invariant coloring of tuples of vertices of $\FA$.
The original algorithm introduced by Weisfeiler and Leman \cite{WeisfeilerL68} is the $2$-dimensional version that colors pairs of vertices.
Its generalization to arbitrary dimension $k \geq 1$, independently introduced by Babai and Mathon as well as Immerman and Lander \cite{ImmermanL90} (see also \cite{Babai16} for a historic note), 
yields for every natural number $k$ the $k$-dimensional WL algorithm ($k$-WL), which iteratively refines a coloring of vertex $k$-tuples by aggregating local structural information encoded in the colors.
More concretely, the $k$-WL algorithm initially colors all $k$-tuples of vertices $\vec v = (v_1,\dots,v_k)$ of a structure $\FA$ by the isomorphism type of the underlying induced ordered substructure.
Afterwards, in each iteration, the coloring is refined by taking the colors of all tuples into account that can be obtained from $\vec v$ by replacing a single entry of the tuple.
This process necessarily stabilizes after a finite number of iterations and the resulting coloring can be used to classify $k$-tuples of vertices.

The most prominent application of the WL algorithm lies in the context of the graph isomorphism problem.
Indeed, since no isomorphism between two structures $\FA$ and $\FB$ can map tuples of vertices of different colors to each other, the WL algorithm provides a hierarchy of increasingly powerful heuristics to the graph isomorphism problem.
While there is no dimension $k$ for which $k$-WL serves as a complete isomorphism test \cite{CaiFI92}, the algorithm is still surprisingly powerful.
For example, Grohe \cite{Grohe17} proved that for every non-trivial minor-closed graph class there is some $k \in \NN$ such that $k$-WL computes a different coloring on all non-isomorphic graphs, and thus provides a polynomial-time isomorphism test on that class.
Moreover, the WL algorithm is also regularly used as a subroutine in isomorphism algorithms (see, e.g., \cite{Neuen21,Neuen22,SunW15}) which includes Babai's \cite{Babai16} quasipolynomial-time graph isomorphism test that employs the WL algorithm with dimension $k = O(\log n)$.

More recently, the WL algorithm has also received significant attention in the machine learning context where it characterizes the expressiveness of graph neural networks \cite{Grohe21,MorrisRFHLRG19,XuHLJ19} and, more generally, the colorings computed by WL are used in classification tasks on graph-structured data sets (see, e.g., \cite{MorrisLMRKGFB21,ShervashidzeSLMB11}).

Since the late 1980s, the WL algorithm has played an important role in descriptive complexity theory.
Indeed, it was independently introduced in the context of descriptive complexity by Immerman and Lander~\cite{ImmermanL90}.
The main reason for this is that $k$-WL can be seen as an equivalence test for the logic $\LCk{k+1}$, the $(k+1)$-variable fragment of first-order logic with counting quantifiers $\exists^{\ge n}x$.
Through this connection, the algorithm has turned out to be important for studying the expressiveness of fixed-point logic with counting \cite{CaiFI92} and, more generally, for the quest for a logic capturing polynomial time \cite{Grohe08,Otto17}.

In this work, we study the iteration number of $k$-WL, i.e., the number of iterations the algorithm requires until stabilization.
Since the number of color classes increases in each iteration, the $k$-WL algorithm trivially requires at most $n^{k}-1$ rounds to stabilize.
For $k = 1$, Kiefer and McKay \cite{KieferM20} proved that this trivial bound is optimal by providing several infinite families of graphs $G$ for which $1$-WL requires $n-1$ iterations to stabilize (where $n$ denotes the number of vertices of $G$).
In contrast, for $k = 2$, Lichter, Ponomarenko and Schweitzer \cite{LichterPS19} (improving an earlier upper bound by Kiefer and Schweitzer~\cite{KieferS19}) obtained an upper bound of $O(n \log n)$ on the iteration number of $2$-WL.
Beyond that, no improved upper bounds are known for $k \geq 3$.
As our first main contribution, we obtain non-trivial bounds on the iteration number of $k$-WL for all $k \geq 2$.

\begin{theorem}
 \label{thm:wl-round-upper-bound}
 For all $k\ge 2$, the $k$-dimensional Weisfeiler-Leman algorithm stabilizes after $O(kn^{k-1}\log n)$ refinement rounds on all relational structures $\FA$ of arity at most $k$ where $n$ denotes the size of the universe.
\end{theorem}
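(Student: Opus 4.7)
I would prove Theorem~\ref{thm:wl-round-upper-bound} by an amortised analysis of $k$-WL refinement, lifting to higher dimension the Hopcroft-style argument Lichter, Ponomarenko and Schweitzer used to bound the $2$-WL iteration number by $O(n\log n)$. The starting point is a bookkeeping reformulation: writing $\vec u[j{:}w]$ for the $k$-tuple obtained from a $(k-1)$-tuple $\vec u$ by inserting $w$ at position $j$, define, for each coordinate $j\in [k]$ and each $(k-1)$-tuple $\vec u$, the vertex partition $P^{(i)}_{j,\vec u}$ of the universe $V$ by $w\sim w'$ iff $\WLit{k}{i}{\FA}(\vec u[j{:}w]) = \WLit{k}{i}{\FA}(\vec u[j{:}w'])$. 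There are $kn^{k-1}$ such partitions, and the $k$-WL update rule is captured entirely by them: $\WLit{k}{i+1}{\FA}$ strictly refines $\WLit{k}{i}{\FA}$ iff some $P^{(i)}_{j,\vec u}$ strictly refines $P^{(i-1)}_{j,\vec u}$.

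The second ingredient is a Hopcroft-style token budget. I would call round $i+1$ a \emph{small-side event} for a triple $(w,\vec u,j) \in V\times V^{k-1}\times[k]$ if the class of $w$ in $P^{(i+1)}_{j,\vec u}$ has at most half the size of its class in $P^{(i)}_{j,\vec u}$. A standard amortisation gives at most $\log_2 n$ small-side events per triple, and hence a global budget of $O(kn^k\log n)$ small-side events over the entire refinement.

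The third and crucial step is the conversion from events to rounds. To obtain the claimed $O(kn^{k-1}\log n)$ round bound it suffices that every round strictly refining the coloring spends $\Omega(n)$ units of the budget, that is, forces $\Omega(n)$ concurrent small-side splits across the $kn^{k-1}$ partitions $P^{(i)}_{j,\vec u}$. Dividing the global event budget by this per-round cost yields the bound.

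\textbf{Main obstacle.} I expect Step~3 to be the hardest: one must rule out ``lazy'' refinement rounds that split only a handful of partitions at a time. This calls for a structural lemma saying that the intermediate colouring $\WLit{k}{i}{\FA}$ carries enough symmetry that a single refinement step necessarily propagates into $\Omega(n)$ concurrent splits; for $k=2$ this ``parallelism'' is supplied by the coherent-configuration structure of the stable $2$-WL coloring, and generalising the corresponding lemma to the $k$-ary relational algebra of $\WLit{k}{i}{\FA}$ would be the bulk of the technical work. A natural fallback, should a direct parallelism argument prove elusive, is an induction on $k$: reduce the refinement of a slice $\{\vec v\in V^k : v_1=u\}$ of $k$-WL to $(k-1)$-WL on the $u$-individualised structure, establish a recursion of the form $T(k,n) \le n\cdot T(k-1,n)+O(n^{k-1})$ across the $n$ choices of $u$, and unroll using the base case $T(2,n)=O(n\log n)$.
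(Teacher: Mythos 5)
Your proposal takes a genuinely different route, and it contains a gap that you yourself flag but that I believe is not merely a technical hurdle---it is the entire content of the theorem, and the direction you are pushing is one the paper gives reason to distrust.

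First, on attribution: the $O(n\log n)$ bound of Lichter, Ponomarenko and Schweitzer is \emph{not} a Hopcroft-style amortisation; it is an algebraic argument. Hopcroft-style reasoning in this area is due to Kiefer and Schweitzer, and for $k=2$ it only yields $O(n^2/\log n)$, an exponent worse than what you want to lift. The paper's proof of Theorem~\ref{thm:wl-round-upper-bound} is the generalisation of the LPS algebraic argument, not of the amortisation. Concretely, the paper equips $\Complex^{V^k}$ with a multiplication under which it embeds into $\FullMatrixAlg_{n^{k-1}}(\Complex)$, associates with each WL-coloring $\chi_t$ the $*$-subalgebra $\Alg^{(t)}$ generated by the indicator tensors of its color classes, proves these algebras semisimple, invokes a bound of $2n^{k-1}$ on the length of strictly increasing chains of semisimple subalgebras (Theorem~\ref{thm:semisimple-sequence-length} and Corollary~\ref{cor:semisimple-length}), and finally shows via a quantifier-rank argument (Lemma~\ref{la:multiplication-in-logic}) that a strict inclusion among the $\Alg^{(t)}$ is forced at least every $\lceil k\log n\rceil + 1$ rounds. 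No parallelism-per-round claim appears anywhere in that argument.

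Second, on the gap itself: your Step~3 requires that every strictly refining round forces $\Omega(n)$ concurrent small-side splits across the $kn^{k-1}$ vertex partitions $P^{(i)}_{j,\vec u}$. You offer no argument for this, only the statement that it ought to follow from some symmetry of the intermediate colouring, and the fact that something similar is true for stable coherent configurations. But intermediate $k$-WL colourings need not enjoy the structure of stable configurations, and the paper in fact goes out of its way in Section~\ref{sec:long-sequences} to show that parallelisation arguments of exactly this shape do not give you more than they already give: there exist chains of $k$-stable, shufflable, equality-compatible colourings of length $\Omega((n/2k)^{k-1})$ in which each successive colouring can split a single class. That result does not strictly falsify your claim (a single-class split in some partition $P_{j,\vec u}$ could conceivably propagate to $\Omega(n)$ fellow splits elsewhere), but it removes the obvious source of the parallelism you are banking on, and it is why the authors chose an algebraic route instead. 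Without a proof of Step~3 the proposal proves nothing; with it, you would have re-derived, by completely different means, one of the main lemmas the paper devotes its algebraic machinery to.

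Third, the fallback is not a repair. Individualising a vertex $u$ does not reduce $k$-WL on $\FA$ to $(k-1)$-WL on $\FA_u$ in a way that is compatible with counting rounds: $k$-WL refines colours of $k$-tuples using information that crosses slices, so the slices are not independent subproblems, and the posited recursion $T(k,n)\le n\cdot T(k-1,n)+O(n^{k-1})$ has no justification given here. Even if granted, it unrolls to roughly $n^{k-2}T(2,n)+kn^{k-1}=O(n^{k-1}\log n + kn^{k-1})$, which is not obviously what you need in all parameter regimes, but that is a secondary concern next to the missing proof of the recursion itself.

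In short, the bookkeeping in Steps~1 and~2 is fine, but the proposal front-loads the actual content of the theorem into an unproven (and, given Section~\ref{sec:long-sequences}, quite possibly false) parallelism lemma. The paper avoids this by a representation-theoretic argument that bounds the number of ``essentially new'' rounds by $2n^{k-1}$ via a semisimple subalgebra chain and charges a $\lceil k\log n\rceil$ overhead to the rounds in between; no per-round parallelism is ever needed.
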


For the proof, we extend the algebraic arguments from \cite{LichterPS19}.
Consider a structure $\FA$ with vertex set $V$ of size $n$ and let $\chi_0,\dots,\chi_\ell \colon V^{k} \rightarrow C$ denote the sequence of colorings computed by $k$-WL, i.e., $\chi_i$ is the coloring computed in the $i$-th iteration.
For $k = 2$, Lichter et al.\ \cite{LichterPS19} associate with each coloring $\chi_i$ a matrix algebra as follows.
For each color $c$ in the image of $\chi_i$, let $M_{i,c}$ denote the $V \times V$ indicator matrix that sets $M_{i,c}(v_1,v_2) \coloneqq 1$ if $\chi_i(v_1,v_2) = c$, and $M_{i,c}(v_1,v_2) \coloneqq 0$ otherwise.
The matrices $M_{i,c}$, where $c$ ranges over all colors in the image of $\chi_i$, generate a matrix algebra $\Alg^{(i)}$ of $V \times V$ matrices over the complex numbers using standard matrix multiplication.
Using representation-theoretic arguments, it is possible to bound the length of the sequence of matrix algebras generated this way which eventually leads to the upper bound of $O(n \log n)$.

The proof of Theorem \ref{thm:wl-round-upper-bound} follows a similar strategy.
For each color in the image of $\chi_i$, we obtain an indicator tensor $M_{i,c} \in \Complex^{V^{k}}$.
Now, the key challenge in generalizing the arguments of \cite{LichterPS19} is to define a suitable multiplication of those tensors that can be ``simulated'' by a single round of $k$-WL.
Given such a multiplication, we then show that the generated algebra $\Alg^{(i)}$ is isomorphic to a subalgebra of the $n^{k-1} \times n^{k-1}$ full matrix algebra (over the complex numbers) which then again allows us to use algebraic arguments to obtain the desired upper bound.

Our arguments actually prove a more general result.
Let $\chi_0,\dots,\chi_\ell \colon V^{k} \rightarrow C$ be a sequence of finer and finer colorings (i.e., the partition into color classes of $\chi_{i}$ refines the partition into color classes of $\chi_{i-1}$ for all $i \in [\ell]$) where in each step the coloring is refined at least as much as by a single iteration of $k$-WL.
Then the length of the sequence is bounded by $\ell = O(kn^{k-1}\log n)$.
As a lower bound to our arguments, we show that, in this more general setting, our upper bound is tight up to a factor $O_k(\log n)$ (the $O_k(\cdot)$-notation hides constant factors in $k$).
Here, the key insight is that we can find a sequence of finer and finer colorings $\chi_0,\dots,\chi_\ell \colon V^{k} \rightarrow C$ of length $\Omega_k(n^{k-1})$ that are all stable with respect to $k$-WL.
As such, it provides a lower bound in the more general setting explained above (but it does not give any lower bounds on the iteration number of $k$-WL) and implies that new ideas are likely required to obtain further improvements on the upper bounds of the iteration number of $k$-WL (see Section \ref{sec:long-sequences} for more details).

Looking for lower bounds on the iteration number of $k$-WL, F{\"{u}}rer \cite{Furer01} provided, for every $k \geq 2$, a family of graphs on which $k$-WL requires at least $\Omega(n)$ many iterations until stabilization.
For $k$ sufficiently large, this result was strengthened by Berkholz and Nordstr{\"{o}}m \cite{BerkholzN16} who constructed $k$-ary relational structures $\FA$ of size $n$ on which $k$-WL requires at least $n^{\Omega(k/ \log k)}$ many iterations.
Answering an open question from \cite{BerkholzN16}, our second main contribution is an improved lower bound that gets rid of the $1/\log k$ factor in the exponent.
Actually, we prove the following even stronger result.

\begin{theorem}
 \label{thm:wl-round-lower-bound}
 There are absolute constants $k_0 \in \NN$ and $\alpha,\varepsilon > 0$ such that for every $d \geq k \geq k_0$ and every $n \geq \alpha d^8 k^6$
 there is a is pair of $k$-ary relational structures $\FA$ and $\FB$ of size $|V(\FA)| = |V(\FB)| = n$ that are distinguished by $k$-WL, but $d$-WL does not distinguish $\FA$ and $\FB$ after $n^{\varepsilon k}$ refinement rounds.
\end{theorem}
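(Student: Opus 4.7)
The plan is to construct the hard pair $(\FA,\FB)$ via a Cai--F\"urer--Immerman (CFI) style construction applied to a carefully chosen base graph, and to bound the iteration count of $d$-WL through Hella's bijective $(d{+}1)$-pebble counting game, which characterizes $\LCk{d+1}$-equivalence and hence the number of rounds needed by $d$-WL to distinguish two structures.

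First I would pick a sparse base graph $G$ on $\Theta(n)$ vertices with strong expansion properties --- either a random bounded-degree graph or an explicit high-girth expander --- and derive two $k$-ary relational structures $\FA, \FB$ from it by spreading the CFI gadgets over $k$-ary relations so that each edge of $G$ is witnessed by exactly one such relation; $\FA$ and $\FB$ differ by a single parity twist that cannot be localized without inspecting a large region of $G$. The numerical condition $n \geq \alpha d^8 k^6$ reflects how much room is needed to embed all gadgets while preserving enough expansion for the game analysis. That $k$-WL distinguishes $\FA$ and $\FB$ in some finite number of rounds (the exact count is immaterial here) follows from the standard CFI argument: once the dimension reaches the treewidth of the base graph, $k$-WL has enough simultaneous pebbles to detect that the XOR system encoded by $\FB$ is globally unsatisfiable.

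The heart of the argument is the round lower bound for $d$-WL, for which I would exhibit a Duplicator strategy in the bijective $(d{+}1)$-pebble counting game on $(\FA,\FB)$ that survives $n^{\varepsilon k}$ rounds. After $t$ rounds, at most $O(dt)$ vertices of $G$ have been touched by a pebble; as long as this touched set together with its boundary has size below $n^{\varepsilon k}$, the expansion of $G$ allows Duplicator to answer each bijection request of Spoiler by locally permuting unrevealed gadgets, maintaining a partial isomorphism on the pebbled substructure. This step is also the main obstacle: the previous bound $n^{\Omega(k/\log k)}$ of Berkholz--Nordstr\"om went through a reduction from resolution width to pebble-game dimension that loses a $\log k$ factor. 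Avoiding the loss requires handling the pebble game directly on CFI instances and carefully tracking how Duplicator's local permutation at one round propagates to later rounds; here the isoperimetric profile of $G$ enters crucially, since we must bound the probability (or combinatorial measure) that a freshly requested bijection can be extended consistently over the already committed pebbles. Finally, the tradeoff $d \geq k$ in the statement is accommodated because each additional pebble only burns a constant-size expander neighborhood, which is absorbed into the $d^8$ term in the size condition, so the Duplicator strategy degrades only polynomially in $d$ rather than catastrophically.
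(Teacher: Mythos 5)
Your high-level intuition — CFI-style XOR encoding over an expanding graph, pebble-game lower bound via a Duplicator strategy — is in the right family of ideas, but the proposal has two gaps that prevent it from working as stated, and both are precisely the points where the paper does something substantially different.

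\textbf{Wrong base graph.} You propose a global expander (or high-girth expander) as the base. A global expander has diameter $O(\log n)$, so any traversal-based Duplicator strategy can only survive $O(\log n)$ rounds before Spoiler has ``reached'' the twist: the ability to make progress in the pebble game degrades with distance in the base graph, and in an expander everything is close. This is exactly the failure mode F\"urer observed for his instances (distinguished by $O(\log n)$-round $(3k_0)$-WL). The paper instead uses an $(\ell\times m)$-\emph{layered} graph: globally a long grid-like object of $\ell$ layers (so that Falsifier must walk through $\Omega(\ell/k)$ layers to derive a contradiction), but a single-neighbor expander \emph{between consecutive layers} (so that the $k$-closure of the XOR constraints cannot ``skip ahead'' by more than $O(k)$ layers — Lemma~\ref{la:single-neighbor-layered-small-sum-closed}). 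Your proposal also lacks an answer to why $k$-WL (for small $k$) distinguishes the structures at all; a global expander has treewidth $\Theta(n)$, so the treewidth argument you invoke would require dimension $\Theta(n)$, not $k$. In the paper, the constraints have constant arity $\ell_\lo$ and a short chain from the all-zero boundary to the single ``$=1$'' constraint, which is what lets Falsifier win with constantly many pebbles (Lemma~\ref{la:falsifier-win}).

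\textbf{Missing hardness condensation.} Even with the right base graph, a direct construction on $n$ variables only yields a round lower bound that is linear in $n$, i.e.\ $\Omega(n/d^2)$, not $n^{\varepsilon k}$. Nowhere does your proposal explain how the polynomial-in-$n$ exponent $\varepsilon k$ arises. The paper obtains it by invoking the hardness condensation of Berkholz--Nordstr\"om (Lemma~\ref{la:condensation}), which shrinks the universe from $m$ to roughly $m^{3/\Delta}$ variables while preserving the round lower bound up to a $1/(2\ell_\hi)$ factor, at the cost of multiplying the arity by $\Delta$. This is also why the final structures must be $k$-ary. Your claim that ``avoiding the loss requires handling the pebble game directly'' has the diagnosis backwards: the paper \emph{keeps} the condensation step and improves the \emph{input} to it — a linear round bound against $\ell_\hi$-WL rather than the $\Omega(m^{1/(1+\log \ell_\hi)})$ bound Berkholz--Nordstr\"om fed into the same lemma. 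Without some replacement for condensation your plan would only establish $\Omega(n/d^2)$, which is strictly weaker than the theorem.
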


We note that, as in the work of Berkholz and Nordstr{\"{o}}m \cite{BerkholzN16}, the structures we need to prove this theorem are \emph{$k$-ary}, that is, have relations of arity $k$.

The structures $\FA$ and $\FB$ provided by the theorem can be distinguished by $k$-WL which trivially requires at most $n^{k}-1$ rounds.
The theorem states that, even if we are allowed to increase the dimension of the Weisfeiler-Leman algorithm to $d$, the structures can still not be distinguished unless $d$-WL runs for at least $n^{\varepsilon k}$ rounds.
This result stands in strong contrast to several existing results for restricted classes of graphs.
For example, $k$-WL distinguishes between all non-isomorphic pairs of graphs of tree-width at most $k$ \cite{KieferN22}, and increasing the dimension to $4k+3$ guarantees that $O(\log n)$ iterations suffices to distinguish between all non-isomorphic pairs of graphs of tree-width at most $k$ \cite{GroheV06}.
Similar results are known for planar graphs \cite{GroheK21,Verbitsky07}.
The above theorem rules out such results for general relational structures even if we only wish to improve the iteration number to, for example, linear in $n$.

By setting $d = k$, we obtain the following corollary which shows that the upper bound in Theorem \ref{thm:wl-round-upper-bound} is optimal up to a constant factor (that does not depend on $k$) in the exponent.

\begin{corollary}
 There are absolute constants $k_0 \in \NN$ and $\alpha,\varepsilon > 0$ such that for every $k \geq k_0$ and every $n \geq \alpha k^{14}$
 there is a $k$-ary structure $\FA$ of size $|V(\FA)| = n$ such that the $k$-dimensional Weisfeiler-Leman algorithm does not stabilize within $n^{\varepsilon k}$ refinement rounds on $\FA$.
\end{corollary}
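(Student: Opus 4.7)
The plan is to derive the corollary directly from Theorem~\ref{thm:wl-round-lower-bound} by specializing $d=k$. With this choice the numerical constraint $n \ge \alpha d^8 k^6$ collapses to $n \ge \alpha k^{14}$, which matches the hypothesis of the corollary verbatim. For each such $n$ and $k \ge k_0$, the theorem supplies a pair of $k$-ary structures $\FA$ and $\FB$ of size $n$ with the properties that (i) $k$-WL eventually distinguishes $\FA$ from $\FB$, and (ii) $k$-WL fails to distinguish $\FA$ from $\FB$ within the first $n^{\varepsilon k}$ refinement rounds.

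The only additional ingredient is the elementary monotonicity of $k$-WL: once the coloring stops refining on a structure $\FC$ after some number of rounds $T$, every subsequent iteration returns exactly the same coloring, and hence every invariant that $k$-WL can ever extract from $\FC$ is already present at round $T$. From this I argue by contradiction: suppose $k$-WL stabilized on \emph{both} $\FA$ and $\FB$ within $n^{\varepsilon k}$ rounds. Then the colorings produced after $n^{\varepsilon k}$ rounds would coincide with the respective stable colorings, so the distinction guaranteed by (i) would already be visible at round $n^{\varepsilon k}$, contradicting (ii). Consequently at least one of $\FA, \FB$ has not yet stabilized after $n^{\varepsilon k}$ rounds; taking this structure as the witness in the corollary (and relabelling it $\FA$) completes the argument.

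There is essentially no technical obstacle here --- all of the work is done by Theorem~\ref{thm:wl-round-lower-bound}. The only thing to be careful about is the logical step from ``$\FA,\FB$ are not distinguished within $n^{\varepsilon k}$ rounds'' to ``one of $\FA,\FB$ has not stabilized within $n^{\varepsilon k}$ rounds'', which is exactly the monotonicity observation above; the absolute constants $k_0, \alpha, \varepsilon$ are inherited unchanged from Theorem~\ref{thm:wl-round-lower-bound}.
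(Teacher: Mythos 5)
Your proof is correct and follows exactly the paper's intended route: the corollary is introduced in the paper with the remark that it is obtained from Theorem~\ref{thm:wl-round-lower-bound} by setting $d=k$, and the auxiliary ``monotonicity'' step you spell out is precisely the observation the paper records in Section~\ref{sec:preliminaries} (that if $k$-WL distinguishes $\FA$ and $\FB$ and stabilizes after $r_\infty$ rounds on $\FA$, it distinguishes them after at most $r_\infty+1$ rounds). The only cosmetic issue is an off-by-one --- strictly speaking the contradiction first appears at round $r_\infty+1$ rather than at round $n^{\varepsilon k}$ as you write --- but this is harmlessly absorbed into the choice of the constant $\varepsilon$.
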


For the proof of Theorem \ref{thm:wl-round-lower-bound}, our main technical contribution is to show that there is a $k_0 \in \NN$ such that, for all $d \geq k_0$, there are structures $\FA$ and $\FB$ of size $n$ that are distinguished by $k_0$-WL, but $d$-WL still requires $\Omega(n/d^{2})$ many iterations to distinguish $\FA$ and $\FB$.
Afterwards, we obtain Theorem \ref{thm:wl-round-lower-bound} by using a known hardness condensation \cite{BerkholzN16} that reduces the size of the structures while roughly preserving the number of iterations required to distinguish them.

Let us point out that F{\"{u}}rer \cite{Furer01} constructed graphs $G$ and $H$ which are distinguished by $k_0$-WL after $\Omega(n)$ many rounds.
However, as F{\"{u}}rer also shows, his instances are distinguished by $(3k_0)$-WL after only $O(\log n)$ many rounds which means that we cannot use them for our purposes.
Berkholz and Nordstr{\"{o}}m \cite{BerkholzN16} provided, for all $d \geq 2$, structures $\FA$ and $\FB$ of size $n$ that are distinguished by $2$-WL, but $d$-WL still requires $\Omega(n^{1/(1+\log d)})$ many rounds to distinguish them.
In combination with the hardness condensation, this leads to the previous lower bound of $n^{\Omega(k/ \log k)}$.

For the construction of our structures, we introduce the notion of \emph{layered expanders} whose global structure is similar to a $(k \times n)$-grid, but that locally (when looking at $O(k)$ consecutive columns) behave like an expander graph.
We then obtain propositional XOR-formulas from layered expanders which can be transformed into relational structures which satisfy the desired properties.

\paragraph{Connection to Logics.}

As pointed out above, $k$-WL is an equivalence test for the logic $\LCk{k+1}$.
That is, $k$-WL distinguishes between two structures $\FA$ and $\FB$ if and only if there is a sentence $\varphi \in \LCk{k+1}$ such that $\FA \models \varphi$ and $\FB \not\models \varphi$.
Additionally, the minimal quantifier rank of such a sentence equals (up to an additive error of at most $k$) the number of iterations $k$-WL requires to distinguish between $\FA$ and $\FB$.
With this in mind, Theorem \ref{thm:wl-round-upper-bound} can be reformulated as follows.

\begin{corollary}
 Let $k \geq 3$.
 Let $\FA$ and $\FB$ be two relational structures of arity at most $k$ that can be distinguished by a sentence in $\LCk{k}$.
 Then there is a sentence $\varphi \in \LCk{k}$ of quantifier rank at most $q = O(kn^{k-2}\log n)$ such that $\FA \models \varphi$ and $\FB \not\models \varphi$.
\end{corollary}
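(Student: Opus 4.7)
The plan is to reduce the corollary to Theorem~\ref{thm:wl-round-upper-bound} via the correspondence between $k$-WL and $\LCk{k+1}$ recalled just above the statement: two structures are distinguishable in $\LCk{k+1}$ iff they are distinguishable by $k$-WL, and the minimal quantifier rank of a separating sentence matches the iteration count of $k$-WL up to an additive error of at most $k$. I will apply this with the indices shifted by one, so that $(k-1)$-WL is paired with $\LCk{k}$.

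By hypothesis, $\FA$ and $\FB$ can be separated by some $\LCk{k}$-sentence, hence, by the correspondence, $(k-1)$-WL distinguishes them. Since $k \geq 3$ gives $k-1 \geq 2$, Theorem~\ref{thm:wl-round-upper-bound} applies at dimension $k-1$, yielding a stabilization bound of
\[
O\bigl((k-1)\, n^{(k-1)-1}\log n\bigr) \;=\; O\bigl(k\, n^{k-2}\log n\bigr)
\]
refinement rounds. In particular, $(k-1)$-WL already produces different color-count vectors on $\FA$ and $\FB$ within that many rounds. Translating back through the correspondence, I obtain a distinguishing sentence $\varphi \in \LCk{k}$ whose quantifier rank is bounded by the iteration count plus the additive error $k-1$, which is still $O(k n^{k-2}\log n)$.

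The one subtlety I expect to have to handle is the arity hypothesis: Theorem~\ref{thm:wl-round-upper-bound} is phrased for $k$-WL on structures of arity at most $k$, so applied at dimension $k-1$ it nominally wants arity at most $k-1$, whereas the corollary allows arity up to $k$. The cleanest way around this is to invoke the strengthening mentioned immediately after Theorem~\ref{thm:wl-round-upper-bound}, which bounds the length of any strictly refining sequence of colorings on $V^{k-1}$ that refines at least as much as one step of $(k-1)$-WL by $O(k n^{k-2}\log n)$. This bound depends only on the dimension and on $|V|$, not on the arity of the underlying structure (and therefore not on how the arity-$k$ relations of $\FA$ and $\FB$ enter the initial coloring of $(k-1)$-tuples). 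Hence the iteration bound transfers to $(k-1)$-WL on arity-$\leq k$ structures, and the proof closes via the correspondence.
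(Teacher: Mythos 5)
Your reduction---shift $k$ to $k-1$, apply Theorem~\ref{thm:wl-round-upper-bound} at dimension $k-1$, and translate back through Corollary~\ref{cor:wl-logic-distinguish-structures}---is exactly what the paper has in mind (it gives no explicit proof and just calls the corollary a ``reformulation'' of Theorem~\ref{thm:wl-round-upper-bound}). You also rightly sense a tension in the arity hypothesis. But your patch aims at the wrong ingredient. Theorem~\ref{thm:upper-bound} indeed bounds the number of $(k-1)$-WL refinement rounds on arity-$\le k$ structures, since it is phrased purely in terms of coloring sequences on $V^{k-1}$, independent of how the initial coloring arose; that side was never in danger. The arity restriction that genuinely bites lives in the WL-logic correspondence itself, Theorem~\ref{thm:wl-logic} and Corollary~\ref{cor:wl-logic-distinguish-structures}: instantiated at dimension $k-1$ they require arity at most $k-1$, and this is not cosmetic.

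Concretely, your very first step---``by the correspondence, $(k-1)$-WL distinguishes them''---can fail on arity-$k$ structures. Take $V=[k]$, one $k$-ary relation symbol $R$, $R^{\FA}=\emptyset$, and $R^{\FB}=\{(1,\ldots,k)\}$. The sentence $\exists x_1\cdots\exists x_k\, R(x_1,\ldots,x_k)$ lies in $\LCk{k}$ and separates $\FA$ from $\FB$, yet $(k-1)$-WL cannot see the difference: the atomic type of a $(k-1)$-tuple, and hence every $(k-1)$-WL color, only inspects $R$ on sets of at most $k-1$ elements, where both relations agree. (This does not contradict the corollary's conclusion---a quantifier-rank-$k$ sentence already separates them here---but it does refute your claimed reduction step.) The corollary is most naturally read with ``arity at most $k-1$'', matching the $(k-1)$-WL $/$ $\LCk{k}$ regime used elsewhere in the paper (compare Theorem~\ref{thm:trading-upper-bound-intro}, which pairs ``arity at most $k$'' with $\LCk{k+1}$). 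Under that reading your argument closes cleanly via Corollary~\ref{cor:wl-logic-distinguish-structures} and no patch is needed; as written for arity up to $k$, the forward direction of the correspondence is the genuine gap, and your invocation of Theorem~\ref{thm:upper-bound} does not repair it.
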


Similarly, we can reformulate Theorem \ref{thm:wl-round-lower-bound}, but here it turns out that we can obtain an even stronger result since the structures constructed in the theorem can already be distinguished in the logic $\LLk{k+1}$, the $(k+1)$-variable fragment of first-order logic \emph{without} counting quantifiers.

\begin{theorem}
 \label{thm:quantifier-rank-lower-bound}
 There are absolute constants $k_0 \in \NN$ and $\alpha,\varepsilon > 0$ such that for every $d \geq k \geq k_0$ and every $n \geq \alpha d^8 k^6$
 there is a pair of $k$-ary structures $\FA$ and $\FB$ of size $|V(\FA)| = |V(\FB)| = n$ that can be distinguished by a sentence in $k$-variable first-order logic $\LLk{k}$, but satisfy the same sentences in $\LLk{d}$ and $\LCk{d}$ up to quantifier rank $n^{\varepsilon k}$.
\end{theorem}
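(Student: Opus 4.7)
The plan is to derive Theorem~\ref{thm:quantifier-rank-lower-bound} from Theorem~\ref{thm:wl-round-lower-bound} together with a more refined inspection of the same pair of structures $\FA,\FB$. The theorem makes two claims, and I treat them separately: first, indistinguishability in both $\LLk{d}$ and $\LCk{d}$ up to quantifier rank $n^{\varepsilon k}$; second, distinguishability by some sentence in $\LLk{k}$.

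For the indistinguishability, I invoke the standard correspondence between $d$-WL and $\LCk{d+1}$, namely that $d$-WL separates two structures in $r$ rounds if and only if they are distinguished by an $\LCk{d+1}$-sentence of quantifier rank at most $r + O(d)$. Applying this to Theorem~\ref{thm:wl-round-lower-bound} with parameters reindexed so that the resulting logic is $\LCk{d}$ (and slightly shrinking $\varepsilon$ to absorb the $O(d)$ additive error into $n^{\varepsilon k}$, which is harmless since the hypothesis $n \ge \alpha d^8 k^6$ forces $n^{\varepsilon k} \gg d$), one obtains indistinguishability of $\FA,\FB$ in $\LCk{d}$ up to quantifier rank $n^{\varepsilon k}$. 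Because $\LLk{d}$ is a syntactic fragment of $\LCk{d}$, indistinguishability in $\LLk{d}$ up to the same quantifier rank follows immediately.

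For the distinguishability in $\LLk{k}$, a sharper analysis of the construction is required: Theorem~\ref{thm:wl-round-lower-bound} guarantees only that $k$-WL separates $\FA$ from $\FB$, which by the above correspondence yields an $\LCk{k+1}$-sentence, whereas here I want $\LLk{k}$, i.e., one fewer variable and no counting. The structures $\FA,\FB$ are obtained from a CSP-style encoding of propositional XOR-formulas built from layered expanders: intuitively, $\FB$ arises from $\FA$ by flipping a parity in a single well-chosen gadget, so that the associated XOR-system becomes (locally) unsatisfiable where it was satisfiable before. Such a flip is witnessed by the existence (respectively non-existence) of a $k$-tuple realizing one specific atomic pattern in the $k$-ary signature. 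The corresponding existential sentence uses only $k$ variables (since each relation has arity $k$, a single atomic formula can refer to a complete constraint without introducing fresh variables) and uses no counting quantifier (since a parity conflict is a Boolean property detectable by plain existential quantification), yielding a sentence in $\LLk{k}$ that distinguishes $\FA$ from $\FB$.

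The main obstacle is verifying that the structural gap between $\FA$ and $\FB$ can really be captured by $k$-variable non-counting first-order quantification, without needing an auxiliary $(k+1)$-st variable for bookkeeping or a counting quantifier to aggregate over neighborhoods. To resolve it, I would inspect the concrete $k$-ary encoding of XOR-constraints used in the proof of Theorem~\ref{thm:wl-round-lower-bound} and check two points: (i) a single $k$-tuple in the distinguishing gadget already reveals which parity is present, so no additional variable is needed; and (ii) the distinguishing criterion is satisfaction of a Boolean relation symbol rather than a count of neighbors with a given color, so no counting quantifier is needed. Both properties are natural features of any Boolean CSP-style encoding into $k$-ary relations, and once they are confirmed, the $\LLk{k}$-sentence sketched above provides the required distinguishing formula, completing the proof.
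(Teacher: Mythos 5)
Your argument for the $\LCk{d}$ (and hence $\LLk{d}$) indistinguishability is sound in spirit, though it takes a detour through Theorem~\ref{thm:wl-round-lower-bound} and the WL--logic correspondence (Corollary~\ref{cor:wl-logic-distinguish-structures}), which forces you to fiddle with an off-by-one in the variable count and an additive $O(d)$ in the quantifier rank. The paper avoids this entirely by going straight from Corollary~\ref{cor:maximal-rounds} and Lemma~\ref{la:translate-bounds}, which already speak in terms of $\LLkq{d}{r}$ and $\LCkq{d}{r}$.

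The distinguishability part, however, contains a genuine gap. You claim that ``a single $k$-tuple realizing one specific atomic pattern'' already separates $\FA=\FA(\CC)$ from $\FB=\FB(\CC)$, so a one-block existential $\LLk{k}$-sentence does the job. This is false, and it is precisely what the construction is designed to prevent. The two structures agree on all atomic types of $k$-tuples (and indeed with matching multiplicities): the only constraint with right-hand side $1$ is the unary $(\{x_\ell\},1)$, and the asymmetry it introduces can always be locally absorbed by flipping the second coordinate of the corresponding element, exactly because there is no short parity cycle in the layered expander. More structurally, if some atomic type appeared in $\FA$ but not $\FB$, Falsifier would win the $k$-pebble game in a bounded number of rounds, directly contradicting Lemma~\ref{la:verifier-win-few-rounds}/Corollary~\ref{cor:maximal-rounds} (Verifier survives $n^{\varepsilon k}$ rounds even with $k$ pebbles, since $d\ge k$). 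So the distinguishing $\LLk{k}$-sentence is \emph{not} of small quantifier rank; it must nest quantifiers to encode the long refutation chain of the XOR-system.

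The missing ingredient is Lemma~\ref{la:translate-bounds} (Berkholz--Nordstr\"om): for structures $\FA(\CC),\FB(\CC)$ arising from XOR-constraint sets, the $r$-round $k$-pebble game characterizes quantifier-rank-$r$ equivalence in $\LLk{k}$ and $\LCk{k}$ \emph{simultaneously}. This is exactly what lets one conclude both halves of the theorem at once: Falsifier winning the $k$-pebble game in some finite number of rounds gives an $\LLk{k}$ distinguishing sentence, and Verifier surviving $n^{\varepsilon k}$ rounds of the $d$-pebble game gives indistinguishability in $\LLk{d}$ and $\LCk{d}$ up to that quantifier rank. The paper's proof is just ``Corollary~\ref{cor:maximal-rounds} plus Lemma~\ref{la:translate-bounds}.'' Your intuition about why $\LLk{k}$ (one variable fewer than $\LCk{k+1}$, and no counting) suffices is where the argument needs to be repaired; the answer is not a local observation about the gadget but this special equivalence between the pebble game and the non-counting logic on XOR-structures.
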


Hence, we obtain lower bounds for the quantifier rank not only for the logic $\LCk{k}$, but also for the logic $\LLk{k}$.
We stress that the lower bounds on the quantifier rank remain valid even if we arbitrarily increase the number of variables to any number $d$ (as long as $d$ is sufficiently far away from the size of the structures).
In other words, even if we are allowed to increase the number of variables, we cannot in general hope for significant improvements on the quantifier rank required to distinguish between two structures.

Having said that, our final result shows that at least some improvements on the upper bound are possible if we are allowed to increase the number of variables by roughly a factor of $3/2$.

\begin{theorem}
 \label{thm:trading-upper-bound-intro}
 Let $k \geq 2$.
 Let $\FA$ and $\FB$ be two relational structures of arity at most $k$ such that $n \coloneqq |V(\FA)| = |V(\FB)|$.
 Also suppose there is a sentence $\varphi \in \LCk{k+1}$ such that $\FA \models \varphi$ and $\FB \not\models \varphi$.
 Let $d \coloneqq \lceil\frac{3(k+1)}{2}\rceil$.
 Then there is a sentence $\psi \in \LCkq{d}{q}$ of quantifier rank $q = O(k^2 \cdot n^{\lfloor k/2\rfloor + 1} \log n)$ such that $\FA \models \psi$ and $\FB \not\models \psi$.
\end{theorem}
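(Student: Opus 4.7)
The plan is to show that $d$-WL can maintain and refine a coloring on $(\lfloor k/2\rfloor + 1)$-tuples which mirrors the $k$-WL coloring of $k$-tuples, at the cost of a bounded per-round simulation overhead. The idea is to split each $k$-tuple $\vec v$ into a head $\vec a$ of length $\lceil k/2\rceil$ and a tail $\vec b$ of length $\lfloor k/2\rfloor$, and to encode enough of the $\LCk{k+1}$-type of $\vec v = (\vec a, \vec b)$ in a sufficiently refined coloring of $(\lfloor k/2\rfloor + 1)$-tuples of the form $(\vec b, w)$, from which the full type is recoverable by a short $\LCk{d}$-formula that counting-quantifies over $\vec a$. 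The choice $d = \lceil 3(k+1)/2\rceil$ is calibrated so that $d - k = \lceil (k+3)/2\rceil$ spare pebbles remain available to hold the head, the tail, an auxiliary vertex, and the scratch variables needed for the aggregation.

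The technical heart of the argument is to show that a single iteration of $k$-WL on $k$-tuples can be captured by $O(kn)$ iterations of $d$-WL. There are $k$ coordinates whose replacement we need to aggregate over, and for each coordinate we sweep over the $n$ candidate replacement vertices using the spare pebbles, folding the resulting counts into the coloring of the $(\lfloor k/2\rfloor + 1)$-tuples. I would arrange this so that each block of $O(kn)$ such rounds produces a coloring refining at least as fast as one round of the canonical refinement on $V^{\lfloor k/2\rfloor + 1}$.

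Applying the general version of Theorem \ref{thm:wl-round-upper-bound} stated in the introduction to this canonical refinement on $V^{\lfloor k/2\rfloor + 1}$ bounds the length of the refinement sequence by $O(k \cdot n^{\lfloor k/2\rfloor}\log n)$. Multiplying by the per-step simulation overhead of $O(kn)$ yields the claimed bound $q = O(k^2 \cdot n^{\lfloor k/2\rfloor + 1}\log n)$, and the standard correspondence between WL refinement and counting logic produces a distinguishing sentence $\psi \in \LCkq{d}{q}$ from the distinguishing power of $k$-WL guaranteed by $\varphi \in \LCk{k+1}$.

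The main obstacle will be the simulation step: one must ensure both that the induced coloring on $(\lfloor k/2\rfloor + 1)$-tuples genuinely captures the $k$-WL refinement of $k$-tuples and that each such refinement fits into $O(kn)$ $d$-WL rounds under the tight $d$-variable budget. The book-keeping for which intermediate quantities must be held simultaneously during the aggregation, and the fact that the relations may have arity up to $k > \lfloor k/2 \rfloor + 1$ (so the relevant atomic information cannot be read off the short tuples directly but must be aggregated through the scratch pebbles), is where I expect most of the difficulty to lie.
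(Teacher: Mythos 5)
Your high-level arithmetic lands on the right bound, but the proof strategy diverges from the paper's and, as sketched, has a genuine gap. The paper does not simulate $k$-WL by $d$-WL on the original universe at all. For odd $k=2\ell-1$ it constructs a \emph{binary} structure $\Binary(\FA)$ on the blown-up universe $V(\FA)^\ell$, with a binary relation for each atomic type on $2\ell$ vertices; it proves (Lemma~\ref{la:wl-eq-from-binary-structure}) that $2$-WL on $\Binary(\FA)$ is at least as powerful as $k$-WL on $\FA$, applies Theorem~\ref{thm:wl-round-upper-bound} with $k=2$ on an $n^\ell$-element universe to get $O(\ell\,n^\ell\log n)$ rounds, and then translates the resulting $3$-variable sentence over $\Binary(\FA)$ back to a $3\ell$-variable sentence over $\FA$ purely syntactically (Lemma~\ref{la:translate-formula-from-binary-structure}), with a multiplicative cost of $\ell$ in quantifier rank. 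The number-of-rounds bookkeeping happens entirely on the $2$-WL side on the larger universe; there is no per-round ``simulation overhead'' and no induced coloring on short tuples of the original universe.

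The gap in your route is at the step where you invoke the general form of Theorem~\ref{thm:wl-round-upper-bound} on the coloring sequence that $d$-WL induces on $(\lfloor k/2\rfloor+1)$-tuples. Theorem~\ref{thm:upper-bound} bounds the length of a \emph{strictly} refining chain of colorings satisfying conditions~\ref{item:upper-bound-shufflable}--\ref{item:upper-bound-strict}. The projection of the $d$-WL run to $(\lfloor k/2\rfloor+1)$-tuples need not strictly refine every $O(kn)$ rounds: it can stabilize while the $d$-tuple coloring continues to refine, so a bound on the length of the strictly refining projected subsequence does not bound the number of $d$-WL rounds until $\FA$ and $\FB$ are actually distinguished. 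To close this, you would need to prove in addition that once the projected coloring has stabilized, it already distinguishes $\FA$ and $\FB$ --- and you offer no argument for that. The other worry is the claim that the full $\LCk{k+1}$-type of a $k$-tuple $(\vec a,\vec b)$ is ``recoverable'' from a color of the short tuple $(\vec b,w)$: a single color of a $(\lfloor k/2\rfloor+1)$-tuple cannot determine the type of $(\vec a,\vec b)$ for every head $\vec a$ simultaneously (there are up to $n^{\lceil k/2\rceil}$ of these, in general pairwise distinct), so the statement has to be weakened to something about multisets over $\vec a$, and it is then no longer clear that the resulting short-tuple coloring distinguishes $\FA$ and $\FB$ whenever $k$-WL does. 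The paper avoids all of this by never trying to compress $k$-tuple information onto shorter tuples of the same universe; it instead moves to a bigger universe where pairs carry enough information and the $2$-WL upper bound applies directly.
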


\paragraph{Structure of the Paper.}

After introducing the necessary preliminaries in the next section, we prove Theorem \ref{thm:wl-round-upper-bound} in Section \ref{sec:upper-bounds}.
Afterwards, we prove limitations of our approach to obtain improved upper bounds on the iteration number in Section \ref{sec:long-sequences}.
In Section \ref{sec:lower-bounds}, we obtain the lower bounds on the iteration number of WL and prove Theorems \ref{thm:wl-round-lower-bound} and \ref{thm:quantifier-rank-lower-bound}.
Finally, Theorem \ref{thm:trading-upper-bound-intro} is proved in Section \ref{sec:tradeoffs}.


\section{Preliminaries}
\label{sec:preliminaries}
We use $\NN = \{1,2,3,\dots\}$ to denote the positive integers.
For $n \in \NN$ we write $[n] \coloneqq \{1,\dots,n\}$ and $[0,n] \coloneqq \{0,\dots,n\}$.

\paragraph{Graphs.}

We use standard graph notation.
A \emph{graph} is a pair $G = (V(G),E(G))$ with finite vertex set $V(G)$ and edge set $E(G)$.
In this paper, all graphs are simple (i.e., there are no loops or multiedges) and undirected.
We write $vw$ to denote an edge $\{v,w\} \in E(G)$.
The \emph{(open) neighborhood} of a vertex $v \in V(G)$ is the set $N_G(v) \coloneqq \{w \in V(G) \mid vw \in E(G)\}$.
The degree of a vertex, denoted by $\deg_G(v)$, is the size of its neighborhood.
For $X \subseteq V(G)$ we define $N_G(X) \coloneqq (\bigcup_{v \in X}N_G(v)) \setminus X$ to denote the neighborhood of $X$.
If the graph $G$ is clear from context, we usually omit the index $G$ and simply write $N(v)$, $\deg(v)$ and $N(X)$.
For $X \subseteq V(G)$ we also write $G[X]$ to denote the subgraph of $G$ induced by $X$.

\paragraph{Relational Structures.}

In this work, we restrict ourselves to relational vocabularies (signatures) $\sigma = \{R_1,\dots,R_m\}$ where each $R_i$ is a relation symbol of a prescribed arity $k_i \geq 1$.
We say that $\sigma$ \emph{has arity at most $k$} if $k_i \leq k$ for all $i \in [m]$.
A \emph{$\sigma$-structure} is a tuple $\FA = (V(\FA),R_1^{\FA},\dots,R_m^{\FA})$ where $V(\FA)$ is a finite \emph{universe} and $R_i^{\FA} \subseteq (V(\FA))^{k_i}$ is a relation of arity $k_i$.
In the remainder of this work, we usually do not explicitly refer to the vocabulary underlying a structure $\FA$.
With this in mind, we say a structure $\FA = (V(\FA),R_1^{\FA},\dots,R_m^{\FA})$ \emph{has arity at most $k$} if the underlying vocabulary has arity at most $k$.

For $X \subseteq V(\FA)$ we define $\FA[X]$ to be the \emph{induced substructure of $\FA$ on $X$}, i.e., $\FA[X]$ is the relational structure with $V(\FA[X]) = X$ and
\[R_i^{\FA[X]} = R_i^{\FA} \cap X^{k_i}\]
for all $i \in [m]$.
Let $\FB = (V(\FB),R_1^{\FB},\dots,R_m^{\FB})$ be a second structure (over the same vocabulary $\sigma$).
An \emph{isomorphism} from $\FA$ to $\FB$ is a bijection $f\colon V(\FA) \rightarrow V(\FB)$ such that, for all $i \in [m]$ and all $v_1,\dots,v_{k_i} \in V(\FA)$, it holds that
\[(v_1,\dots,v_{k_i}) \in R_i^{\FA} \iff (f(v_1),\dots,f(v_{k_i})) \in R_i^{\FB}.\]
The structures $\FA$ and $\FB$ are \emph{isomorphic} if there is an isomorphism from $\FA$ to $\FB$.

\paragraph{Logics.}

Next, we cover bounded-variable fragments of first-order logic (with counting quantifiers).
Let $\sigma = \{R_1,\dots,R_m\}$ be a relational vocabulary and suppose $R_i$ has arity $k_i \geq 1$.
We write $\FO$ to denote standard \emph{first-order logic} defined via the grammar
\[\varphi ::= x_1 = x_2 ~|~ R_i(x_1,\dots,x_{k_i}) ~|~ \varphi \wedge \varphi ~|~ \neg \varphi ~|~ \exists x_1 \varphi\]
for all $i \in [m]$ and all variables $x_j \in \CV$ where $\CV$ is an infinite set of variables.
We write $\varphi(x_1,\dots,x_k)$ to indicate that the free variables of $\varphi$ are among the variables $\{x_1,\dots,x_k\}$.
For a structure $\FA = (V(\FA),R_1^{\FA},\dots,R_m^{\FA})$ and $\vec v = (v_1,\dots,v_k) \in (V(\FA))^{k}$ we write $\FA \models \varphi(\vec v)$ if $\FA$ is a model of $\varphi$ when $x_i$ is interpreted by $v_i$.

We define the \emph{quantifier rank} of a formula $\varphi \in \FO$ inductively via
\begin{itemize}
 \item $\qr(x_1 = x_2) = \qr(R_i(x_1,\dots,x_{k_i})) \coloneqq 0$ for all  $i \in [m]$ and all variables $x_j \in \CV$,
 \item $\qr(\varphi \wedge \psi) \coloneqq \max(\qr(\varphi),\qr(\psi))$,
 \item $\qr(\neg \varphi) \coloneqq \qr(\varphi)$, and
 \item $\qr(\exists x \varphi) \coloneqq \qr(\varphi) + 1$ for all $x \in \CV$.
\end{itemize}

We define \emph{first-order logic with counting quantifiers} $\LC$ to be the extension of $\FO$ by counting quantifiers of the form $\exists^{\geq j} x \varphi$.
The formula $\exists^{\geq j} x \varphi$ is satisfied over a structure $\FA$ if there are at least $j$ distinct elements $v \in V(\FA)$ that satisfy $\varphi$.
We extend the definition of the quantifier rank in the natural way by setting $\qr(\exists^{\geq j} x \varphi) \coloneqq \qr(\varphi) + 1$ for all $x \in \CV$.

For $k \in \NN$ we define $\LLk{k}$ to be the restriction of $\FO$ to formulas over at most $k$ variables, i.e., we restrict ourselves to a set of variables $\CV$ of size exactly $k$.
Similarly, we define $\LCk{k}$ to be the restriction of $\LC$ to formulas over at most $k$ variables.

Moreover, for $q \geq 0$, we define $\LLkq{k}{q}$ to the restriction of $\LLk{k}$ to formulas $\varphi$ of quantifier rank $\qr(\varphi) \leq q$.
Similarly, we define $\LCkq{k}{q}$ to the restriction of $\LCk{k}$ to formulas of quantifier rank at most $q$.

\paragraph{The Weisfeiler-Leman Algorithm.}

Next, we describe the $k$-WL algorithm.
While it is most commonly used as a heuristic to graph isomorphism testing, the algorithm can be applied to any relational structure of arity at most $k$.

Let $\chi_1,\chi_2 \colon V^{k} \rightarrow C$ be colorings of $k$-tuples over a finite set $V$ where $C$ is some finite set of colors.
The coloring $\chi_1$ \emph{refines} $\chi_2$, denoted $\chi_1 \preceq \chi_2$, if $\chi_1(\vec v) = \chi_1(\vec w)$ implies $\chi_2(\vec v) = \chi_2(\vec w)$ for all $\vec v,\vec w \in V^{k}$.
Observe that $\chi_1 \preceq \chi_2$ if and only if the partition into color classes of $\chi_1$ refines the corresponding partition into color classes of $\chi_2$.
The colorings $\chi_1$ and $\chi_2$ are \emph{equivalent}, denoted $\chi_1 \equiv \chi_2$, if $\chi_1 \preceq \chi_2$ and $\chi_2 \preceq \chi_1$.
Also, $\chi_1$ \emph{strictly refines} $\chi_2$, denoted $\chi_1 \prec \chi_2$, if $\chi_1 \preceq \chi_2$ and $\chi_1 \not\equiv \chi_2$.

Let us fix $k \geq 2$ and consider a relational structure $\FA = (V(\FA),R_1^{\FA},\dots,R_m^{\FA})$ of arity at most $k$.
Let $\vec v = (v_1,\dots,v_k) \in (V(\FA))^{k}$.
We define the \emph{atomic type} of $\vec v$, denoted by $\atp_\FA(\vec v)$, to be the isomorphism type of the ordered substructure of $\FA$ that is induced by $\{v_1,\dots,v_k\}$.
More concretely, for a second structure $\FB = (V(\FB),R_1^{\FB},\dots,R_m^{\FB})$ and a tuple $\vec w = (w_1,\dots,w_k) \in (V(\FB))^{k}$, it holds that $\atp_\FA(\vec v) = \atp_\FB(\vec w)$ if the mapping $v_i \mapsto w_i$ is an isomorphism from $\FA[\{v_1,\dots,v_k\}]$ to $\FB[\{w_1,\dots,w_k\}]$.

Next, we describe a single refinement step of $k$-WL.
Let $V$ be a finite set and let $\chi\colon V^{k} \rightarrow C$ be a coloring of all $k$-tuples over $V$.
We define the coloring $\refWL{k}{\chi}$ by setting
\[\big(\refWL{k}{\chi}\big)(\vec v) \coloneqq \Big(\chi(\vec v),\CM_\chi(\vec v)\Big)\]
for all $\vec v = (v_1,\dots,v_k) \in V^{k}$ where
\[\CM_\chi(\vec v) \coloneqq \Big\{\!\!\Big\{\big(\chi(\vec v[w/1]),\dots,\chi(\vec v[w/k])\big) \Bigmid w \in V\Big\}\!\!\Big\}\]
and $\vec v[w/i] \coloneqq (v_1,\dots,v_{i-1},w,v_{i+1},\dots,v_k)$ is the tuple obtained from $\vec v$ by replacing the $i$-th entry by $w$ (and $\{\!\{\dots\}\!\}$ denotes a multiset).
Observe that $\refWL{k}{\chi} \preceq \chi$.
We say the coloring $\chi$ is \emph{$k$-stable} if $\refWL{k}{\chi} \equiv \chi$.

We define the initial coloring computed by $k$-WL on the structure $\FA$ via $\WLit{k}{0}{\FA}(\vec v) \coloneqq \atp_\FA(\vec v)$ for all $\vec v \in (V(\FA))^{k}$.
For $r \geq 0$ we set
\[\WLit{k}{r+1}{\FA} \coloneqq \refWL{k}{\WLit{k}{r}{\FA}}.\]
Since $\WLit{k}{r+1}{\FA} \preceq \WLit{k}{r}{\FA}$ for all $r \geq 0$, there is some minimal $r_{\infty} \leq |V|^{k} - 1$ such that
\[\WLit{k}{r_{\infty}}{\FA} \equiv \WLit{k}{r_{\infty}+1}{\FA}.\]
We say that $k$-WL \emph{stabilizes after $r_\infty$ rounds on $\FA$} and define $\WL{k}{\FA} \coloneqq \WLit{k}{r_{\infty}}{\FA}$ to be the output coloring of $k$-WL.
Observe that $\WL{k}{\FA}$ is a $k$-stable coloring.

Now, let $\FB = (V(\FB),R_1^{\FB},\dots,R_m^{\FB})$ be a second structure.
Let $r \geq 0$.
We say that $k$-WL \emph{distinguishes $\FA$ and $\FB$ after $r$ rounds} if there is some color $c$ such that
\[\Big|\Big\{\vec v \in (V(\FA))^{k} \Bigmid \WLit{k}{r}{\FA}(\vec v) = c\Big\}\Big| \neq \Big|\Big\{\vec w \in (V(\FB))^{k} \Bigmid \WLit{k}{r}{\FB}(\vec w) = c\Big\}\Big|.\]
We also say that $k$-WL \emph{distinguishes $\FA$ and $\FB$} if there is some integer $r \geq 0$ such that $k$-WL distinguishes $\FA$ and $\FB$ after $r$ rounds.
We write $\FA \simeq_k \FB$ if $k$-WL does not distinguish $\FA$ and $\FB$.
Note that, if $k$-WL distinguishes $\FA$ and $\FB$ and $k$-WL stabilizes after $r_\infty$ rounds on $\FA$, then $k$-WL distinguishes $\FA$ and $\FB$ after (at most) $r_\infty + 1$ rounds.

The following connections to bounded-variable fragments of first-order logic with counting quantifiers are well-known.
Those connections were first proved in \cite{CaiFI92,ImmermanL90} for graphs, but the arguments directly generalize to arbitrary relational structures (see, e.g., \cite{Grohe17}).

\begin{theorem}
 \label{thm:wl-logic}
 Let $k \geq 2$.
 Also let $\FA$ and $\FB$ be structures of arity at most $k$ and suppose $\vec v \in V(\FA)^k$ and $\vec w \in V(\FB)^k$.
 Then, for every $r \geq 0$, it holds that $\WLit{k}{r}{\FA}(\vec v) \neq \WLit{k}{r}{\FB}(\vec w)$ if and only if there is some $\varphi(\vec x) \in \LCkq{k+1}{r}$ such that $\FA \models \varphi(\vec v)$ and $\FB \not\models \varphi(\vec w)$. 
\end{theorem}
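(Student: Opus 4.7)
The plan is to prove Theorem~\ref{thm:wl-logic} by a double induction that follows the classical Immerman--Lander / Cai--F\"urer--Immerman correspondence; the generalization from graphs to arbitrary relational structures of arity at most $k$ affects only the notation in the base case. The forward direction (``WL distinguishes $\Rightarrow$ formula distinguishes'') is handled by constructing, for each $r \geq 0$ and each round-$r$ color $c$, a \emph{characteristic formula} $\varphi_{r,c}(x_1,\dots,x_k) \in \LCkq{k+1}{r}$ satisfying $\FA \models \varphi_{r,c}(\vec v) \iff \WLit{k}{r}{\FA}(\vec v) = c$ for all structures $\FA$ and all $\vec v \in V(\FA)^k$. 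The backward direction (``formula distinguishes $\Rightarrow$ WL distinguishes'') is handled by showing, via induction on the structure of $\varphi \in \LCkq{k+1}{r}$, that equality of round-$r$ WL colors implies agreement on $\varphi$.

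The base case $r = 0$ is that $\atp_\FA(\vec v)$ is captured by a quantifier-free conjunction in $x_1,\dots,x_k$ over the equalities $x_i = x_j$ (and their negations) and the atomic formulas $R_l(x_{j_1},\dots,x_{j_{k_l}})$ (and their negations), ranging over all relations $R_l$ of arity $k_l \leq k$ and all index tuples in $[k]^{k_l}$; conversely $\atp_\FA(\vec v) = \atp_\FB(\vec w)$ forces agreement on all quantifier-free formulas. For the inductive step of the forward direction, the round-$(r{+}1)$ color of $\vec v$ decomposes into the pair $(\WLit{k}{r}{\FA}(\vec v),\,\CM_{\WLit{k}{r}{\FA}}(\vec v))$, and the multiset is pinned down by recording, for each tuple $\vec c = (c_1,\dots,c_k)$ of round-$r$ colors, the number $j$ of $w \in V(\FA)$ that witness it. Using a fresh variable $y$, this is captured by
\begin{align*}
 \exists^{=j} y\; \bigwedge_{i=1}^{k} \varphi_{r,c_i}\bigl(x_1,\dots,x_{i-1},y,x_{i+1},\dots,x_k\bigr),
\end{align*}
where $\exists^{=j} y\,\theta$ abbreviates $(\exists^{\geq j} y\,\theta) \wedge \neg(\exists^{\geq j+1} y\,\theta)$, keeping the rank at $r+1$. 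Conjoining these formulas for all realized $(\vec c, j)$ together with $\varphi_{r,c^*}(\vec x)$ for $c^* \coloneqq \WLit{k}{r}{\FA}(\vec v)$ yields $\varphi_{r+1,\,\WLit{k}{r+1}{\FA}(\vec v)}(\vec x)$. In the backward direction, if $\varphi = \exists^{\geq j} y\,\psi$ has quantifier rank $r+1$, then equality of round-$(r{+}1)$ WL colors of $\vec v$ and $\vec w$ forces the multisets $\CM_{\WLit{k}{r}{\cdot}}$ to coincide; this produces a bijection $\pi\colon V(\FA)\to V(\FB)$ such that $\WLit{k}{r}{\FA}(\vec v[u/i]) = \WLit{k}{r}{\FB}(\vec w[\pi(u)/i])$ for every $u$ and every $i \in [k]$, and applying the inductive hypothesis to $\psi$ along $\pi$ transfers witnesses in a count-preserving way.

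The main obstacle, as in all WL--logic correspondences, is the variable bookkeeping: one must ensure that the formulas constructed in the forward direction never use more than $k+1$ distinct variables, even though the inductive conjunction simultaneously invokes $\varphi_{r,c_i}$ for all $i \in [k]$ with $y$ substituted in a different position. The point is that substituting $y$ for $x_i$ \emph{removes} $x_i$ from the free-variable set of the $i$-th conjunct, so---after renaming any bound variables inside $\varphi_{r,c_i}$ away from $y$---the entire conjunction draws from the single pool $\{x_1,\dots,x_k,y\}$ of size $k+1$. The parallel subtlety in the backward direction is handling subformulas $\psi(y,\vec x)$ whose free variables span all $k+1$ variables, which is cleanest via the standard equivalence of $\LCk{k+1}$-equivalence with winning strategies for Duplicator in the bijective $(k+1)$-pebble game. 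It is precisely this $k+1$ variable accounting that identifies $\LCk{k+1}$, rather than $\LCk{k}$, as the natural logical counterpart to $k$-WL.
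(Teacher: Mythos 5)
The paper does not prove Theorem~\ref{thm:wl-logic}; it cites \cite{CaiFI92,ImmermanL90} and remarks that the arguments generalize to relational structures, pointing to \cite{Grohe17}. Your sketch reconstructs exactly that standard argument — characteristic formulas via induction on $r$ for the forward direction, structural induction (or, as you note, the bijective $(k+1)$-pebble game) for the backward direction, and the pivot trick for keeping the variable count at $k+1$ — so there is nothing to compare against; your route \emph{is} the route.

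Two spots to tighten if you flesh this out. First, the renaming inside the inductive conjunct: after substituting $y$ for the free occurrences of $x_i$ in $\varphi_{r,c_i}$, any bound occurrences of $y$ must be renamed \emph{to $x_i$ specifically}. In a $(k+1)$-variable fragment there is no spare variable, and $x_i$ is the one name that has just been freed up; ``rename away from $y$'' leaves the target unstated. Second, the claimed iff $\FA \models \varphi_{r,c}(\vec v) \iff \WLit{k}{r}{\FA}(\vec v) = c$ (for \emph{all} structures) fails as written if you conjoin $\exists^{=j}y\,(\cdots)$ only over color tuples $\vec c$ with positive multiplicity $j$ in $\CM_{\chi_r}(\vec v)$: a tuple in a larger structure can satisfy every such conjunct while its multiset carries extra mass on unlisted color tuples, so the color need not be $c$. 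The theorem is stated without any $|V(\FA)| = |V(\FB)|$ hypothesis, so this matters. The standard fix is cheap: also conjoin $\exists^{=|V(\FA)|}y\,(y=y)$, which pins the total count (still quantifier rank $1 \le r+1$ and within $k+1$ variables) and forces all unlisted multiplicities to zero; equivalently, restrict attention to the finitely many colors realized in the two fixed structures and include the zero-multiplicity conjuncts explicitly.
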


\begin{corollary}
 \label{cor:wl-logic-distinguish-structures}
 Let $k \geq 2$.
 Also let $\FA$ and $\FB$ be structures of arity at most $k$.
 
 If there is a sentence $\varphi \in \LCkq{k+1}{r}$ such that $\FA \models \varphi$ and $\FB \not\models \varphi$, then the $k$-dimensional Weisfeiler-Leman algorithm distinguishes $\FA$ and $\FB$ after at most $r$ refinement rounds.
 
 If the $k$-dimensional Weisfeiler-Leman algorithm distinguishes $\FA$ and $\FB$ after $r$ refinement rounds, then there is a sentence $\varphi \in \LCkq{k+1}{r+k}$ such that $\FA \models \varphi$ and $\FB \not\models \varphi$.
\end{corollary}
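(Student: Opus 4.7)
The plan is to derive the corollary directly from Theorem~\ref{thm:wl-logic}; the only real work is to pass between formulas with $k$ free variables (the content of the theorem) and sentences (the content of the corollary), which is what produces the extra $+k$ in the quantifier rank of the second implication.

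For the first implication, suppose $\varphi\in\LCkq{k+1}{r}$ is a sentence with $\FA\models\varphi$ and $\FB\not\models\varphi$. Since $\varphi$ has no free variables, one has $\FA\models\varphi(\vec v)$ and $\FB\not\models\varphi(\vec w)$ for every $\vec v\in V(\FA)^k$ and every $\vec w\in V(\FB)^k$, so Theorem~\ref{thm:wl-logic} yields $\WLit{k}{r}{\FA}(\vec v)\neq\WLit{k}{r}{\FB}(\vec w)$ for all such pairs. Consequently no color is shared between the two round-$r$ colorings; any color present in $\FA$ has positive count there and zero count in $\FB$ (the degenerate case of an empty universe is trivial), so $k$-WL distinguishes $\FA$ and $\FB$ after $r$ rounds.

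For the second implication, suppose $k$-WL distinguishes $\FA$ and $\FB$ after $r$ rounds, so some color $c$ has counts $a\neq b$ in the two structures; WLOG $a\ge 1$. I would fix a reference tuple $\vec v^*\in V(\FA)^k$ of color $c$ and, for every $\vec u\in V(\FA)^k\cup V(\FB)^k$ of round-$r$ color different from $c$, apply Theorem~\ref{thm:wl-logic} to obtain a formula $\varphi_{\vec u}(\vec x)\in\LCkq{k+1}{r}$ holding at $\vec v^*$ in $\FA$ while failing at $\vec u$ in its respective structure. The finite conjunction $\theta_c(\vec x)\coloneqq\bigwedge_{\vec u}\varphi_{\vec u}(\vec x)$ then has quantifier rank at most $r$, and the ``if'' direction of Theorem~\ref{thm:wl-logic} (tuples of the same round-$r$ color satisfy exactly the same formulas of $\LCkq{k+1}{r}$) guarantees that $\theta_c$ selects precisely the color-$c$ tuples in each of $\FA$ and $\FB$.

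It remains to convert $\theta_c$ into a sentence of quantifier rank at most $r+k$ that distinguishes $\FA$ and $\FB$. I would consider the family
\[\psi_{\vec j}\;\coloneqq\;\exists^{\ge j_1}x_1\;\exists^{\ge j_2}x_2\;\cdots\;\exists^{\ge j_k}x_k\;\theta_c(x_1,\dots,x_k)\in\LCkq{k+1}{r+k},\]
parameterised by thresholds $\vec j\in\NN^k$ (each $\psi_{\vec j}$ still uses at most $k+1$ variables, like $\theta_c$ itself). The combinatorial claim is that, because the total counts of $\theta_c$-satisfying tuples disagree between $\FA$ and $\FB$, some $\vec j$ makes $\psi_{\vec j}$ true in exactly one of $\FA$ and $\FB$. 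This is the main technical step and follows by induction on $k$ from the elementary identity $\sum_u f(u)=\sum_{j\ge 1}|\{u:f(u)\ge j\}|$: equality of all iterated threshold counts would force equality of the multisets of fibre sizes at every level, hence equality of the total counts, contradicting $a\neq b$.
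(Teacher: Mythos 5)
Your proof is correct and is exactly the standard argument that the paper defers to by citing \cite{CaiFI92,ImmermanL90,Grohe17} rather than proving the corollary itself. The only point requiring care is the combinatorial step converting the formula $\theta_c(\vec x)$ that defines the color class into a distinguishing sentence, and you handle it correctly: prefixing $k$ counting quantifiers adds exactly $k$ to the quantifier rank, reuses the existing variables $x_1,\dots,x_k$, and the iterated identity $\sum_u f(u)=\sum_{j\ge1}|\{u: f(u)\ge j\}|$, applied $k$ times, shows that agreement on all threshold vectors $\vec j$ would force equality of the total counts of color-$c$ tuples, contradicting the assumption that $k$-WL distinguishes the structures at round $r$.
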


\paragraph{Algebras.}

Finally, we recall some algebraic tools required in this work.
We use $\Complex$ to denote the complex numbers.

Recall that a \emph{$\Complex$-algebra} $\Alg$ is a ring which is also a $\Complex$-vector space such that $a \cdot (\vec v \vec w) = (a \cdot \vec v) \vec w =\vec v (a \cdot \vec w)$ for all $a \in \Complex$ and $\vec v,\vec w \in \Alg$.
Since we restrict our attention to complex numbers, we simply refer to a $\Complex$-algebra as an algebra.
In this work, we are interested in matrix algebras where the algebra consists of $(d\times d)$-matrices over the complex numbers with standard matrix multiplication as the ring operation.
We write $\FullMatrixAlg_d(\Complex)$ for the full matrix algebra of all $(d \times d)$-matrices over the complex numbers.
It is a well-known fact that a matrix algebra $\Alg \subseteq  \FullMatrixAlg_d(\Complex)$, which is closed under conjugate transposition, is always \emph{semisimple}.
Indeed, if $M$ is in the Jacobson radical of $\Alg$, then so is $M^*M$.
But $M^*M$ is diagonalizable (because it is Hermitian) and nilpotent (because the radical is nilpotent \cite[Lemma 1.6.6]{Zimmermann14}) and hence, $M^*M = 0$ and so $M = 0$.
Then the radical itself is $0$, which is one characterization of semisimplicity.

Hence, we can use the following result to bound the length of sequences of strict subalgebras of $\FullMatrixAlg_d(\Complex)$ that are closed under conjugate transposition.

\begin{theorem}[{\cite[Theorem~5]{LichterPS19}}]
 \label{thm:semisimple-sequence-length}
 Let $\Alg^{(1)} \subset \dots \subset \Alg^{(\ell)} \subseteq \FullMatrixAlg_d(\Complex)$ be a sequence of semisimple strict subalgebras.
 Then $\ell \leq 2d$.
\end{theorem}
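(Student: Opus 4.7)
The plan is to exploit the Artin--Wedderburn decomposition of each semisimple $\Alg^{(i)}$ in concert with the double centralizer theorem, and then to locate a monovariant that strictly advances along the chain and is bounded by $2d$. For each $i$ I would write $\Alg^{(i)} \cong \bigoplus_{j=1}^{s_i} \FullMatrixAlg_{n_{i,j}}(\Complex)$ with the induced module decomposition $\Complex^d \cong \bigoplus_j V_{i,j}^{\oplus m_{i,j}}$, where $n_{i,j}, m_{i,j} \geq 1$ and $\sum_j n_{i,j} m_{i,j} = d$. Under this decomposition the centralizer satisfies $Z_{\FullMatrixAlg_d(\Complex)}(\Alg^{(i)}) \cong \bigoplus_j \FullMatrixAlg_{m_{i,j}}(\Complex)$. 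Because each $\Alg^{(i)}$ is semisimple and acts faithfully on $\Complex^d$, the double commutant theorem gives $\Alg^{(i)} = Z(Z(\Alg^{(i)}))$, so the centralizers themselves form a strict chain $Z(\Alg^{(1)}) \supsetneq \cdots \supsetneq Z(\Alg^{(\ell)})$; equivalently, $\dim \Alg^{(i)} = \sum_j n_{i,j}^2$ strictly grows while $\dim Z(\Alg^{(i)}) = \sum_j m_{i,j}^2$ strictly shrinks along the chain.

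The natural candidate for the monovariant is $\Phi(\Alg) \coloneqq \sum_j (n_j + m_j)$. The inequality $(n_j - 1)(m_j - 1) \geq 0$ gives $n_j + m_j \leq n_j m_j + 1$, which summed yields $\Phi(\Alg) \leq d + s \leq 2d$, with equality iff $\Alg$ is the split diagonal algebra $\Complex^d$. Unfortunately $\Phi$ alone is not strictly monotonic on every step (for instance $\Phi(\Complex I_d) = 1 + d = \Phi(\FullMatrixAlg_d(\Complex))$), so I would either combine $\Phi$ with a secondary invariant in a lexicographic order, or argue that every strict chain decomposes into at most two $\Phi$-monotone phases: an \emph{ascending} phase of length at most $d - 1$ in which the number $s$ of simple summands grows up to $d$, followed by a \emph{descending} phase of length at most $d - 1$ in which summands merge into larger matrix blocks. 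Plateau steps where $\Phi$ stays constant would be absorbed using $\dim \Alg$ or $\dim Z(\Alg)$ as a tiebreaker, yielding the desired bound $\ell \leq 2d$.

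The main obstacle will be the precise analysis of how a single strict inclusion $\Alg \subsetneq \CB$ changes the pair $((n_j), (m_j))$. The natural bookkeeping is a Bratteli-style inclusion matrix $(\mu_{jk})$ satisfying $N_k = \sum_j \mu_{jk} n_j$ and $m_j = \sum_k \mu_{jk} M_k$, and from its non-triviality one must extract a strict advance of the refined monovariant via a careful case analysis (split a simple summand, merge isomorphic summands, or enlarge a summand by absorbing multiplicity). A fallback route I would pursue if this direct analysis becomes unwieldy is induction on $d$: whenever $\Alg^{(\ell)} \not\cong \FullMatrixAlg_d(\Complex)$ as an abstract algebra, the entire chain transports into a strict chain of semisimple subalgebras of some $\FullMatrixAlg_n(\Complex)$ with $n < d$, and the inductive hypothesis bounds $\ell \leq 2n < 2d$; the edge case $\Alg^{(\ell)} = \FullMatrixAlg_d(\Complex)$ can be handled by truncating the chain and applying the induction to $\Alg^{(\ell-1)}$, using that the latter is necessarily a proper semisimple subalgebra of $\FullMatrixAlg_d(\Complex)$.
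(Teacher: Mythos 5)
First, a point of context: the paper does not reprove this theorem --- it is quoted verbatim from \cite[Theorem~5]{LichterPS19} --- so there is no in-paper proof to compare against, and your proposal must stand on its own.

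Your framework is the right one: the Artin--Wedderburn decomposition $\Alg^{(i)}\cong\bigoplus_j\FullMatrixAlg_{n_{i,j}}(\Complex)$, the induced module decomposition $\Complex^d\cong\bigoplus_j V_{i,j}^{\oplus m_{i,j}}$ with $\sum_j n_{i,j}m_{i,j}=d$, the centralizer description $Z(\Alg^{(i)})\cong\bigoplus_j\FullMatrixAlg_{m_{i,j}}(\Complex)$, and the double centralizer theorem are all exactly what is needed, and the bound $\sum_j(n_j+m_j)\leq d+s\leq 2d$ via $(n_j-1)(m_j-1)\geq 0$ is correct. The problem is that $\Phi(\Alg)=\sum_j(n_j+m_j)$ is not even close to monotone, and none of your repairs close the gap. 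The ``two monotone phases'' claim is false: along the chain $\Complex I_8\subset\FullMatrixAlg_2(\Complex)\otimes I_4\subset\bigl(\FullMatrixAlg_2(\Complex)\oplus\FullMatrixAlg_2(\Complex)\bigr)\subset\FullMatrixAlg_4(\Complex)\otimes I_2\subset\bigl(\FullMatrixAlg_4(\Complex)\oplus\FullMatrixAlg_4(\Complex)\bigr)\subset\FullMatrixAlg_8(\Complex)$ in $\FullMatrixAlg_8(\Complex)$, the number of simple summands is $1,1,2,1,2,1$ and $\Phi$ is $9,6,8,6,10,9$, which oscillates. A lexicographic tiebreaker by $\dim\Alg$ would at best give a bound of order $d^3$, not $2d$, because $\dim\Alg$ can vary over $\Theta(d^2)$ values within a single $\Phi$-plateau. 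The inductive fallback is also unsupported: you assert without argument that a chain whose top element is not $\FullMatrixAlg_d(\Complex)$ ``transports'' into a chain inside some $\FullMatrixAlg_n(\Complex)$ with $n<d$; it is not clear what the reduction map would be, especially since the central idempotents of $\Alg^{(\ell)}$ need not lie in the smaller $\Alg^{(i)}$.

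The fix is small but essential: use the \emph{difference} $T(\Alg)\coloneqq\sum_j(m_j-n_j)$ instead of the sum. Writing the Bratteli inclusion matrix $(\mu_{jk})$ for a strict inclusion $\Alg\subsetneq\CB$ of semisimple unital subalgebras (so $N_k=\sum_j\mu_{jk}n_j$ and $m_j=\sum_k\mu_{jk}M_k$), one computes
\[
  T(\Alg)-T(\CB)=\sum_k M_k\Bigl(\sum_j\mu_{jk}-1\Bigr)+\sum_j n_j\Bigl(\sum_k\mu_{jk}-1\Bigr).
\]
Faithfulness of the action on $\Complex^d$ forces every row and column of $(\mu_{jk})$ to be nonzero, so each parenthesized factor is nonnegative; strictness forces $(\mu_{jk})$ not to be a permutation matrix, so at least one factor is $\geq 1$, and since all $n_j,M_k\geq 1$ we get $T(\Alg)-T(\CB)\geq 1$. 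As $\sum_j m_j\leq\sum_j n_jm_j=d$ and $\sum_j n_j\geq 1$ (and symmetrically), $T$ takes values in the integer interval $[1-d,d-1]$, so a strictly $T$-decreasing chain has length $\ell\leq 2d-1\leq 2d$. This is almost certainly the intended argument underlying the cited theorem; your proposal has the right ingredients and the right bound but picks the wrong linear combination, and the compensating ideas you sketch do not recover from that choice.
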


A \emph{$*$-algebra} is an algebra with an additional operation $*$ such that $(\vec v + \vec w)^* = \vec v^* + \vec w^*$, $(\vec v \vec w)^* = \vec w^* \vec v^* $, $\vec 1^* = \vec 1$ and $(\vec v^*)^* = \vec v$ for all $\vec v,\vec w \in \Alg$ (where $\vec 1$ denotes the unit element).
Note that $\FullMatrixAlg_d(\Complex)$ forms a $*$-algebra using conjugate transposition.

\section{Upper Bounds}
\label{sec:upper-bounds}
In this section, we prove Theorem \ref{thm:wl-round-upper-bound}.
Actually, we prove a more general result on the maximal iteration number of any refinement method that is at least as strong as $k$-WL.

For the remainder of this section, let us fix some integer $k \geq 2$.
Let $V$ be a finite set and let $\CP$ be a partition of $V^k$.
For two tuples $\vec v,\vec w \in V^k$, we write $\vec v \sim_\CP \vec w$ if there is some $P \in \CP$ such that $\vec v,\vec w \in \CP$ (i.e., $\sim_\CP$ is the equivalence relation with equivalence classes from $\CP$).

We say $\CP$ is \emph{compatible with equality} if for all $P \in \CP$, all tuples $(v_1,\ldots,v_k),(w_1,\ldots,w_k) \in P$,
and all $i,j \in [k]$ it holds that
\[v_i = v_j \iff w_i = w_j.\]
Moreover, the partition $\CP$ is \emph{shufflable} if for every function $\pi\colon [k] \rightarrow [k]$ and every pair of tuples $(v_1,\dots,v_k),(w_1,\dots,w_k) \in V^k$ it holds that
\begin{equation}
 \label{eq:shuffle}
 (v_1,\dots,v_k) \sim_\CP (w_1,\dots,w_k) \quad\implies\quad (v_{\pi(1)},\dots,v_{\pi(k)}) \sim_\CP (w_{\pi(1)},\dots,w_{\pi(k)}).
\end{equation}

\begin{observation}
 \label{obs:shuffle-bijection}
 Let $\CP$ be a shufflable partition of $V^k$.
 Then
 \[P^{\pi}\coloneqq \left\{(v_{\pi(1)},\ldots,v_{\pi(k)}) \;\middle|\; (v_1,\ldots,v_k)\in P\right\}\in\CP\]
 for every bijection $\pi\colon [k] \rightarrow [k]$ and every $P \in \CP$.
\end{observation}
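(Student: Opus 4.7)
The plan is to show both inclusions $P^{\pi} \subseteq Q$ and $Q \subseteq P^{\pi}$ for a suitable block $Q \in \CP$, using the shufflability property once for $\pi$ and once for its inverse $\pi^{-1}$ (which exists because $\pi$ is a bijection).

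First, I would fix $P \in \CP$ and a bijection $\pi \colon [k] \to [k]$, and pick an arbitrary reference tuple $\vec{v} = (v_1,\dots,v_k) \in P$. Let $Q \in \CP$ be the unique block containing $(v_{\pi(1)},\dots,v_{\pi(k)})$. I claim $P^{\pi} = Q$. For the inclusion $P^{\pi} \subseteq Q$, take any $\vec{w} = (w_1,\dots,w_k) \in P$. Since $\vec{v} \sim_\CP \vec{w}$, shufflability applied to $\pi$ yields $(v_{\pi(1)},\dots,v_{\pi(k)}) \sim_\CP (w_{\pi(1)},\dots,w_{\pi(k)})$, so the latter tuple lies in $Q$. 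As this holds for every $\vec{w} \in P$, we get $P^{\pi} \subseteq Q$.

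For the reverse inclusion $Q \subseteq P^{\pi}$, I would use the unwinding $(u_1,\dots,u_k) \in P^{\pi} \iff (u_{\pi^{-1}(1)},\dots,u_{\pi^{-1}(k)}) \in P$, which follows directly from the definition of $P^{\pi}$ together with the fact that $\pi$ is a bijection. So let $\vec{u} = (u_1,\dots,u_k) \in Q$; then $\vec{u} \sim_\CP (v_{\pi(1)},\dots,v_{\pi(k)})$, and applying shufflability to the function $\pi^{-1}$ gives
\[
(u_{\pi^{-1}(1)},\dots,u_{\pi^{-1}(k)}) \;\sim_\CP\; (v_{\pi(\pi^{-1}(1))},\dots,v_{\pi(\pi^{-1}(k))}) \;=\; (v_1,\dots,v_k) \in P.
\]
Hence $(u_{\pi^{-1}(1)},\dots,u_{\pi^{-1}(k)}) \in P$, which by the unwinding means $\vec{u} \in P^{\pi}$. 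This shows $Q \subseteq P^{\pi}$, and combined with the first inclusion we conclude $P^{\pi} = Q \in \CP$.

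There is no real obstacle here; the only subtle point is recognising that shufflability as stated only guarantees that equivalent tuples remain equivalent under $\pi$ (yielding $P^\pi \subseteq Q$), and that to upgrade this to $P^\pi = Q$ one genuinely uses the bijectivity of $\pi$ by invoking shufflability a second time with $\pi^{-1}$. This is exactly the reason the observation requires $\pi$ to be a bijection, even though the definition of shufflability allows arbitrary maps $[k] \to [k]$.
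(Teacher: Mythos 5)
Your proof is correct and follows essentially the same approach as the paper: establish $P^{\pi} \subseteq Q$ by applying shufflability with $\pi$, then obtain the reverse inclusion by applying shufflability with $\pi^{-1}$ (the paper phrases this as $Q^{\pi^{-1}} \subseteq P$ "by the same argument", which is what you unwind explicitly). Your closing remark correctly identifies why bijectivity of $\pi$ is needed even though shufflability is stated for arbitrary functions.
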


\begin{proof}
 Let $Q \in \CP$ such that $Q \cap P^{\pi} \neq \emptyset$.
 This means there is some $(v_1,\dots,v_k) \in P$ such that $(v_{\pi(1)},\dots,v_{\pi(k)}) \in Q$.
 Let $(w_1,\dots,w_k) \in P$ be another tuple.
 Then $(v_1,\dots,v_k) \sim_\CP (w_1,\dots,w_k)$ and thus, $(v_{\pi(1)},\dots,v_{\pi(k)}) \sim_\CP (w_{\pi(1)},\dots,w_{\pi(k)})$ by Equation \eqref{eq:shuffle}.
 Since $(v_{\pi(1)},\dots,v_{\pi(k)}) \in Q$, it follows that $(w_{\pi(1)},\dots,w_{\pi(k)}) \in Q$.
 So $P^{\pi} \subseteq Q$.

 By the same argument, $Q^{\pi^{-1}} \subseteq P$ which implies that $Q \subseteq P^{\pi}$.
 Together, this means that $P^{\pi} = Q \in \CP$.
\end{proof}

We say a coloring $\chi\colon V^k \rightarrow C$ of $k$-tuples is \emph{compatible with equality} if the corresponding partition $\CP$ into color classes is compatible with equality.
Similarly, $\chi$ is \emph{shufflable} if $\CP$ is shufflable.

Recall that $\refWL{k}{\chi}$ denotes the coloring obtained from $\chi$ after applying a single refinement round of $k$-WL.

\begin{theorem}
 \label{thm:upper-bound}
 Let $V$ be a finite set of size $n \coloneqq |V|$.
 Also let $\chi_0,\dots,\chi_\ell \colon V^k \rightarrow C$ be a sequence of colorings such that
 \begin{enumerate}[label = (\Roman*)]
  \item\label{item:upper-bound-shufflable} $\chi_t$ is shufflable and compatible with equality for all $t \in [0,\ell]$,
  \item\label{item:upper-bound-wl} $\refWL{k}{\chi_{t-1}} \succeq \chi_t$ for all $t \in [\ell]$, and
  \item\label{item:upper-bound-strict} $\chi_{t-1} \succ \chi_t$ for all $t \in [\ell]$.
 \end{enumerate}
 Then $\ell \leq 2n^{k-1}(\ceil{k\log n} + 1) = O(kn^{k-1}\log n)$.
\end{theorem}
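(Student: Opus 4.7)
The plan is to generalize the matrix-algebra argument of Lichter, Ponomarenko and Schweitzer \cite{LichterPS19} from $2$-tensors to $k$-tensors. For each $t \in [0,\ell]$ and each color $c\in\im\chi_t$, let $M_{t,c}\in\Complex^{V^k}$ be the $\{0,1\}$-indicator tensor of the $c$-colored tuples, and write $V^{(t)}\coloneqq\spn\{M_{t,c}\mid c\in\im\chi_t\}$; by \ref{item:upper-bound-strict} the dimensions $\dim V^{(t)}=|\im\chi_t|$ are strictly increasing. The main task is to equip $\Complex^{V^k}$ with an associative product $\star$ and a $*$-operation so that the $V^{(t)}$ generate a nested chain of semisimple $*$-subalgebras of $\FullMatrixAlg_{n^{k-1}}(\Complex)$.

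First I would embed $\Complex^{V^k}$ into $\FullMatrixAlg_{n^{k-1}}(\Complex)$ by singling out the last coordinate: view a tensor $M$ as the block-diagonal $n^{k-1}\times n^{k-1}$ matrix whose block indexed by $(u_1,\dots,u_{k-2})\in V^{k-2}$ is the $n\times n$ slice $M(u_1,\dots,u_{k-2},\cdot,\cdot)$. Ordinary matrix multiplication of the embedded matrices then corresponds to
\[
(M\star N)(v_1,\dots,v_k) \;=\; \sum_{x\in V}\; M(v_1,\dots,v_{k-2},v_{k-1},x)\,N(v_1,\dots,v_{k-2},x,v_k),
\]
and conjugate transposition corresponds to $M^*(v_1,\dots,v_k)\coloneqq\overline{M(v_1,\dots,v_{k-2},v_k,v_{k-1})}$. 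Let $\Alg^{(t)}\subseteq\FullMatrixAlg_{n^{k-1}}(\Complex)$ be the $*$-algebra generated by the embedded $M_{t,c}$. Hypothesis \ref{item:upper-bound-shufflable} applied to the transposition $\pi=(k{-}1\;k)$, together with Observation \ref{obs:shuffle-bijection}, guarantees $(M_{t,c})^*=M_{t,c'}$ for some $c'\in\im\chi_t$, so $\Alg^{(t)}$ is closed under $*$ and hence semisimple.

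The second step is to establish $\Alg^{(0)}\subseteq\dots\subseteq\Alg^{(\ell)}$. Unfolding the definitions, $(M_{t-1,c}\star M_{t-1,c'})(\vec v)$ counts those $x\in V$ with $\chi_{t-1}(\vec v[x/k])=c$ and $\chi_{t-1}(\vec v[x/(k{-}1)])=c'$, which is a function of $\CM_{\chi_{t-1}}(\vec v)$ and hence of $(\refWL{k}{\chi_{t-1}})(\vec v)$; by \ref{item:upper-bound-wl} it is a function of $\chi_t(\vec v)$. Thus $M_{t-1,c}\star M_{t-1,c'}\in V^{(t)}\subseteq\Alg^{(t)}$, which combined with closure of $\Alg^{(t)}$ under $*$ gives the inclusion. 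Theorem \ref{thm:semisimple-sequence-length} then allows at most $2n^{k-1}$ strict inclusions, cutting the sequence $\chi_0,\dots,\chi_\ell$ into at most $2n^{k-1}$ \emph{plateaus} on each of which $\Alg^{(t)}$ is constant.

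The remaining task, and the main obstacle, is to bound the length of a single plateau by $\lceil k\log n\rceil+1$. On a plateau the strictly ascending chain $V^{(t)}\subsetneq V^{(t+1)}\subsetneq\dots$ all lives inside a fixed $*$-algebra $\Alg$; I would adapt the doubling argument from \cite{LichterPS19} by exploiting that shufflability makes $\star$-contractions available along any pair of coordinates (not just $(k{-}1,k)$) and that equality-compatibility pins down the ``diagonal'' summand across these contractions. The upshot should be that each refinement step inside a plateau at least doubles a suitable invariant of $V^{(t)}$ --- morally, the number of minimal idempotents the span $V^{(t)}$ can distinguish inside $\Alg$ --- so a plateau lasts at most $O(\log n^k)=O(k\log n)$ steps. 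Multiplying the two bounds yields $\ell\le 2n^{k-1}(\lceil k\log n\rceil+1)$. The delicate point is precisely turning a single strict coloring refinement into a guaranteed \emph{doubling} inside the ambient algebra; this is where the factor $k$ (coming from the $k$ positions that a WL step processes) enters.
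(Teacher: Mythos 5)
Your setup and the first three steps track the paper's proof closely: the product $\star$ and the block-diagonal embedding into $\FullMatrixAlg_{n^{k-1}}(\Complex)$ are exactly the paper's construction, closure under $^*$ via shufflability and the transposition $\pi=(k{-}1\;k)$ matches Observation~\ref{obs:partition-properties}, unitality from equality-compatibility matches the same observation, and capping the number of strict inclusions in the chain $\Alg^{(0)}\subseteq\dots\subseteq\Alg^{(\ell)}$ at $2n^{k-1}$ via Theorem~\ref{thm:semisimple-sequence-length} is Corollary~\ref{cor:semisimple-length}. Your computation that $M_{t-1,c}\star M_{t-1,c'}$ is determined by $\CM_{\chi_{t-1}}(\vec v)$, hence constant on $\chi_t$-classes, so that $V^{(t-1)}\star V^{(t-1)}\subseteq V^{(t)}$, is also correct --- and it is exactly the lever you need. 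But the way you then use it (``each refinement step doubles the number of minimal idempotents the span can distinguish'', aided by ``contractions along any pair of coordinates'') is not an argument, and this is the whole content of the plateau bound. The factor $k$ also does not come from ``the $k$ positions a WL step processes''; it comes from $\log(n^k)=k\log n$.

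Here is how to turn your observation into the missing step, staying entirely inside the algebra. Iterating $V^{(s)}\star V^{(s)}\subseteq V^{(s+1)}$ and splitting a product into two halves shows by induction that every monomial $M_{t,c_1}\star\cdots\star M_{t,c_m}$ with $m\le 2^j$ lies in $V^{(t+j)}$. On the other hand, the spans $U_1\subseteq U_2\subseteq\cdots$ of monomials in the generators of $\Alg^{(t)}$ of length at most $m$ form an increasing chain inside $\Complex^{V^k}$ that stabilizes once two consecutive terms agree, hence by $m=n^k$; so monomials of length at most $n^k$ already span $\Alg^{(t)}$ (this is Lemma~\ref{la:multiplication-length}). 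Taking $j=\ceil{k\log n}$ gives $\Alg^{(t)}\subseteq V^{(t+\ceil{k\log n})}=\spn(C^{(t+\ceil{k\log n})})$. One more strict refinement step from Condition~\ref{item:upper-bound-strict} and Lemma~\ref{la:partiton-inclusion-to-algebra} gives $\spn(C^{(t+\ceil{k\log n})})\subsetneq\spn(C^{(t+\ceil{k\log n}+1)})\subseteq\Alg^{(t+\ceil{k\log n}+1)}$, so $\Alg^{(t)}\subsetneq\Alg^{(t+\ceil{k\log n}+1)}$ whenever $t+\ceil{k\log n}+1\le\ell$, and $\ell\le 2n^{k-1}(\ceil{k\log n}+1)$ follows.

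It is worth noting that the paper arrives at $\Alg^{(t)}\subseteq\spn(C^{(t+\ceil{k\log n})})$ by a detour through logic: it performs the same binary splitting of a length-$\le n^k$ monomial, but packages the result as an $\LCk{k+1}$-formula of quantifier rank $\ceil{k\log n}$ distinguishing the relevant tuples (Lemma~\ref{la:multiplication-in-logic}), and then translates back to coloring refinement via the WL--logic correspondence of Theorem~\ref{thm:wl-logic} (Claim~\ref{claim:multiplication-logic}). Your purely algebraic route, once the iteration above is spelled out, proves the same inclusion more directly and is arguably simpler; but as written, the proposal stops just short of it.
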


Note that Theorem \ref{thm:wl-round-upper-bound} immediately follows from Theorem \ref{thm:upper-bound} by observing that all colorings $\WLit{k}{i}{\FA}$ obtained from the refinement process of $k$-WL are shufflable and compatible with equality.

The proof of Theorem \ref{thm:upper-bound} relies on algebraic tools.
Let $V$ be a finite set of size $n \coloneqq |V|$.
We define a multiplication on the space $\Complex^{V^k}$ by 
\begin{equation}
 (\vec a\cdot\vec b)(v_1,\ldots,v_k)\coloneqq\sum_{v\in V}\vec a(v_1,\ldots,v_{k-1},v)\vec b(v_1,\ldots,v_{k-2},v,v_k)
\end{equation}
for all $\vec a,\vec b\in\Complex^{V^k}$.
Note that this multiplication is associative and has a unit $\vec 1$, defined by
\[\vec 1(v_1,\ldots,v_k) \coloneqq \begin{cases}
                                    1&\text{if }v_{k-1}=v_k,\\
                                    0&\text{otherwise.}
                                   \end{cases}
\]
Furthermore, the multiplication is compatible with the vector space structure.
Hence, it defines an algebra which we denote by $\Alg$.

With every $\vec a \in \Complex^{V^k}$ we associate a matrix $M_{\vec a}\in\Complex^{V^{k-1}\times V^{k-1}}$ with entries
\begin{align*}
 &M_{\vec a}\big((v_1,\dots,v_{k-1}),(w_1,\dots,w_{k-1})\big) \coloneqq\\
 &\quad\quad
 \begin{cases}
  \vec a(v_1,\dots,v_{k-2},v_{k-1},w_{k-1}) &\text{if }v_i=w_i\text{ for all }i\in[k-2],\\
  0                                         &\text{otherwise}.
 \end{cases}
\end{align*}

It is easy to see that the mapping $\vec a\mapsto M_{\vec a}$ is injective and linear.
Moreover, it is compatible with multiplication:
\begin{align*}
 &M_{\vec a}\cdot M_{\vec b}\big((v_1,\ldots,v_{k-1}),(w_1,\ldots,w_{k-1})\big)\\
 &= \sum_{u_1,\ldots,u_{k-1}\in V}M_{\vec a}\big((v_1,\ldots,v_{k-1}),(u_1,\ldots,u_{k-1})\big) M_{\vec b}\big((u_1,\ldots,u_{k-1}),(w_1,\ldots,w_{k-1})\big)\\
 &= \begin{cases}
     \sum_{u\in V}\vec a(v_1,\ldots,v_{k-2},v_{k-1},u) \vec b(v_1,\ldots,v_{k-2},u,w_{k-1}) &\text{if } v_i=w_i\text{ for all }i\in[k-2],\\
     0                                                                                      &\text{otherwise}
    \end{cases}\\
 &= \begin{cases}
     (\vec a\cdot\vec b)(v_1,\ldots,v_{k-1},w_{k-1}) &\text{if } v_i=w_i\text{ for all }i\in[k-2],\\
     0                                               &\text{otherwise}
    \end{cases}\\
 &= M_{\vec a\cdot\vec b}\big((v_1,\ldots,v_{k-1}),(w_1,\ldots,w_{k-1})\big).
\end{align*}
And finally, $M_{\vec 1}$ is the identity matrix.
Thus, $\Alg$ is isomorphic to a subalgebra of the $n^{k-1}\times n^{k-1}$-dimensional matrix algebra $\Complex^{V^{k-1}\times V^{k-1}}$.

For every $\vec a\in\Alg$ we define $\vec a^*\in\Alg$ by
\[\vec a^*(v_1,\ldots,v_k) \coloneqq \overline{\vec a(v_1,\ldots,v_{k-2},v_k,v_{k-1})}\]
(here, $\overline{c}$ denotes the complex conjugate of a number $c \in \Complex$, i.e., if $c = a + bi$ then $\overline{c} = a - bi$).
Then $M_{\vec a^*} = (M_{\vec a})^*$ (the conjugate transpose).
Thus, $^*$ is an involution on $\Alg$ compatible with the algebra structure, which turns $\Alg$ into a $*$-algebra.

Since $\Alg$ is isomorphic to a subalgebra of $\FullMatrixAlg_{n^{k-1}}(\Complex)$ which is closed under conjugate transposition, we conclude that $\Alg$ is semisimple.
Moreover, Theorem \ref{thm:semisimple-sequence-length} implies the following corollary.

\begin{corollary}
 \label{cor:semisimple-length}
 Let $\Alg^{(1)} \subset \dots \subset \Alg^{(\ell)} \subseteq \Alg$ be a sequence of semisimple strict subalgebras of $\Alg$.
 Then $\ell \leq 2n^{k-1}$.
\end{corollary}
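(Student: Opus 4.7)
The plan is to transport the chain into the full matrix algebra $\FullMatrixAlg_{n^{k-1}}(\Complex)$ via the embedding $\vec a \mapsto M_{\vec a}$ constructed above, and then apply Theorem \ref{thm:semisimple-sequence-length} directly. Concretely, I would first observe that the verification preceding the corollary already established that $\vec a \mapsto M_{\vec a}$ is injective, $\Complex$-linear, compatible with the defined multiplication on $\Alg$, and sends the unit $\vec 1$ to the identity matrix. Consequently, this map is an injective $\Complex$-algebra homomorphism from $\Alg$ into $\FullMatrixAlg_{n^{k-1}}(\Complex)$, i.e., an isomorphism of $\Alg$ onto its image.

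Given the chain $\Alg^{(1)} \subset \dots \subset \Alg^{(\ell)} \subseteq \Alg$, I would then consider the images $\widetilde{\Alg}^{(i)} \coloneqq \{M_{\vec a} \mid \vec a \in \Alg^{(i)}\} \subseteq \FullMatrixAlg_{n^{k-1}}(\Complex)$. Since the embedding is an algebra homomorphism, each $\widetilde{\Alg}^{(i)}$ is a subalgebra of $\FullMatrixAlg_{n^{k-1}}(\Complex)$; since the embedding is injective, the strict inclusions are preserved, so $\widetilde{\Alg}^{(1)} \subset \dots \subset \widetilde{\Alg}^{(\ell)}$. Semisimplicity is an intrinsic property of a $\Complex$-algebra invariant under isomorphism, so each $\widetilde{\Alg}^{(i)}$ is semisimple as well.

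With the chain now sitting inside $\FullMatrixAlg_{n^{k-1}}(\Complex)$, the final step is to apply Theorem \ref{thm:semisimple-sequence-length} with $d = n^{k-1}$, which immediately yields $\ell \leq 2n^{k-1}$. There is no real obstacle here: the technical work was done in constructing the embedding and checking its algebraic properties, and all that remains is to note that strict inclusions and semisimplicity transfer along an algebra isomorphism onto a subalgebra of the full matrix algebra of the appropriate dimension.
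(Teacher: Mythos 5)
Your proof is correct and matches the paper's intended argument: the preceding discussion constructs the injective algebra homomorphism $\vec a \mapsto M_{\vec a}$ precisely so that the chain of semisimple subalgebras of $\Alg$ can be transported isomorphically into $\FullMatrixAlg_{n^{k-1}}(\Complex)$, after which Theorem~\ref{thm:semisimple-sequence-length} applies directly with $d = n^{k-1}$. There is nothing to add.
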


We wish to use the last corollary to obtain an upper bound on the length of the coloring sequence $\chi_0,\dots,\chi_\ell$ in Theorem \ref{thm:upper-bound}.
Towards this end, we associate with every coloring $\chi_t$ (or the corresponding partition into color classes) a subalgebra of $\Alg$ as follows.

For every subset $A\subseteq \Complex^{V^k}$, we let $\spn(A)$ be the linear subspace of $\Complex^{V^k}$ generated by $A$,
and we let $\angles{A}$ be the closure of $\spn(A)$ under multiplication.
If $\vec 1\in\angles{A}$, then $\angles{A}$ is a subalgebra of $\Alg$.
As indicated above, we are interested in subalgebras of $\Alg$ generated by partitions of the set $V^k$ in the way explained next.

For every subset $P\subseteq V^k$, we define
\[
 \vec c_P(\vec v)\coloneqq
 \begin{cases}
  1 &\text{if }\vec v\in P,\\
  0 &\text{otherwise}
 \end{cases}
\]
to be the characteristic vector of $P$.
For a partition $\CP$ of $V^k$, we let $C_{\CP}\coloneqq\{\vec c_P\mid P\in\CP\}$ and $\Alg_{\CP}\coloneqq\angles{C_{\CP}}$.
If $\vec 1\in \Alg_{\CP}$, then $\Alg_\CP$ is a subalgebra of $\Alg$.

\begin{lemma}
 \label{la:partiton-inclusion-to-algebra}
 Let $\CP$ and $\CQ$ be partitions of $V^k$ such that $\CQ$ strictly refines $\CP$.
 Then $\spn(C_{\CP})\subset\spn(C_{\CQ})$ and $\Alg_{\CP}\subseteq\Alg_{\CQ}$.
\end{lemma}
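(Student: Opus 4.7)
The plan is to handle the span inclusion first and then deduce the algebra inclusion essentially for free. Since $\CQ$ refines $\CP$, every block $P \in \CP$ can be written as a disjoint union $P = \bigcup_{Q \in \CQ, Q \subseteq P} Q$, which immediately gives the pointwise identity
\[
\vec c_P = \sum_{Q \in \CQ,\, Q \subseteq P} \vec c_Q.
\]
This shows $C_{\CP} \subseteq \spn(C_{\CQ})$, and hence $\spn(C_{\CP}) \subseteq \spn(C_{\CQ})$.

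For the strictness of this inclusion, I would argue by dimension. The vectors $\{\vec c_Q : Q \in \CQ\}$ have pairwise disjoint supports and are all nonzero, so they are linearly independent; the same is true of $\{\vec c_P : P \in \CP\}$. Thus $\dim \spn(C_{\CQ}) = |\CQ|$ and $\dim \spn(C_{\CP}) = |\CP|$. Since $\CQ$ strictly refines $\CP$, at least one $P \in \CP$ splits into two or more blocks of $\CQ$, so $|\CQ| > |\CP|$, and the inclusion of spans is strict.

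For the algebra part, the plan is to use the universal property of $\angles{\cdot}$: it is the smallest subset of $\Complex^{V^k}$ containing its argument that is closed under linear combinations and multiplication. Since $C_{\CP} \subseteq \spn(C_{\CQ}) \subseteq \angles{C_{\CQ}} = \Alg_{\CQ}$, and $\Alg_{\CQ}$ is by definition closed under both operations, the closure $\angles{C_{\CP}} = \Alg_{\CP}$ of $C_{\CP}$ under these operations must already lie inside $\Alg_{\CQ}$.

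I don't expect any real obstacle here: this is a bookkeeping lemma whose content is just that characteristic vectors of finer partitions linearly generate characteristic vectors of coarser ones, and that $\angles{\cdot}$ is monotone with respect to set inclusion of generators into algebras. The only mild subtlety worth spelling out explicitly is the dimension comparison that gives strictness of the span inclusion — the statement asserts $\subset$ for spans but only $\subseteq$ for algebras, so no analogous strictness argument is needed on the algebra side.
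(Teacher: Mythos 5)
Your proof is correct and takes essentially the same approach as the paper. The only cosmetic difference is in the strictness argument: the paper exhibits an explicit witness (a $\vec c_Q$ with $Q \subset P$ that cannot lie in $\spn(C_\CP)$ since every element of $\spn(C_\CP)$ is constant on $P$), whereas you count dimensions via the linear independence of characteristic vectors of disjoint nonempty blocks; both are equally elementary and immediate.
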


\begin{proof}
 If $P \in \CP$ is the disjoint union of $Q_1,\ldots,Q_m \in \CQ$, then $\vec c_P = \sum_{i=1}^m \vec c_{Q_i}$.
 Thus, $C_{\CP} \subseteq \spn(C_{\CQ})$ and therefore $\spn(C_{\CP})\subseteq\spn(C_{\CQ})$.
 Moreover, there are $P\in\CP,Q\in\CQ$ such that $Q \subset P$.
 Then $\vec c_Q\not\in\spn(C_{\CP})$, because all $\vec a\in \spn(C_{\CP})$ are constant on $P$.
 Hence the inclusion is strict.
 
 The second assertion $\Alg_{\CP}\subseteq\Alg_{\CQ}$ follows immediately from the definitions of $\Alg_{\CP}$ and $\Alg_{\CQ}$.
\end{proof}

\begin{observation}
 \label{obs:partition-properties}
 Let $\CP$ be a partition of $V^k$.
 \begin{enumerate}[label = (\arabic*)]
  \item If $\CP$ is compatible with equality, then $\vec 1\in\Alg_\CP$ and hence $\Alg_{\CP}$ is a subalgebra of $\Alg$.
  \item If $\CP$ is shufflable, then $\Alg_\CP$ is closed under $^*$.
 \end{enumerate}
\end{observation}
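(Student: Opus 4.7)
The plan is to verify both claims by unwinding the definitions of $\Alg_\CP$ and of $^*$, and then invoking the structural hypotheses on $\CP$.

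For (1), I will exhibit $\vec 1$ explicitly as a finite sum of generators $\vec c_P$. Observe that, by definition, $\vec 1$ is precisely the characteristic vector of the set $E \coloneqq \{(v_1,\ldots,v_k) \in V^k \mid v_{k-1}=v_k\}$. Since $\CP$ is compatible with equality, whether $v_{k-1}=v_k$ holds is constant on each class $P \in \CP$, so $E$ is a disjoint union of some subcollection of classes. Therefore $\vec 1 = \sum_{P \in \CP,\,P\subseteq E} \vec c_P \in \spn(C_\CP)\subseteq \Alg_\CP$. This is exactly the condition that was still missing for $\Alg_\CP$ to be a subalgebra of $\Alg$; closure under addition and multiplication is built into the definition $\Alg_\CP = \angles{C_\CP}$.

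For (2), I will first check that $^*$ sends each generator $\vec c_P$ back into $C_\CP$, and then lift this to all of $\Alg_\CP$. Unwinding the definition of $^*$ on a $\{0,1\}$-valued vector yields $\vec c_P^*(v_1,\ldots,v_k)=\vec c_P(v_1,\ldots,v_{k-2},v_k,v_{k-1})$, so $\vec c_P^* = \vec c_{P^\tau}$ where $\tau \in S_k$ is the transposition swapping the last two indices. Applying Observation \ref{obs:shuffle-bijection} to the bijection $\tau$ gives $P^\tau \in \CP$, hence $\vec c_P^*\in C_\CP\subseteq\Alg_\CP$. To extend this to all of $\Alg_\CP$, I will use that $^*$ is conjugate-linear (so $\spn(C_\CP)$ is closed under $^*$) together with $(\vec a\vec b)^*=\vec b^*\vec a^*$, which jointly imply that the multiplicative closure $\angles{C_\CP}=\Alg_\CP$ is closed under $^*$ as well.

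The only step requiring any real care is the identification $\vec c_P^* = \vec c_{P^\tau}$ in part (2): one must keep track of the direction of the index permutation between the definitions of $^*$ and of $P^\pi$ to confirm that $\tau$ (and not its inverse, which here happens to coincide with $\tau$ since $\tau$ is a transposition) is the bijection that should be fed into Observation \ref{obs:shuffle-bijection}. Beyond this bit of bookkeeping, both assertions are immediate from the definitions, and I do not anticipate any deeper obstacle.
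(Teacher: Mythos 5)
Your proof is correct and matches the paper's argument essentially step for step: you recover $\vec 1$ as a sum of characteristic vectors of the equality-constant classes, and you identify $\vec c_P^* = \vec c_{P^\tau}$ for the last-two-coordinate transposition $\tau$ and invoke Observation~\ref{obs:shuffle-bijection}. The only difference is that you spell out the (standard) propagation from closedness of the generators under $^*$ to closedness of $\angles{C_\CP}$, which the paper leaves implicit.
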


\begin{proof}
 Suppose that $\CP$ is compatible with equality.
 Then there is some $\CQ \subseteq \CP$ such that $\vec 1 = \sum_{P \in \CQ} \vec c_P$.
 Hence, $\vec 1 \in \Alg_\CP$.

 Next, suppose that $\CP$ is shufflable.
 Consider the bijection $\pi\colon [k] \rightarrow [k]$ for which $\pi(i) = i$ for all $i \in [k-2]$, $\pi(k-1) = k$ and $\pi(k) = k-1$.
 Then $\vec c_P^* = \vec c_{P^\pi}$ for every $P \in \CP$.
 Using Observation \ref{obs:shuffle-bijection}, it follows that $\vec c_P^* \in C_\CP$ which implies that $\Alg_\CP$ is closed under $^*$.
\end{proof}

\begin{corollary}
 \label{cor:partition-algebra-semisimple}
 Let $\CP$ be a partition of $V^k$ that is shufflable and compatible with equality.
 Then $\Alg_\CP$ is a $*$-subalgebra of $\Alg$.
 In particular, $\Alg_\CP$ is semisimple.
\end{corollary}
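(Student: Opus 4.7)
The plan is to derive both assertions directly from Observation \ref{obs:partition-properties} together with the semisimplicity fact stated in Section \ref{sec:preliminaries}.

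First I would establish the $*$-subalgebra claim by combining the two parts of Observation \ref{obs:partition-properties}: compatibility with equality yields $\vec 1 \in \Alg_\CP$, so $\Alg_\CP$ is a subalgebra of $\Alg$, and shufflability yields that $\Alg_\CP$ is closed under the involution $^*$. Hence $\Alg_\CP$ is a $*$-subalgebra of $\Alg$.

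For the second assertion, I would transport the situation to the full matrix algebra $\FullMatrixAlg_{n^{k-1}}(\Complex)$ via the map $\vec a \mapsto M_{\vec a}$ introduced earlier in the section. The paper has already verified that this map is injective, linear, multiplicative, sends $\vec 1$ to the identity, and satisfies $M_{\vec a^*} = (M_{\vec a})^*$; thus it is a $*$-algebra isomorphism onto its image. The image of $\Alg_\CP$ is therefore a subalgebra of $\FullMatrixAlg_{n^{k-1}}(\Complex)$ closed under conjugate transposition, and such a matrix algebra is semisimple by the fact recalled in the preliminaries (if $M$ were in the Jacobson radical, then so would be $M^*M$, which is Hermitian hence diagonalizable and also nilpotent, forcing $M^*M = 0$ and then $M = 0$). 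Since semisimplicity is preserved by $*$-algebra isomorphism, $\Alg_\CP$ itself is semisimple.

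I do not expect any real obstacle here, since every ingredient has been set up in the preceding text. The only mild subtlety worth spelling out is that $\vec a \mapsto M_{\vec a}$ is a $*$-algebra isomorphism onto its image, not merely an algebra embedding, so that both the algebra structure and the $^*$-closure of $\Alg_\CP$ transfer simultaneously and the standard radical argument applies to the image without additional work.
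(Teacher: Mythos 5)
Your proposal is correct and follows exactly the route the paper intends: part~(1) and part~(2) of Observation~\ref{obs:partition-properties} give the $*$-subalgebra claim, and semisimplicity then follows by transporting $\Alg_\CP$ through the $*$-embedding $\vec a\mapsto M_{\vec a}$ into $\FullMatrixAlg_{n^{k-1}}(\Complex)$ and invoking the radical argument from the preliminaries. The paper leaves this corollary without an explicit proof precisely because every ingredient is already in place, as you observed.
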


Recall that our goal is to bound the length of the color sequence $\chi_0,\dots,\chi_\ell$ in Theorem \ref{thm:upper-bound}.
We associate a $*$-subalgebra $\Alg^{(t)}$ of $\Alg$ with every coloring $\chi_t$ by considering the corresponding partition into color classes.
The last corollary implies that $\Alg^{(t)}$ is semisimple for every $t \in [0,\ell]$.
So, to be able to apply Corollary \ref{cor:semisimple-length} to bound the length of the sequence of subalgebras, it remains to argue that inclusions between successive subalgebras are strict.
Actually, this is not true in general, but we can prove that only a small number of successive algebras can be equal.

We say that $\vec a\in\Complex^{V^k}$ \emph{distinguishes} $\vec v,\vec w\in V^{k}$ if $\vec a(\vec v) \neq \vec a(\vec w)$,
and we say that $A\subseteq\Complex^{V^k}$ distinguishes $\vec v,\vec w$ if some $\vec a\in A$ distinguishes them.

\begin{lemma}
 \label{la:multiplication-length}
 Let $A\subseteq\Complex^{V^k}$ and $\vec v,\vec w \in V^{k}$ such that $\angles{A}$ distinguishes $\vec v,\vec w$.
 Then there are an $s \le n^{k}$ and $\vec a_1,\ldots,\vec a_s\in A$ such that $\vec a_1\cdots\vec a_s$ distinguishes $\vec v,\vec w$.
\end{lemma}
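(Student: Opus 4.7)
The plan is to filter $\angles{A}$ by the maximum length of a product appearing, and then apply a standard dimension-stabilization argument inside $\Complex^{V^k}$, which has dimension exactly $n^k$. Concretely, for $j \ge 1$ I would set
\[ U_j \coloneqq \spn\{ \vec a_1 \cdots \vec a_t : 1 \le t \le j,\ \vec a_i \in A \}, \]
so that $U_1 \subseteq U_2 \subseteq \cdots \subseteq \angles{A}$, and $\bigcup_{j \ge 1} U_j = \angles{A}$, because by multilinearity any product of elements of $\spn(A)$ expands into a linear combination of products of elements of $A$.

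The core step will be the claim that once the filtration stops growing, it stops forever: if $U_{j+1} = U_j$, then $U_{j'} = U_j$ for all $j' \ge j$. To prove this, I would first note that $U_{j+1} = U_j$ says exactly that $\vec a \cdot \vec b \in U_j$ for every $\vec a \in A$ and every product $\vec b$ of length at most $j$, and this extends by linearity to every $\vec b \in U_j$. An induction on $j' \ge j$ then handles longer products via the associative decomposition $\vec a_1 \cdots \vec a_{j'+1} = \vec a_1 \cdot (\vec a_2 \cdots \vec a_{j'+1})$. In particular, stabilization at step $j$ forces $U_j = \angles{A}$.

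Since every $U_j$ sits inside the $n^k$-dimensional space $\Complex^{V^k}$ and each strict inclusion contributes at least one dimension, the chain must stabilize by step $j = n^k$, so $U_{n^k} = \angles{A}$ (the degenerate case $A \subseteq \{\vec 0\}$ is vacuous since then $\angles{A}$ distinguishes nothing). Finally, I would pick $\vec c \in U_{n^k}$ with $\vec c(\vec v) \neq \vec c(\vec w)$ and expand $\vec c = \sum_i \lambda_i\, \vec a^{(i)}_1 \cdots \vec a^{(i)}_{s_i}$ with each $s_i \le n^k$; by linearity at least one term $\vec a^{(i)}_1 \cdots \vec a^{(i)}_{s_i}$ must itself distinguish $\vec v$ and $\vec w$, providing the required $s \le n^k$.

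I do not anticipate a serious obstacle: the only subtle point is the ``once stable, always stable'' induction, and the rest is a routine dimension count inside $\Complex^{V^k}$, which is precisely why the bound $n^k$ arises naturally.
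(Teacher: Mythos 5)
Your proposal is correct and takes essentially the same route as the paper: the paper likewise observes that $\angles{A}$ is spanned by monomials and invokes the dimension bound $\dim \Complex^{V^k} = n^k$ to cap the monomial length at $n^k$, then pulls out a single distinguishing monomial from a linear combination. You simply spell out the filtration $U_1 \subseteq U_2 \subseteq \cdots$ and the "once stable, always stable" induction that the paper's terse phrase "we only need to consider such monomials for $s \le n^k$" leaves implicit, which is a fine and correct elaboration.
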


\begin{proof}
 As a linear subspace of $\Complex^{V^k}$, the space $\angles{A}$ consists of finite linear combinations of ``monomials'' $\vec a_1\cdots\vec a_s$ for $\vec a_i\in A$.
 Since the dimension of the space is at most $n^k$, we only need to consider such monomials for $s \le n^k$.
 Hence $\vec v,\vec w$ are distinguished by a linear combination
 \[\sum_{i=1}^m\lambda_i\vec a_{i1}\cdots\vec a_{is_i}\]
 with $\lambda_i\in\Complex$, $\vec a_{ij}\in A$, and $s_i \le n^k$.
 This immediately implies that $\vec v,\vec w$ are distinguished by $\vec a_{i1}\cdots\vec a_{is_i}$ for some $i \in [m]$.
\end{proof}

With every partition $\CP=\{P_1,\ldots,P_m\}$ we associate a relational structure $(V,R_1^{\CP},\ldots,R_m^{\CP})$ whose vocabulary consists of $k$-ary relation symbols $R_i$ interpreted by $R_i^{\CP}=P_i$
(to uniquely define the associated structure, we fix an arbitrary order on the blocks $P_1,\dots,P_m$).
Slightly abusing notation, we denote this structure by $\CP$ as well.
We say that a formula $\varphi(\vec x)$ \emph{distinguishes $\vec v,\vec w\in V^{k}$ over $\CP$} if
\[\CP\models\varphi(\vec v)\quad\iff\quad\CP\not\models\varphi(\vec w).\]
Recall that $\LCkq{k+1}{q}$ denotes the fragment of first-order logic with counting consisting of all formulas of quantifier rank at most $q$ with at most $k+1$ variables.

\begin{lemma}
 \label{la:multiplication-in-logic}
 Let $\CP$ be a partition of $V^k$ and let $\vec v,\vec w \in V^{k}$ such that $\Alg_{\CP}$ distinguishes $\vec v$ and $\vec w$.
 Then there is a formula $\varphi(\vec x)\in\LCkq{k+1}{q}$ of quantifier rank $q\le \ceil{k\log n}$ that distinguishes $\vec v,\vec w$ over $\CP$.
\end{lemma}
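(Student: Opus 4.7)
The plan is to transfer the distinguishing power of $\Alg_\CP$ into the logic $\LCk{k+1}$ via Theorem~\ref{thm:wl-logic}. Regard $\CP$ as a relational structure in the way already described before the statement of the lemma, and set $\chi^{(r)} \coloneqq \WLit{k}{r}{\CP}$. By Lemma~\ref{la:multiplication-length}, there is a monomial $\vec c_{P_1}\cdots\vec c_{P_s}$ of length $s \le n^k$ distinguishing $\vec v$ from $\vec w$. I will show that $\chi^{(q)}(\vec v) \neq \chi^{(q)}(\vec w)$ for $q \coloneqq \lceil\log_2 s\rceil \le \lceil k\log n\rceil$, and then invoke Theorem~\ref{thm:wl-logic} to extract a distinguishing formula in $\LCkq{k+1}{q}$.

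The bridge is the following claim, which I would prove by induction on $r \geq 0$: for every $r \ge 0$, every $t \le 2^r$, every $Q_1,\dots,Q_t \in \CP$, and every $\vec v',\vec w' \in V^k$,
\[
\chi^{(r)}(\vec v') = \chi^{(r)}(\vec w') \;\implies\; (\vec c_{Q_1}\cdots\vec c_{Q_t})(\vec v') = (\vec c_{Q_1}\cdots\vec c_{Q_t})(\vec w').
\]
The base $r=0$ is immediate: $\chi^{(0)}(\vec v') = \atp_\CP(\vec v')$ determines which (unique) block of $\CP$ contains $\vec v'$, and $\vec c_{Q_1}$ depends only on this. For the inductive step, if $t \le 2^r$ the claim follows from the induction hypothesis because $\chi^{(r+1)}$ refines $\chi^{(r)}$. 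Otherwise write $t = t_1 + t_2$ with $t_1 = 2^r$ and $t_2 = t - 2^r \le 2^r$, put $\vec a \coloneqq \vec c_{Q_1}\cdots\vec c_{Q_{t_1}}$ and $\vec b \coloneqq \vec c_{Q_{t_1+1}}\cdots\vec c_{Q_t}$, and use the defining multiplication
\[
(\vec a \vec b)(\vec v') = \sum_{u\in V}\vec a(\vec v'[u/k])\,\vec b(\vec v'[u/(k-1)]).
\]
The assumption $\chi^{(r+1)}(\vec v') = \chi^{(r+1)}(\vec w')$ forces the multisets $\CM_{\chi^{(r)}}(\vec v')$ and $\CM_{\chi^{(r)}}(\vec w')$ to coincide, yielding a bijection $\pi\colon V\to V$ with $\chi^{(r)}(\vec v'[u/i]) = \chi^{(r)}(\vec w'[\pi(u)/i])$ for every $u\in V$ and every $i\in[k]$. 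Applying the induction hypothesis to $\vec a$ at position $i=k$ and to $\vec b$ at position $i=k-1$, and then reindexing the sum via $\pi$, gives $(\vec a\vec b)(\vec v') = (\vec a\vec b)(\vec w')$, completing the induction.

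Putting the pieces together, the contrapositive of the claim with $r=q=\lceil\log_2 s\rceil$ shows that $\chi^{(q)}(\vec v) \neq \chi^{(q)}(\vec w)$, and since $s \le n^k$ we have $q \le \lceil k\log n\rceil$. Theorem~\ref{thm:wl-logic} then yields a formula $\varphi(\vec x)\in\LCkq{k+1}{q}$ with $\CP\models\varphi(\vec v)$ and $\CP\not\models\varphi(\vec w)$, i.e.\ $\varphi$ distinguishes $\vec v,\vec w$ over $\CP$. The one delicate point is to align the divide-and-conquer on monomials with a single $k$-WL refinement step: the multiplication formula only consults the two positions $k-1$ and $k$ of $\vec v'[u/\cdot]$, and the $k$-WL refinement indeed preserves the multiset of the \emph{full} tuples $(\chi^{(r)}(\vec v'[u/1]),\ldots,\chi^{(r)}(\vec v'[u/k]))$, so these two specific coordinates are controlled simultaneously by the same bijection $\pi$, which is precisely what the induction needs.
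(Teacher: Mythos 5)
Your proof is correct, and it takes a genuinely different route to the same bound. The paper proves the statement by induction on the monomial length $s$, explicitly building the distinguishing formula: at each level it splits the monomial in half, locates concrete complex values $b_1,b_2$ whose preimage counts differ, and wraps the recursively obtained formulas in a counting quantifier $\exists^{\ge p}x_{k+1}$. You instead prove, by induction on $r$, the dual statement on the coloring side: any two $k$-tuples that $\WLit{k}{r}{\CP}$ fails to separate agree on every monomial of length at most $2^r$. The inductive step uses the bijection $\pi\colon V\to V$ witnessing $\CM_{\chi^{(r)}}(\vec v')=\CM_{\chi^{(r)}}(\vec w')$ to match the two sums term by term, which neatly replaces the paper's explicit search for $b_1,b_2$ and the $p>q$ case analysis. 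You then invoke Theorem~\ref{thm:wl-logic} once to transfer $\chi^{(q)}(\vec v)\ne\chi^{(q)}(\vec w)$ into a formula in $\LCkq{k+1}{q}$. The core divide-and-conquer idea -- one algebra multiplication costs one WL round, equivalently one counting quantifier -- is identical in both proofs; the difference is that the paper re-derives (in effect) a special case of the WL/logic correspondence inline, while you use it as a black box. This makes your argument shorter and more modular, at the price of a dependence on Theorem~\ref{thm:wl-logic} that the paper's proof of this particular lemma does not have. One small point worth stating explicitly: your inductive claim for $t\le 2^r$ should be restricted to $t\ge 1$, since the empty product $\vec 1$ need not lie in $\Alg_\CP$ without the compatibility-with-equality hypothesis (which this lemma does not assume); this is harmless because the monomial from Lemma~\ref{la:multiplication-length} has $s\ge 1$ and the split $t=t_1+t_2$ with $t_1=2^r$ and $t_2=t-2^r\ge 1$ never produces an empty factor.
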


\begin{proof}
 Suppose that $\CP=\{P_1,\ldots,P_m\}$, and let $\vec c_i\coloneqq\vec c_{P_i}$.
 Then $\Alg_{\CP}=\angles{\{\vec c_1,\ldots,\vec c_m\}}$.
 Thus, by Lemma~\ref{la:multiplication-length}, there is an $s\le n^k$ and $i_1,\ldots,i_s\in[m]$ such that $\vec c_{i_1}\cdots \vec c_{i_s}$ distinguishes $\vec v,\vec w$.
 
 By induction on $s\ge 1$, we prove that if $\vec c_{i_1}\cdots \vec c_{i_s}$ distinguishes $\vec v,\vec w$,
 then there is a formula $\varphi(\vec x)\in\LCkq{k+1}{\ceil{\log s}}$ that distinguishes $\vec v,\vec w$.
 The assertion of the lemma follows.
 
 For the base step $s=1$, note that if $\vec c_i$ distinguishes $\vec v,\vec w$, then the atomic formula $R_i(\vec x)$ distinguishes $\vec v,\vec w$.

 For the inductive step, let $s \ge 2$.
 Suppose that $\vec b = \vec c_{i_1}\cdots \vec c_{i_s}$ distinguishes $\vec v,\vec w$.
 Let $r\coloneqq\ceil{s/2}$ and note that $r\le 2^{\ceil{\log s}-1}$ and therefore
 \[\ceil{\log r}\le \ceil{\log s}-1.\]
 Let $\vec b_1\coloneqq \vec c_{i_1}\cdots \vec c_{i_r}$ and $\vec b_2\coloneqq \vec c_{i_{r+1}}\cdots \vec c_{i_s}$.
 Then $\vec b = \vec b_1\cdot \vec b_2$.
 Suppose that $\vec v=(v_1,\ldots,v_k)$ and $\vec w=(w_1,\ldots,w_k)$.
 We have
  \begin{align*}
    \vec b(\vec v)
    &=\sum_{u\in V}
      \vec b_1(v_1,\ldots,v_{k-1},u) \cdot \vec b_2(v_1,\ldots,v_{k-2},u,v_k)\\            
    \neq\;\vec b(\vec w)
    &=\sum_{u\in V}
      \vec b_1(w_1,\ldots,w_{k-1},u) \cdot \vec b_2(w_1,\ldots,w_{k-2},u,w_k).
  \end{align*}
  Thus, there are $b_1,b_2\in\Complex$ such that
  \begin{align*}
    p\coloneqq\;&\Big|\Big\{u\in V\Bigmid
      \vec b_1(v_1,\ldots,v_{k-1},u)=b_1\text{ and
      }\vec b_2(v_1,\ldots,v_{k-2},u,v_k)=b_2\Big\}\Big|\\
    \neq\;&\Big|\Big\{u\in V\Bigmid
            \vec b_1(w_1,\ldots,w_{k-1},u)=b_1\text{ and
            }\vec b_2(w_1,\ldots,w_{k-2},u,w_k)=b_2\Big\}\Big|            \eqqcolon q.
  \end{align*}
  It follows from the induction hypothesis that for $i=1,2$ and for
  all $\vec v',\vec w'\in V^k$ such that $\vec b_i$ distinguishes $\vec
  v',\vec w'$ there is a formula $\psi_i^{\vec v',\vec
    w'}(\vec x) \in\LCkq{k+1}{\ceil{\log r}}$ that distinguishes $\vec v',\vec w'$. Without loss of
  generality,
  \[
    \CP\models \psi_i^{\vec v',\vec
      w'}(\vec v')\quad\text{and}\quad \CP\not\models \psi_i^{\vec v',\vec
      w'}(\vec w'),
  \]
  otherwise we replace $\psi_i^{\vec v',\vec
    w'}(\vec x)$ by its negation. Let $V_i\subseteq V^k$ be the set of
  all $\vec v'\in V^k$ such that $\vec b_i(\vec v')=b_i$ and let
  \[
    \varphi_i(\vec x)\coloneqq\bigvee_{\vec v'\in V_i}\bigwedge_{\vec w'\in V^k\setminus V_i}
    \psi_i^{\vec v',\vec
      w'}(\vec x).
  \]
  Then for all $\vec v'\in V^k$ we have
  \[
    \CP\models \varphi_i(\vec v')\iff
    \vec b_i(\vec v')=b_i.
  \]
  Without loss of generality we assume that $p>q$. Then the formula
  \[
     \varphi(x_1,\ldots,x_k)\coloneqq\exists^{\ge
       p}x_{k+1}\big(\varphi_1(x_1,\ldots,x_{k-1},x_{k+1})\wedge
     \varphi_2(x_1,\ldots,x_{k-2},x_{k+1},x_k)\big)\in
   \LCkq{k+1}{\ceil{\log s}}
  \]
  distinguishes $\vec v,\vec w$.
\end{proof}

We are now ready to prove Theorem \ref{thm:upper-bound}.

\begin{proof}[Proof of Theorem \ref{thm:upper-bound}]
 For every $t \in [0,\ell]$ let $\CP^{(t)}$ be the partition of $V^k$ into the color classes of $\chi_t$.
 
 \begin{claim}
  \label{claim:multiplication-logic}
  Let $t,q \geq 0$ such that $t+q \leq \ell$.
  Suppose that there is a formula $\varphi(\vec x)\in \LCkq{k+1}{q}$ that distinguishes $\vec v,\vec w\in V^k$ over $\CP^{(t)}$.
  Then $\vec v,\vec w$ belong to different classes of the partition $\CP^{(t+q)}$.
 \end{claim}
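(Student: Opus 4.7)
The plan is to prove the claim by induction on $q$, with $t$ universally quantified throughout. Since peeling off the outermost quantifier of the given formula can enlarge the number of free variables from $k$ to $k+1$, I would strengthen the induction to simultaneously cover formulas with up to $k+1$ free variables, evaluated on tuples of matching length. For $(k+1)$-tuples, the role of $\CP^{(t)}$ is played by an auxiliary partition $\CP^{(t)}_{+}$ of $V^{k+1}$ that declares two $(k+1)$-tuples equivalent iff, for every map $\sigma\colon [k] \to [k+1]$, their $\sigma$-indexed $k$-subtuples lie in a common block of $\CP^{(t)}$; one checks that $\CP^{(t)}_{+}$ inherits shufflability and compatibility with equality from $\CP^{(t)}$.

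For the base case $q = 0$, the formula is quantifier-free. Using the full shufflability assumption (property~\ref{item:upper-bound-shufflable}), which is phrased for arbitrary $\pi\colon [k] \to [k]$ including non-bijective ones, any atomic formula $R_i(x_{\pi(1)}, \ldots, x_{\pi(k)})$ or equality $x_a = x_b$ assigns the same truth value to $k$-tuples in the same block of $\CP^{(t)}$; the $(k+1)$-tuple analogue is immediate from the definition of $\CP^{(t)}_{+}$.

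For the inductive step, Boolean decomposition reduces to $\varphi = \exists^{\geq j} z\, \psi$ with $\qr(\psi) = q - 1$. Assuming $\vec v$ and $\vec w$ lie in the same block of $\CP^{(t+q)}$, property~\ref{item:upper-bound-wl} gives $\refWL{k}{\chi_{t+q-1}}(\vec v) = \refWL{k}{\chi_{t+q-1}}(\vec w)$; unpacking the multiset equality $\CM_{\chi_{t+q-1}}(\vec v) = \CM_{\chi_{t+q-1}}(\vec w)$ yields a bijection $\pi\colon V \to V$ such that for every $u \in V$ and every $i \in [k]$ the tuples $\vec v[u/i]$ and $\vec w[\pi(u)/i]$ lie in a common block of $\CP^{(t+q-1)}$. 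If $z$ rebinds some $x_i$ with $i \in [k]$, then $\psi$ still has at most $k$ free variables, and the $k$-tuple inductive hypothesis at rank $q - 1$ applied to the pairs $\vec v[u/i], \vec w[\pi(u)/i]$ makes the two counts defining $\exists^{\geq j}$ coincide through $\pi$, contradicting distinguishability.

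The hard part will be the remaining case, where $z$ is the $(k+1)$-st variable and $\psi$ has $k+1$ free variables. I would apply the $(k+1)$-tuple strand of the strengthened hypothesis to the pairs $(\vec v, u)$ and $(\vec w, \pi(u))$, which requires showing they lie in a common block of $\CP^{(t+q-1)}_{+}$: for every $\sigma\colon [k] \to [k+1]$, the $\sigma$-indexed $k$-subtuples of $(\vec v, u)$ and $(\vec w, \pi(u))$ must be $\CP^{(t+q-1)}$-equivalent. Maps $\sigma$ that avoid the last coordinate reduce via shufflability of $\CP^{(t+q-1)}$ to the fact that $\vec v$ and $\vec w$ share a $\CP^{(t+q-1)}$-block (which holds because $\CP^{(t+q)}$ refines $\CP^{(t+q-1)}$ by property~\ref{item:upper-bound-strict}). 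Maps $\sigma$ that hit the last coordinate reduce to the $\pi$-pairing $\vec v[u/i] \sim_{\CP^{(t+q-1)}} \vec w[\pi(u)/i]$, combined with further applications of shufflability to absorb coordinate repetitions or multiple hits of the $(k+1)$-st position; chaining these manipulations is the delicate part. Once this verification is complete, the inductive hypothesis yields agreement of the $j$-counts and closes the induction.
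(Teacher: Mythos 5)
Your proposal is correct in outline but takes a genuinely different and substantially longer route than the paper. The paper's proof is essentially three lines: it observes that shufflability together with compatibility with equality means $\chi_t \equiv \WLit{k}{0}{\CP^{(t)}}$, i.e., $\chi_t$ is exactly the atomic-type coloring of the relational structure $\CP^{(t)}$. Property~\ref{item:upper-bound-wl} then gives, by an easy induction, $\chi_{t+q} \preceq \WLit{k}{q}{\CP^{(t)}}$, and Theorem~\ref{thm:wl-logic} (the Cai--F\"urer--Immerman / Immerman--Lander correspondence) directly converts the hypothesized $\LCkq{k+1}{q}$-formula into a difference in the $q$-round $k$-WL colors on $\CP^{(t)}$. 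That correspondence is exactly what your bottom-up induction on quantifier rank re-derives from scratch, including the strengthening to $(k+1)$-tuples via $\CP^{(t)}_{+}$ and all the shufflability bookkeeping. Your outline does work: in the base case, the "arbitrary $\pi\colon[k]\to[k]$, not just bijections" reading of shufflability is precisely what makes atomic formulas with repeated arguments constant on blocks; and in the "delicate part" where $\sigma$ hits the $(k+1)$-st coordinate possibly several times, one has to pick the substitution index $i_0$ so that $i_0 \notin \sigma([k]\setminus \sigma^{-1}(k+1))$ before building the shuffle map, otherwise the shuffled copy of $\vec v[u/i_0]$ can place $u$ where you wanted $v_{i_0}$. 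Once that is fixed the induction closes (the $(k+1)$-tuple strand at rank $q$ only needs the $k$-tuple strand at rank $q$, while the $k$-tuple strand at rank $q$ needs the $(k+1)$-tuple strand at rank $q-1$, so the recursion is well-founded). So the difference is not one of correctness but of economy: recognizing that conditions~\ref{item:upper-bound-shufflable} and~\ref{item:upper-bound-wl} let you treat $\CP^{(t)}$ itself as the structure input to $k$-WL and then invoke the known WL-to-logic theorem buys you the whole inductive argument for free.
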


 \begin{claimproof}
  By Condition \ref{item:upper-bound-shufflable}, the partition $\CP^{(t)}$ is shufflable and compatible with equality.
  This implies that $\chi_t \equiv \WLit{k}{0}{\CP^{(t)}}$.
  Together with Condition \ref{item:upper-bound-wl}, we get that $\chi_{t + q} \preceq \WLit{k}{q}{\CP^{(t)}}$.
  
  Also, using Theorem \ref{thm:wl-logic}, we get that $\WLit{k}{q}{\CP^{(t)}}(\vec v) \neq \WLit{k}{q}{\CP^{(t)}}(\vec w)$.
  Overall, it follows that $\vec v,\vec w$ belong to different classes of the partition $\CP^{(t+q)}$.
 \end{claimproof}
 
 For every $t \in [0,\ell]$ we define $C^{(t)} \coloneqq C_{\CP^{(t)}}$ and $\Alg^{(t)} \coloneqq \Alg_{\CP^{(t)}}$.
 Note that $\Alg^{(t)}$ is a semisimple $*$-subalgebra of $\Alg$ by Condition \ref{item:upper-bound-shufflable} and Corollary \ref{cor:partition-algebra-semisimple}.
 By Lemma~\ref{la:partiton-inclusion-to-algebra}, we have
 \begin{equation}
  \label{eq:algebra-sequence}
  \Alg^{(0)}\subseteq\Alg^{(1)}\subseteq\ldots\subseteq\Alg^{(\ell)}\subseteq\Alg.
 \end{equation}
 
 \begin{claim}
  \label{claim:algebra-inclusion}
  For all $t \in [0,\ell - \ceil{k\log n}]$,
  \[\Alg^{(t)}\subseteq\spn(C^{(t+\ceil{k\log n})}).\]
 \end{claim}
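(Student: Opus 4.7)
The plan is to prove the inclusion by characterizing the span on the right-hand side pointwise. Note that $\spn(C^{(s)})$ consists exactly of those functions $\vec a \colon V^k \to \Complex$ that are constant on every block of $\CP^{(s)}$: any such function equals $\sum_{P \in \CP^{(s)}} \alpha_P \vec c_P$ for suitable scalars $\alpha_P$, and conversely every linear combination of the $\vec c_P$ has this form. So it suffices to show that every $\vec a \in \Alg^{(t)}$ is constant on each block of $\CP^{(t+\lceil k\log n\rceil)}$.

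I will prove the contrapositive. Fix $t \in [0,\ell - \lceil k\log n\rceil]$ and suppose $\vec a \in \Alg^{(t)} = \Alg_{\CP^{(t)}}$ distinguishes two tuples $\vec v,\vec w \in V^k$. By Lemma~\ref{la:multiplication-in-logic}, there is a formula $\varphi(\vec x) \in \LCkq{k+1}{q}$ of quantifier rank $q \leq \lceil k\log n\rceil$ that distinguishes $\vec v,\vec w$ over $\CP^{(t)}$. Since $t + q \leq t + \lceil k\log n\rceil \leq \ell$, Claim~\ref{claim:multiplication-logic} applies and yields that $\vec v,\vec w$ lie in different blocks of $\CP^{(t+q)}$. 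Since the sequence of colorings is monotonically refined (Conditions~\ref{item:upper-bound-wl} and~\ref{item:upper-bound-strict} give $\chi_{t+\lceil k\log n\rceil} \preceq \chi_{t+q}$), the partition $\CP^{(t+\lceil k\log n\rceil)}$ refines $\CP^{(t+q)}$, so $\vec v,\vec w$ also lie in different blocks of $\CP^{(t+\lceil k\log n\rceil)}$.

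This gives the desired statement: whenever $\vec a \in \Alg^{(t)}$ takes different values on $\vec v,\vec w$, the two tuples belong to different blocks of $\CP^{(t+\lceil k\log n\rceil)}$, which is precisely saying that $\vec a$ is constant on each block of $\CP^{(t+\lceil k\log n\rceil)}$, i.e., $\vec a \in \spn(C^{(t+\lceil k\log n\rceil)})$.

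There is no real obstacle here; the proof is a direct packaging of the two preceding claims. The only small subtlety is the appeal to Claim~\ref{claim:multiplication-logic} for quantifier rank $q$ possibly strictly less than $\lceil k\log n\rceil$, which is handled by the monotonicity of the refinement sequence, and the verification that the index $t+q$ still lies within $[0,\ell]$, which follows from the hypothesis on $t$.
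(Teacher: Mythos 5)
Your proof is correct and takes essentially the same approach as the paper: both reduce the claim to showing that every $\vec a \in \Alg^{(t)}$ is constant on the blocks of $\CP^{(t+\ceil{k\log n})}$ and then chain Lemma~\ref{la:multiplication-in-logic} with Claim~\ref{claim:multiplication-logic}. Your extra step passing from $\CP^{(t+q)}$ to $\CP^{(t+\ceil{k\log n})}$ via monotonicity of the refinement sequence is fine but not actually needed, since $\LCkq{k+1}{\ceil{k\log n}}$ already contains all formulas of quantifier rank $q \le \ceil{k\log n}$, so Claim~\ref{claim:multiplication-logic} can be applied directly with $q = \ceil{k\log n}$.
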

 
 \begin{claimproof}
  Let $\vec a\in\Alg^{(t)}$.
  By Lemma~\ref{la:multiplication-in-logic} and Claim~\ref{claim:multiplication-logic},
  for all $\vec v,\vec w\in V^k$, if $\vec a(\vec v)\neq\vec a(\vec w)$, that is, if $\vec a$ distinguishes $\vec v$ and $\vec w$,
  then $\vec v$ and $\vec w$ belong to different classes of the partition $\CP^{(t+\ceil{k\log n})}$.
  Thus, $\vec a$ is constant on each class of the partition $\CP^{(t+\ceil{k\log n})}$, which immediately implies that $\vec a$ can be written as a linear combination of the characteristic vectors $\vec c_P$ of the classes $P\in \CP^{(t+\ceil{k\log n})}$.
  This is the assertion of the claim.
 \end{claimproof}
 
 \begin{claim}
  \label{claim:algebra-inclusion-strict}
  For all $t \in [0,\ell-\ceil{k\log n} - 1]$,
  \[\Alg^{(t)} \subset \Alg^{(t+\ceil{k\log n}+1)}.\]
 \end{claim}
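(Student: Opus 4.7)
The plan is to chain together Claim~\ref{claim:algebra-inclusion} with the strict refinement assumption \ref{item:upper-bound-strict} and Lemma~\ref{la:partiton-inclusion-to-algebra}. Write $q \coloneqq \ceil{k\log n}$ for brevity. Claim~\ref{claim:algebra-inclusion} already gives
\[\Alg^{(t)} \subseteq \spn(C^{(t+q)}),\]
and the goal is to exhibit an element that lies in $\Alg^{(t+q+1)}$ but is missing from this containing span.

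First, I would note that the ascending chain \eqref{eq:algebra-sequence} automatically gives $\Alg^{(t)} \subseteq \Alg^{(t+q+1)}$, so only strictness needs to be argued. For strictness, apply Lemma~\ref{la:partiton-inclusion-to-algebra} to the partitions $\CP^{(t+q)}$ and $\CP^{(t+q+1)}$: condition \ref{item:upper-bound-strict} ensures $\CP^{(t+q+1)}$ strictly refines $\CP^{(t+q)}$, and the lemma therefore yields
\[\spn(C^{(t+q)}) \subset \spn(C^{(t+q+1)}).\]
Pick any witness $\vec a \in \spn(C^{(t+q+1)}) \setminus \spn(C^{(t+q)})$. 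By definition of $\Alg_{\CP}$ we have $\spn(C^{(t+q+1)}) \subseteq \Alg^{(t+q+1)}$, so $\vec a \in \Alg^{(t+q+1)}$. On the other hand, Claim~\ref{claim:algebra-inclusion} forces $\Alg^{(t)} \subseteq \spn(C^{(t+q)})$, and since $\vec a \notin \spn(C^{(t+q)})$, we conclude $\vec a \notin \Alg^{(t)}$. This witnesses the strict inclusion $\Alg^{(t)} \subset \Alg^{(t+q+1)}$.

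There is no real obstacle here; the claim is essentially a bookkeeping step that combines the previous claim with strict refinement. The only thing to be careful about is the direction of the refinement relation, since both $\spn(C^{(\cdot)})$ and $\Alg^{(\cdot)}$ grow with time, whereas $\Alg^{(t)}$ is only \emph{sandwiched} into $\spn(C^{(t+q)})$ via Claim~\ref{claim:algebra-inclusion}; the strictness accordingly has to be extracted from the gap between indices $t+q$ and $t+q+1$ rather than from between $\Alg^{(t)}$ and $\Alg^{(t+1)}$ directly (which is exactly why the ``$+1$'' appears in the exponent shift). Once Claim~\ref{claim:algebra-inclusion-strict} is established, applying Corollary~\ref{cor:semisimple-length} to every $(\ceil{k\log n}+1)$-th term of the chain will yield $\ell \leq 2n^{k-1}(\ceil{k\log n}+1)$, completing the proof of Theorem~\ref{thm:upper-bound}.
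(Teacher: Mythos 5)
Your proof is correct and follows exactly the paper's own argument: chain $\Alg^{(t)} \subseteq \spn(C^{(t+q)}) \subset \spn(C^{(t+q+1)}) \subseteq \Alg^{(t+q+1)}$ using Claim~\ref{claim:algebra-inclusion}, Condition~\ref{item:upper-bound-strict}, and Lemma~\ref{la:partiton-inclusion-to-algebra}. The explicit witness element $\vec a$ just unwinds the strict inclusion of spans; otherwise the reasoning is identical.
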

 
 \begin{claimproof}
  By Claim~\ref{claim:algebra-inclusion}, we have $\Alg^{(t)}\subseteq\spn(C^{(t+\ceil{k\log n})})$.
  Moreover, by Condition \ref{item:upper-bound-strict}, the partition $\CP^{(t+\ceil{k\log n}+1)}$ strictly refines the partition $\CP^{(t+\ceil{k\log n})}$.
  By Lemma~\ref{la:partiton-inclusion-to-algebra}, this implies
  $\spn(C^{(t+\ceil{k\log n})})\subset\spn(C^{(t+\ceil{k\log n}+1)})$.
  As $\spn(C^{(t+\ceil{k\log n}+1)})\subseteq\Alg^{(t+\ceil{k\log n}+1)}$, the assertion of the claim follows.
 \end{claimproof}
 
 Recall that by Corollary~\ref{cor:partition-algebra-semisimple}, the algebras $\Alg^{(t)}$ are semisimple.
 Thus, by Corollary \ref{cor:semisimple-length}, at most $2n^{k-1}$ of the inclusions in \eqref{eq:algebra-sequence} are strict.
 Then Claim~\ref{claim:algebra-inclusion-strict} implies
 \[\ell \leq 2n^{k-1}(\ceil{k\log n}+1) = O(kn^{k-1}\log n).\qedhere\]
\end{proof}

\section{Long Sequences of Stable Colorings}
\label{sec:long-sequences}
Next, we prove an almost matching lower bound for Theorem \ref{thm:upper-bound}, i.e., we prove that there are sequences of colorings $\chi_0,\dots,\chi_\ell \colon V^k \rightarrow C$ satisfying Conditions \ref{item:upper-bound-shufflable} - \ref{item:upper-bound-strict} of Theorem \ref{thm:upper-bound} of length $\ell = \Omega(n^{k-1})$.
Actually, we prove a slightly stronger result.

As before, let us fix an integer $k \geq 2$.
We present a construction for a sequence $\chi_0 \succ \chi_1 \succ \dots \succ \chi_\ell$ of colorings of $V^{k}$ such that $\chi_t$ is $k$-stable (i.e., the coloring is stable with respect to $k$-WL) for all $t \in [0,\ell]$.
More precisely, the main result of this section is the following theorem.

\begin{theorem}
 \label{thm:long-sequence-stable}
 Suppose $n \geq 2k^{2}$ and let $V$ be a set of size $|V| = 2n$.
 Then there is a sequence of colorings $\chi_0,\dots,\chi_\ell \colon V^k \rightarrow C$ of length $\ell \geq \left(\frac{n}{2k}\right)^{k-1}$ such that
 \begin{enumerate}[label = (\Roman*)]
  \item $\chi_t$ is shufflable and compatible with equality for all $t \in [0,\ell]$,
  \item $\chi_t$ is $k$-stable for all $t \in [0,\ell]$, and
  \item $\chi_{t-1} \succ \chi_t$ for all $t \in [\ell]$.
 \end{enumerate}
\end{theorem}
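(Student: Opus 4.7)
The plan is to exhibit the sequence as the outputs of $k$-WL applied to a chain of relational structures, each obtained from its predecessor by adding one more tuple to a distinguished $(k-1)$-ary relation. Since every $k$-WL output is automatically $k$-stable, shufflable and compatible with equality, conditions (I) and (II) hold for free, and all the real work goes into verifying the strict refinement condition (III) at every step.

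For the setup I would partition $V = A \cup B$ with $|A| = |B| = n$ and, using the hypothesis $n \geq 2k^{2}$, further partition $A$ into $k-1$ layers $L_1,\dots,L_{k-1}$ each of size at least $m \coloneqq \lceil n/(2k) \rceil$. After fixing an enumeration $\ell^{(i)}_1,\dots,\ell^{(i)}_{|L_i|}$ of each layer, for every $\vec s = (s_1,\dots,s_{k-1}) \in [m]^{k-1}$ I set $T_{\vec s} \coloneqq (\ell^{(1)}_{s_1},\dots,\ell^{(k-1)}_{s_{k-1}}) \in A^{k-1}$, obtaining a grid of $N \coloneqq m^{k-1} \geq (n/(2k))^{k-1}$ tuples. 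Fixing an enumeration $\vec s^{(1)},\dots,\vec s^{(N)}$ of $[m]^{k-1}$, for $t \in [0,N]$ I let $\FA_t$ be the structure on $V$ with unary relations $A$, $B$ and $(k-1)$-ary relation $R_t \coloneqq \{T_{\vec s^{(j)}} : j \leq t\}$, and define $\chi_t \coloneqq \WL{k}{\FA_t}$.

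The weak refinement $\chi_t \preceq \chi_{t-1}$ follows immediately from $R_{t-1} \subset R_t$. For strictness I would exhibit, for each $t \in [N]$, a pair $\vec v = (T_{\vec s^{(t)}},b)$ and $\vec w = (T_{\vec s'},b)$ in $V^k$, with $b \in B$ and $\vec s'$ still unprocessed, that are already separated at the atomic-type level in $\FA_t$ (because $T_{\vec s^{(t)}} \in R_t$ while $T_{\vec s'} \notin R_t$), so that $\chi_t(\vec v) \neq \chi_t(\vec w)$. The matching claim $\chi_{t-1}(\vec v) = \chi_{t-1}(\vec w)$ is then certified by producing an automorphism of $\FA_{t-1}$ that sends $T_{\vec s^{(t)}}$ to $T_{\vec s'}$ and fixes $b$; the natural candidate is the product of layer-internal transpositions swapping $\ell^{(i)}_{s^{(t)}_i}$ with $\ell^{(i)}_{s'_i}$ for each $i \in [k-1]$, which by construction preserves $A$, $B$, the layer partition, and fixes $b$.

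The main obstacle is to design the enumeration so that such an automorphism exists for every $t$: the layer-internal transposition must additionally preserve $R_{t-1}$ as a set, which is a delicate combinatorial condition on the partial relation. I would address this by processing the grid $[m]^{k-1}$ in blocks that are symmetric under the relevant layer-transpositions — e.g., by iterating through levels indexed by one coordinate and, within each level, maintaining invariance under the residual action on the remaining coordinates. This ensures that $\mathrm{Aut}(\FA_{t-1})$ always has a non-trivial orbit on the set of unprocessed tuples and, in particular, provides a suitable witness $\vec s'$ at every step $t < N$, which is the technical heart of the argument.
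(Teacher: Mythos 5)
Your chain of structures cannot deliver even the weak refinements $\chi_t \preceq \chi_{t-1}$, because growing a relation does \emph{not} in general refine the $k$-WL stable coloring --- it can coarsen it. Already for $k=2$ (so $R$ is unary) take $a_1,a_2 \in L_1$ and $a_3 \in A$ outside both, with $R_1 = \{a_1\}$ and $R_2 = \{a_1,a_2\}$. The pairs $\vec v = (a_1,a_3)$ and $\vec w = (a_2,a_3)$ have different atomic types in $\FA_1$ (only $a_1$ lies in $R_1$), so $\chi_1(\vec v) \neq \chi_1(\vec w)$; but the transposition $a_1 \leftrightarrow a_2$ is an automorphism of $\FA_2$ fixing $a_3$, hence $\chi_2(\vec v) = \chi_2(\vec w)$, and therefore $\chi_2 \not\preceq \chi_1$. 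The same phenomenon recurs for every $k$: adding $T_{\vec s^{(t)}}$ to $R$ singles out its constituent vertices, and the very next addition can restore a layer-internal symmetry and re-merge color classes that were previously split. So the sentence ``the weak refinement $\chi_t \preceq \chi_{t-1}$ follows immediately from $R_{t-1}\subset R_t$'' is simply false, and this is fatal to the construction rather than a technicality to be absorbed into the block-ordering step you defer.

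The paper sidesteps exactly this pitfall. It does not pass through a chain of structures and apply $\WL{k}{\cdot}$; instead it directly defines colorings $\chi_\CF$ of $V^k$ with $V = U\times\{0,1\}$, one for each $k$-uniform set family $\CF$ over $U$: two $k$-tuples receive the same color iff (A) they project to the \emph{same} tuple over $U$, (B) they have the same equality pattern, and (C) when the projected set lies in $\CF$, the parities of the second coordinates agree. Because condition (A) is enforced unconditionally, enlarging $\CF$ can only split classes and never merge them, so monotone strict refinement comes structurally for free once one orders a family from Theorem~\ref{thm:set-family} of size at least $(n/2k)^{k-1}$; the small pairwise intersections in $\CF$ are then precisely what makes each $\chi_\CF$ $k$-stable (Lemma~\ref{la:wl-stable}). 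In short, the missing idea in your proposal is a CFI-style parity gadget that makes refinement automatic rather than something to be argued; without it, the automorphism bookkeeping you postpone is not even the dominant obstacle.
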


Before diving into the proof, let us first discuss some implications of the theorem.

First of all, Theorem \ref{thm:long-sequence-stable} implies that the upper bound in Theorem \ref{thm:upper-bound} is tight up to a factor of $O_k(\log n)$.
This follows from the simple observation that, if $\chi_t$ is $k$-stable and $\chi_{t-1} \succ \chi_t$, then $\refWL{k}{\chi_{t-1}} \equiv \chi_{t-1} \succeq \chi_t$, i.e., the sequence of colorings constructed in Theorem \ref{thm:long-sequence-stable} satisfies the requirements of Theorem \ref{thm:upper-bound}.

On the other hand, since all colorings $\chi_t$ are already $k$-stable, the theorem does not provide any lower bounds on the iteration number of $k$-WL.
However, Theorem \ref{thm:long-sequence-stable} still provides some valuable insights in this setting.
Indeed, all existing methods to bound the iteration number of $k$-WL \cite{KieferS19,LichterPS19} rely on ``parallelization arguments'', i.e., it is argued that at some point in the refinement process many color classes have to be split at the same time.
Theorem \ref{thm:long-sequence-stable} essentially implies that such arguments do not suffice to push the upper bounds on the iteration number beyond $O(n^{k-1})$ since such ``parallelization arguments'' typically also work in the extended setting of Theorem \ref{thm:upper-bound}.
As a concrete example, Kiefer and Schweitzer \cite{KieferS19} prove upper bounds on iteration number of $2$-WL by bounding the cost of a certain game related to $2$-WL.
This game naturally generalizes to $k$-WL, but Theorem \ref{thm:long-sequence-stable} immediately implies that its cost is $\Omega(n^{k-1})$ and thus, it is not possible to obtain improved upper bounds by analyzing said game.
So overall, Theorem \ref{thm:long-sequence-stable} can be interpreted as saying that, in order to obtain improved upper bounds on the iteration number of $k$-WL, we need to rely on arguments that also exploit the possibility of stabilization at an early point, and it is not possible to solely rely on ``parallelization arguments''.

\medskip

Let us now turn to the proof of Theorem \ref{thm:long-sequence-stable}.
It relies on the following theorem which provides a large set family with restricted intersections between its members.
Let $U$ be a set of size $n$.
A \emph{$k$-uniform set family (over $U$)} is a collection $\CF$ of $k$-element subsets of $U$. 

\begin{theorem}[{\cite[Theorem 4.11]{BabaiF20}}]
 \label{thm:set-family}
 For every $n \geq 2k^{2}$ there exists a $k$-uniform set family $\CF$ over a universe $U$ of $n$ points such that
 \begin{enumerate}
  \item $|E_1 \cap E_2| \leq k - 2$ for all distinct $E_1,E_2 \in \CF$, and
  \item $|\CF| \geq \left(\frac{n}{2k}\right)^{k-1}$.
 \end{enumerate}
\end{theorem}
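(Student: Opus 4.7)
The plan is to give a short explicit algebraic construction of $\CF$, realizing it as a ``parity-check'' code of minimum Hamming distance $2$ encoded as a set of transversals of a partition of $U$ into $k$ equal-sized blocks. This is a Singleton-bound-tight construction.

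First I would fix the partition. Set $m \coloneqq \lfloor n/k \rfloor$. Since the hypothesis $n\ge 2k^2$ implies $n\ge 2k$, we get $m \ge n/k - 1 \ge n/(2k)$. Choose pairwise disjoint subsets $U_1,\ldots,U_k \subseteq U$ with $|U_i| = m$ and fix bijections $\iota_i\colon\{0,1,\ldots,m-1\} \to U_i$ for each $i \in [k]$. The family is then
\[
\CF \coloneqq \Bigl\{\{\iota_1(u_1),\ldots,\iota_k(u_k)\} \Bigmid u_i \in \{0,\ldots,m-1\},\ u_1+\cdots+u_k \equiv 0 \pmod{m}\Bigr\}.
\]
Because the blocks $U_i$ are pairwise disjoint, each listed set has exactly $k$ elements, so $\CF$ is $k$-uniform.

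Next I would verify the two required properties. For the size, any tuple $(u_1,\ldots,u_{k-1})\in\{0,\ldots,m-1\}^{k-1}$ uniquely determines $u_k \equiv -(u_1+\cdots+u_{k-1}) \pmod{m}$, and distinct tuples $\vec u \neq \vec u'$ give distinct sets because the $i$-th coordinate can be recovered as $\iota_i^{-1}$ of the unique element of the set lying in $U_i$. Hence $|\CF| = m^{k-1} \ge (n/(2k))^{k-1}$. For the intersection bound, two members coming from distinct tuples $\vec u,\vec u'$ satisfy $|E \cap E'| = |\{i \in [k] : u_i = u_i'\}|$ by disjointness of the blocks. If $\vec u$ and $\vec u'$ agreed in all but one coordinate $j$, then subtracting the two parity equations modulo $m$ would yield $u_j \equiv u_j' \pmod{m}$ and hence $u_j = u_j'$, contradicting $\vec u \neq \vec u'$. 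Thus $\vec u$ and $\vec u'$ differ in at least two coordinates, giving $|E \cap E'| \le k-2$.

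I do not foresee a genuine obstacle: the argument is just the observation that the parity-check code of length $k$ over an alphabet of size $m$ has minimum Hamming distance $2$ and exactly $m^{k-1}$ codewords, transported to a family of $k$-subsets via the disjoint-block encoding. The only small point requiring any care is the size estimate $m \ge n/(2k)$, which follows from the elementary inequality $\lfloor n/k \rfloor \ge n/(2k)$ whenever $n \ge 2k$, a condition implied by the hypothesis $n \ge 2k^2$.
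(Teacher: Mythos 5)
Your construction is correct and complete. The paper itself does not prove this statement; it imports it as \cite[Theorem 4.11]{BabaiF20}, so there is no internal proof to compare against, but your argument is a valid self-contained replacement: the disjoint-block encoding makes $|E\cap E'|$ equal to the number of agreeing coordinates, the parity check rules out tuples differing in exactly one coordinate, and the count $m^{k-1}$ with $m=\lfloor n/k\rfloor\ge n/(2k)$ (which indeed only needs $n\ge 2k$, implied by $n\ge 2k^2$) gives exactly the claimed bound. This is the standard Singleton-tight, minimum-distance-$2$ code construction, and every step checks out.
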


Now, let $U$ be a universe of size $n \geq k$ and let $\CF$ be a $k$-uniform set family over $U$.
We set
\[V \coloneqq U \times \{0,1\}\]
and define a coloring $\chi_\CF\colon V^{k} \rightarrow C$ as follows.
Since the actual names of the colors are not relevant for our purposes, we only define the color classes, i.e., we specify when two tuples receive the same color.

Let $((u_1,a_1),\dots,(u_k,a_k)), ((u_1',a_1'),\dots,(u_k',a_k')) \in V^{k}$.
We define $\chi_\CF$ in such a way that $\chi_\CF((u_1,a_1),\dots,(u_k,a_k)) = \chi_\CF((u_1',a_1'),\dots,(u_k',a_k'))$ if and only if
\begin{enumerate}[label = (\Alph*)]
 \item\label{item:coloring-classes} $u_i = u_i'$ for all $i \in [k]$,
 \item\label{item:coloring-eq} $(u_i,a_i) = (u_j,a_j) \;\;\;\Leftrightarrow\;\;\; (u_i',a_i') = (u_j',a_j')$ for all $i,j \in [k]$, and
 \item\label{item:coloring-cfi} if $\{u_1,\dots,u_k\} \in \CF$, then $\sum_{i \in [k]} a_i \equiv \sum_{i \in [k]} a_i' \bmod 2$.
\end{enumerate}

\begin{lemma}
 \label{la:wl-stable}
 Suppose $|E_1 \cap E_2| \leq k - 2$ for all distinct $E_1,E_2 \in \CF$.
 Then $\chi_\CF$ is $k$-stable.
\end{lemma}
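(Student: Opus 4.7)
The plan is to reduce $k$-stability of $\chi_\CF$ to constructing, for each pair of tuples $\vec v = ((u_1,a_1),\ldots,(u_k,a_k))$ and $\vec w = ((u_1',a_1'),\ldots,(u_k',a_k'))$ with $\chi_\CF(\vec v) = \chi_\CF(\vec w)$, a bijection $\sigma\colon V \to V$ satisfying
\[\chi_\CF(\vec v[w/i]) = \chi_\CF(\vec w[\sigma(w)/i]) \quad\text{for all } w \in V,\, i \in [k].\]
Indexing by $w$ and $\sigma(w)$, this immediately yields $\CM_{\chi_\CF}(\vec v) = \CM_{\chi_\CF}(\vec w)$ and hence $\chi_\CF \equiv \refWL{k}{\chi_\CF}$.

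Since \ref{item:coloring-classes} forces the first components of $\vec v[w/i]$ and $\vec w[\sigma(w)/i]$ to coincide, and $u_j = u_j'$ already holds, I look for $\sigma$ of the form $\sigma(u,a) = (u,\tau_u(a))$, where each $\tau_u$ is either the identity or the flip on $\{0,1\}$. Requiring that conditions \ref{item:coloring-eq} and \ref{item:coloring-cfi} pass from $\vec v[w/i]$ to $\vec w[\sigma(w)/i]$ imposes two kinds of constraints on $\tau_u$: (i) an \emph{equality-pattern} constraint $\tau_u(a_j) = a_j'$ for every $j$ with $u_j = u$, arising from position pairs $(i,j)$ in \ref{item:coloring-eq}; and (ii) a \emph{parity} constraint $\tau_u(a) \equiv a + \sum_{j \neq i}(a_j + a_j') \pmod 2$ for every $i$ with $E_i^u \coloneqq \{u_1,\ldots,u_{i-1},u,u_{i+1},\ldots,u_k\} \in \CF$, coming from \ref{item:coloring-cfi}. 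When no constraint is triggered for $u$, set $\tau_u$ to the identity.

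The main obstacle, and the only place where the hypothesis $|E_1 \cap E_2| \leq k-2$ enters, is to check that these constraints are mutually consistent. For distinct $i,i' \in [k]$ with $E_i^u, E_{i'}^u \in \CF$ and $u \notin \{u_1,\ldots,u_k\}$, the set $E_i^u \cap E_{i'}^u$ contains $u$ together with the $k-2$ values $u_m$ for $m \notin \{i,i'\}$ (which are distinct because $E_i^u$ and $E_{i'}^u$ are $k$-uniform), giving $|E_i^u \cap E_{i'}^u| \geq k-1$. The hypothesis then forces $E_i^u = E_{i'}^u$, which unwinds to $u_i = u_{i'}$; a short check using the equality-pattern preservation in $\chi_\CF(\vec v) = \chi_\CF(\vec w)$ (splitting on whether $a_i = a_{i'}$) shows $a_i + a_i' \equiv a_{i'} + a_{i'}' \pmod 2$, so the two parity prescriptions agree. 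For $u = u_j$, the only $i$ that can trigger a parity constraint is $i = j$ (since for $i \neq j$ the set $E_i^{u_j} = \{u_m : m \neq i\}$ already contains $u_j$ and has at most $k-1$ elements); in that case the global parity $\sum_m a_m \equiv \sum_m a_m' \pmod 2$ from \ref{item:coloring-cfi} applied to $\vec v, \vec w$ reduces the parity constraint at $a = a_j$ to exactly $\tau_{u_j}(a_j) = a_j'$, matching the equality-pattern constraint. Consistency of multiple equality-pattern constraints for a fixed $u$ follows from the same pattern-preservation argument.

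Once $\sigma$ is well-defined and bijective, verifying $\chi_\CF(\vec v[w/i]) = \chi_\CF(\vec w[\sigma(w)/i])$ for every $w \in V$ and $i \in [k]$ is routine: \ref{item:coloring-classes} holds because $\sigma$ fixes first components; \ref{item:coloring-eq} follows from constraint (i) combined with the equality pattern of $\vec v$ and $\vec w$; and \ref{item:coloring-cfi} is exactly constraint (ii) when the first-component set lies in $\CF$, and vacuous otherwise. This gives $k$-stability of $\chi_\CF$.
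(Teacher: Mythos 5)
Your proof is correct, and the heart of it is the same as the paper's: exhibit a bijection that fixes $U$-components and acts fiberwise as identity or flip, chosen so that the parity constraints from \ref{item:coloring-cfi} are respected, with the hypothesis $|E_1\cap E_2|\le k-2$ guaranteeing that the relevant constraints cannot conflict. The organization differs, though. The paper works \emph{locally}: for each fixed $u$ it restricts attention to $V'=\{u_1,\dots,u_k,u\}\times\{0,1\}$, observes that $\CF'=\{E\in\CF\mid E\subseteq\{u_1,\dots,u_k,u\}\}$ has at most one member (two distinct $k$-subsets of a $(k{+}1)$-set intersect in $\ge k-1$ elements), and then only has to define a color-preserving permutation of $V'$ with a single possible parity constraint to satisfy, concluding by isomorphism-invariance of the multiset. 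You instead construct a single \emph{global} bijection $\sigma$ and verify the color equalities $\chi_\CF(\vec v[w/i])=\chi_\CF(\vec w[\sigma(w)/i])$ entry by entry; the cost is that you must check consistency of the constraints on each $\tau_u$ across all indices $i$ and $j$, which you do correctly through the case analysis on $u\notin\{u_1,\dots,u_k\}$ versus $u=u_j$, using $|E_i^u\cap E_{i'}^u|\ge k-1$ to force $E_i^u=E_{i'}^u$ and hence $u_i=u_{i'}$, and reconciling parity and equality-pattern constraints via \ref{item:coloring-eq}/\ref{item:coloring-cfi}. The paper's per-$u$ localization trades your explicit consistency check for the cleaner observation $|\CF'|\le 1$, which is why its writeup is shorter; your version is more verbose but also more self-contained, since it does not appeal to isomorphism-invariance of the WL refinement step.
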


\begin{proof}
 Let $\vec v = ((u_1,a_1),\dots,(u_k,a_k)), \vec v' = ((u_1',a_1'),\dots,(u_k',a_k')) \in V^{k}$ such that
 \[\chi_\CF(\vec v) = \chi_\CF(\vec v').\]
 Observe that $u_i = u_i'$ for all $i \in [k]$ by Condition \ref{item:coloring-classes}.
 We need to show that the two tuples do not receive distinct colors after a single refinement step of $k$-WL, that is, we need to argue that
 \begin{align*}
        &\Big\{\!\!\Big\{ \big(\chi_\CF(\vec v[(u,a)/1]),\dots,\chi_\CF(\vec v[(u,a)/k])\big) \;\Big|\; u \in U, a \in \{0,1\} \Big\}\!\!\Big\}\\
  =\;\; &\Big\{\!\!\Big\{ \big(\chi_\CF(\vec v'[(u,a)/1]),\dots,\chi_\CF(\vec v'[(u,a)/k])\big) \;\Big|\; u \in U, a \in \{0,1\} \Big\}\!\!\Big\}
 \end{align*}
 where $\vec v[(u,a)/i]) = ((u_1,a_1),\dots,(u_{i-1},a_{i_1}),(u,a),(u_{i+1},a_{i+1}),\dots,(u_k,a_k))$ is the tuple obtained from $\vec v$ by replacing the $i$-th entry by $(u,a)$.
 Towards this end, we actually show the stronger statement that
 \begin{align*}
        &\Big\{\!\!\Big\{ \big(\chi_\CF(\vec v[(u,a)/1]),\dots,\chi_\CF(\vec v[(u,a)/k])\big) \;\Big|\; a \in \{0,1\} \Big\}\!\!\Big\}\\
  =\;\; &\Big\{\!\!\Big\{ \big(\chi_\CF(\vec v'[(u,a)/1]),\dots,\chi_\CF(\vec v'[(u,a)/k])\big) \;\Big|\; a \in \{0,1\} \Big\}\!\!\Big\}
 \end{align*}
 holds for all $u \in U$.
 
 Fix some $u \in U$.
 To see that these two multisets are equal, consider the set
 \[V' \coloneqq \{u_1,\dots,u_k,u\} \times \{0,1\} \subseteq V\]
 and the restriction $\chi_\CF' \colon (V')^{k} \rightarrow C\colon \vec v \mapsto \chi_{\CF}(\vec v)$ of $\chi_\CF$ to the set $(V')^{k}$.
 Also, let $\CF' \coloneqq \{E \in \CF \mid E \subseteq V'\}$.
 Since $|E_1 \cap E_2| \leq k - 2$ for all distinct $E_1,E_2 \in \CF$ and $|V'| \leq k+1$, we conclude that $|\CF'| \leq 1$.
 
 \begin{claim}
  There is a bijection $\varphi\colon V' \rightarrow V'$ such that
  \begin{enumerate}[label = (\roman*)]
   \item\label{item:bijection-color} $\chi_{\CF}'(\vec v) = \chi_{\CF}'(\varphi(\vec v))$ for all $\vec v \in (V')^{k}$, and
   \item\label{item:bijection-swap} $\varphi(u_i,a_i) = (u_i',a_i')$ for all $i \in [k]$.
  \end{enumerate}
 \end{claim}
 \begin{claimproof}
  For $i \in [k]$ we define $\varphi(u_i,a_i) \coloneqq (u_i,a_i')$ and $\varphi(u_i,1-a_i) \coloneqq (u_i,1-a_i')$.
  In particular, Condition \ref{item:bijection-swap} is satisfied since $u_i = u_i'$ for all $i \in [k]$.
  If there is some $E' \in \CF'$ such that $u \in E'$, then we define
  \[\varphi(u,a) \coloneqq \begin{cases}
                            (u,a)   &\text{if } \sum_{u_i \in E'} a_i \equiv \sum_{u_i \in E'} a_i' \bmod 2\\
                            (u,1-a) &\text{otherwise}
                           \end{cases}\]
  for both $a \in \{0,1\}$.
  If no such set $E' \in \CF'$ exists, then we set $\varphi(u,a) \coloneqq (u,a)$ for both $i \in \{0,1\}$.
  It can be easily verified that $\chi_{\CF}'(\vec v) = \chi_{\CF}'(\varphi(\vec v))$ for all $\vec v \in (V')^{k}$.
 \end{claimproof}

 Since the multisets above are defined in an isomorphism-invariant manner over the structure induced by $(V',\chi_F')$, we conclude that they have to be equal.
\end{proof}

\begin{proof}[Proof of Theorem \ref{thm:long-sequence-stable}]
 Let $\CF$ be the set family obtained from Theorem \ref{thm:set-family} and suppose that $\CF = \{E_1,\dots,E_\ell\}$.
 Observe that $\ell \geq \left(\frac{n}{2k}\right)^{k-1}$ as desired.
 For $t \in [0,\ell]$ we define $\CF_t \coloneqq \{E_1,\dots,E_t\} \subseteq \CF$ and $\chi_t \coloneqq \chi_{\CF_t}$.
 Then $\chi_t$ is $k$-stable by Lemma \ref{la:wl-stable}.
 Also, $\CF_{t-1} \subset \CF_t$ which implies that $\chi_{t-1} \succ \chi_t$ by definition of the coloring $\chi_t$.
 Finally, it is easy to verify that all colorings are shufflable and compatible with equality. 
\end{proof}

\section{Lower Bounds on the Iteration Number of WL}
\label{sec:lower-bounds}
In this section, we obtain improved lower bounds on the iteration number of the Weisfeiler-Leman algorithm.
More precisely, we prove Theorem \ref{thm:wl-round-lower-bound}.
Our proof strategy is similar to the one employed by Berkholz and Nordstr{\"{o}}m in \cite{BerkholzN16}.
First, for every sufficiently large $\ell_\hi \geq \ell_\lo$, we construct pairs of structures that can be distinguished by $\ell_\lo$-WL, but $\ell_\hi$-WL still requires a linear number of iterations to distinguish them.
Afterwards, we apply a \emph{hardness compression} that reduces the number of vertices in the obtained structures while preserving the iteration number of the Weisfeiler-Leman algorithm.
Actually, for the second step, we can rely on the same tools that are already used by Berkholz and Nordstr{\"{o}}m in \cite{BerkholzN16}.

\subsection{Overview}

The hard instances we construct are based on propositional XOR-formulas that can also be viewed as systems of linear equations over the $2$-element field $\FF_2$.

Let $V$ be a finite set which we interpret as a set of variables that take values in $\{0,1\}$.
An \emph{XOR-constraint (over $V$)} is a pair $(C,a)$ where $C \subseteq V$ and $a \in \{0,1\}$.
The reader is encouraged to think of such a constraint as the equation $x_1 + \dots + x_k \equiv a \bmod 2$ where $C = \{x_1,\dots,x_k\}$ is the set of those variables that appear on the left side of the equation.
We explicitly allow $C$ to be empty; $(\emptyset,0)$ is always satisfied and $(\emptyset,1)$ is unsatisfiable.
Let $\CC$ be a set of XOR-constraints.
We define the \emph{arity} of $\CC$ to be the maximum cardinality of $C$ for any pair $(C,a) \in \CC$.

We can translate a set of XOR-constraints into a pair of relational structures as follows.
Let $\CC$ be a set of XOR-constraints over a set $V$.
Also suppose that $V = \{x_1,\dots,x_n\}$.
We define $\FA = \FA(\CC)$ and $\FB = \FB(\CC)$ as follows.
We set $V(\FA) = V(\FB) \coloneqq V \times \{0,1\}$, i.e., each element of the structures $\FA$ and $\FB$ corresponds to an assignment of a single variable.
For each $i \in [n]$, we add a unary relation $X_i$ and set $X_i^{\FA} = X_i^{\FB} \coloneqq \{(x_i,0),(x_i,1)\}$.
Finally, for every constraint $(C,a) \in \CC$ with $C = \{x_{i_1},\dots,x_{i_k}\}$ we introduce a $k$-ary relation $R_{C,a}$ and define
\[R_{C,a}^{\FA} \coloneqq \left\{\Big((x_{i_1},b_1),\dots,(x_{i_k},b_k)\Big) \;\middle|\; b_1,\dots,b_k \in \{0,1\}, \sum_{j = 1}^{k} b_j \equiv 0 \mod 2\right\}\]
and
\[R_{C,a}^{\FB} \coloneqq \left\{\Big((x_{i_1},b_1),\dots,(x_{i_k},b_k)\Big) \;\middle|\; b_1,\dots,b_k \in \{0,1\}, \sum_{j = 1}^{k} b_j \equiv a \mod 2\right\}\]

Instead of analysing the Weisfeiler-Leman algorithm directly on $\FA(\CC)$ and $\FB(\CC)$, it turns out to more convenient to consider the following game that is directly played on $\CC$ and is known to capture the same information as applying the Weisfeiler-Leman algorithm to the associated structures.

Let $\CC$ be a set of XOR-constraints over a set $V$.
Let $k \in \mathbb{N}$ such that $\CC$ has arity at most $k$.
A partial assignment $\beta \colon X \rightarrow \{0,1\}$ with $X \subseteq V$ \emph{violates} an XOR-constraint $(C,a) \in \CC$ if $C \subseteq X$ and
\begin{equation}
 \sum_{x \in C} \beta(x) \not\equiv a \mod 2.
\end{equation}
For a partial assignment $\beta_0 \colon X_0 \rightarrow \{0,1\}$ with $|X_0| \leq k$ the \emph{$r$-round $k$-pebble game} $\CG_k^{r}(V,\CC,\beta_0)$ is played as follows:
\begin{itemize}
 \item The game has two players called Verifier and Falsifier.
 \item The game is played in rounds with initial position $\beta_0$.
 \item Suppose $\beta\colon X \rightarrow \{0,1\}$ is the current position. Then the next round consists of the following steps:
  \begin{itemize}
   \item Falsifier chooses $x \in V \setminus X$ and $X' \subseteq X$ such that $|X' \cup \{x\}| \leq k$.
   \item Verifier chooses $b \in \{0,1\}$.
   \item The game moves to position $\beta'\colon X' \cup \{x\} \rightarrow \{0,1\}$ with $\beta'(x') = \beta(x')$ for $x' \in X'$ and $\beta'(x) = b$.
  \end{itemize}
 \item Falsifier wins a play if within the first $r$ rounds an assignment $\beta$ violates some XOR-constraint $(C,a) \in \CC$ (if $r = 0$, then Falsifier wins if the initial assignment $\beta_0$ violates some constraint in $\CC$).
 \item Verifier wins a play if Falsifier does not win within the first $r$ rounds.
\end{itemize}

We say Falsifier (respectively Verifier) wins the game $\CG_k^{r}(V,\CC,\beta_0)$ if Falsifier (respectively Verifier) has a winning strategy for the game.
The \emph{$k$-pebble game} $\CG_k(V,\CC,\beta_0)$ is played in the same way, but without any restriction on the number of rounds played.

The following lemma relates the pebble game $\CG_{k}^{r}(V,\CC,\emptyset)$ to bounded-variable fragments of first-order logic and thereby, using Corollary \ref{cor:wl-logic-distinguish-structures}, also to the Weisfeiler-Leman algorithm.
(Here, we use $\emptyset$ to denote the empty assignment, i.e., the domain $X_0$ of the initial partial assignment $\beta_0$ is empty.)

\begin{lemma}[{\cite[Lemma 2.1]{BerkholzN16}}]
 \label{la:translate-bounds}
 Let $k,r \in \NN$ such that $r > 0$ and $k \geq 3$.
 Let $\CC$ be a set of XOR-constraints over a universe $V$ of arity at most $k$.
 Then the following statements are equivalent:
 \begin{enumerate}[label = (\roman*)]
  \item Falsifier wins the $r$-round $k$-pebble game $\CG_{k}^{r}(V,\CC,\emptyset)$.
  \item There exists a sentence $\varphi \in \LLkq{k}{r}$ such that $\varphi \models \FA(\CC)$ and $\varphi \not\models \FB(\CC)$.
  \item There exists a sentence $\varphi \in \LCkq{k}{r}$ such that $\varphi \models \FA(\CC)$ and $\varphi \not\models \FB(\CC)$.
 \end{enumerate}
\end{lemma}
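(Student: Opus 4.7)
My plan is to prove all three equivalences via a game-theoretic translation to the standard pebble-game characterizations of $\LLkq{k}{r}$ and $\LCkq{k}{r}$. The key setup is the following dictionary. Since $V(\FA(\CC))=V(\FB(\CC))=V\times\{0,1\}$ and every partial map preserving the unary relations $X_i$ must fix the first coordinate, any such partial map is encoded by a partial flip-assignment $\beta\colon X\to\{0,1\}$ via $(x,a)\mapsto(x,a+\beta(x))$. A direct check from the definitions of $R_{C,a}^\FA$ and $R_{C,a}^\FB$ shows that this partial map is a partial isomorphism iff $\beta$ violates no constraint of $\CC$ whose variables all lie in $X$. Hence positions in the XOR pebble game correspond bijectively to pebble positions in the $k$-pebble game on $\FA(\CC),\FB(\CC)$, and Verifier-losing positions coincide with Duplicator-losing positions.

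The implication (ii) $\Rightarrow$ (iii) is immediate from $\LLk{k}\subseteq\LCk{k}$. For (i) $\iff$ (ii), I would invoke the classical equivalence between winning the $r$-round $k$-pebble Ehrenfeucht--Fra\"isse game and distinguishability in $\LLkq{k}{r}$. Under the dictionary, a Falsifier move (choose pebble $i$ and variable $x$) translates directly into a Spoiler move (choose pebble $i$ and a vertex of $V(\FA)$ whose first coordinate is $x$, say $(x,0)$), and Verifier's response $b\in\{0,1\}$ translates into Duplicator's response $(x,b)\in V(\FB)$; the losing conditions match by the dictionary. An induction on $r$ transports winning strategies in both directions.

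The core difficulty is (iii) $\Rightarrow$ (i), since counting quantifiers might a priori distinguish the structures in strictly fewer rounds than plain first-order logic. My plan is to use the bijective $k$-pebble game characterization of $\LCkq{k}{r}$ and show it has the same outcome as the standard $k$-pebble game on $\FA(\CC),\FB(\CC)$. The crux is that Duplicator's bijection $f$ (necessarily of the form $f(x,a)=(x,a+\gamma(x))$ for some $\gamma\colon V\to\{0,1\}$, since $X_i$ must be preserved) and Verifier's local choice $\beta(x)$ have the \emph{same} space of safe responses at every variable: in both cases the response must be consistent with the currently pebbled positions and every applicable constraint of $\CC$, and since constraints have arity at most $k$ they can only couple pebbled variables. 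Consequently, any locally safe value of $\gamma(x)$ extends arbitrarily to a full bijection on $V\times\{0,1\}$ (by picking $\gamma(y)$ freely for unpebbled $y$), so Duplicator has a winning response iff Verifier does.

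The main technical point, and the main obstacle I anticipate, is that Duplicator must commit to the total $\gamma$ before Spoiler picks $x$, while Verifier commits to $\beta(x)$ only after Falsifier picks $x$; so the two games have a genuinely different move order. I would resolve this by arguing that Duplicator's choices $\gamma(y)$ at variables $y$ not in the current pebble scope are game-theoretically irrelevant, since they can be revised in the next round's bijection and cannot affect the satisfaction of any constraint of arity $\le k$ restricted to the pebbled positions. A careful induction on the number of remaining rounds then yields that Spoiler wins the bijective game iff Falsifier wins the XOR game, closing the cycle (iii) $\Rightarrow$ Spoiler wins bijective game $\Rightarrow$ Falsifier wins XOR game $=$ (i).
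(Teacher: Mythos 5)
The paper states this lemma as a citation from [BerkholzN16, Lemma 2.1] and does not reprove it, so there is no in-paper proof to compare against; I can only assess the blind reconstruction on its own merits. Your proof is correct and is the standard argument for such CFI-type gadgets. The dictionary between partial flip-assignments $\beta\colon X\to\{0,1\}$ and partial isomorphisms of $\FA(\CC),\FB(\CC)$ that respect the unary relations $X_i$ is set up exactly right, and (i)\,$\iff$\,(ii) then reduces to the classical $k$-pebble Ehrenfeucht--Fra\"iss\'e characterization of $\LLkq{k}{r}$. The genuinely nontrivial direction is (iii)\,$\Rightarrow$\,(i), and you handle it correctly: invoking Hella's bijective $k$-pebble game for $\LCkq{k}{r}$, observing that any $X_i$-respecting bijection has the form $(x,a)\mapsto(x,a+\gamma(x))$ for a total $\gamma\colon V\to\{0,1\}$, and then noting that the move-order discrepancy (Duplicator plays $\exists\gamma\,\forall x$ while Verifier plays $\forall x\,\exists b$) is harmless because the cells $\gamma(x)$ for distinct $x$ are independently choosable, $\gamma$ is re-chosen every round, and only $\gamma(x)$ at the pebbled $x$ survives into the new position. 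This cell-wise independence is precisely the reason counting quantifiers confer no additional distinguishing power on these XOR structures, and it is the key insight. The one bookkeeping point you should spell out in a full write-up is that $\CG_k^r$ lets Falsifier discard an arbitrary subset $X'\subseteq X$ in a single move, whereas the standard (bijective) pebble game removes one pebble per round; this mismatch is benign because shrinking the domain only weakens the partial-isomorphism condition and hence only helps Duplicator/Verifier, but it deserves a sentence. With that caveat, the argument is sound and complete in outline.
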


To obtain a set of XOR-constraints on which Falsifier requires a large number of rounds to win the pebble game, we proceed in two steps.
First, for every sufficiently large $\ell_\hi \geq \ell_\lo$, we construct a set of XOR-constraints such that Falsifier wins the $\ell_\lo$-pebble game, but still requires a linear number of rounds to win the $\ell_\hi$-pebble game.
This is formalized by the next lemma which forms the main technical contribution of this section.

\begin{lemma}
 \label{la:many-rounds}
 There are absolute constants $\ell_\lo \geq 2$ and $\delta > 1$ such that for every $\ell_\hi \geq \ell_\lo$ and every $r \geq 1$
 there is a set of XOR-constraints $\CC$ of arity at most $\ell_\lo$ over a set $V$ of size $|V| \leq \delta \cdot \ell_\hi^{2} \cdot r$ such that Falsifier
 \begin{enumerate}[label = (\alph*)]
  \item wins the $\ell_\lo$-pebble game $\CG_{\ell_\lo}(V,\CC,\emptyset)$, but
  \item does not win the $r$-round $\ell_\hi$-pebble game $\CG_{\ell_\hi}^{r}(V,\CC,\emptyset)$.
 \end{enumerate}
\end{lemma}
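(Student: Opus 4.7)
The plan is to realize $\CC$ as a Tseitin-style XOR system associated with a carefully constructed \emph{layered expander} graph $G$, as alluded to in the introduction. Concretely, I would build $G$ on $r$ ``columns'' $C_1,\dots,C_r$ with edges only between consecutive columns, constant maximum degree $d = O(1)$, and the property that any window $G[C_i \cup C_{i+1} \cup \dots \cup C_{i+s}]$ consisting of $s = \Theta(\ell_\hi)$ consecutive columns is a good vertex expander, in the sense that $|N(S)| \geq \varepsilon|S|$ for every $S$ with $|S| \leq \ell_\hi$. Sizing the columns at $O(\ell_\hi^2)$ vertices gives total vertex count $O(\ell_\hi^2 r)$, and since we take one Tseitin constraint per vertex (constraining the XOR of its incident edge variables), the variable set $V = E(G)$ has size $O(\ell_\hi^2 r)$ and the arity is bounded by $d$, so we can set $\ell_\lo = d + O(1)$. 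The vertex charges are chosen to sum to an odd value over $G$, ensuring unsatisfiability by the classical Tseitin argument.

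For the upper bound (a), I would describe a Falsifier strategy using only $\ell_\lo$ pebbles that sweeps through the columns from left to right, maintaining a narrow ``frontier'' of pebbled edge variables between column $i$ and column $i+1$ and successively eliminating columns from the left. Because each vertex has constant degree, only constantly many pebbles need to be active at once, and the standard Tseitin contradiction is produced by the time the sweep reaches the right end. This is the pebble-game analogue of the well-known narrow resolution refutations of Tseitin on graphs of bounded pathwidth.

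For the round lower bound (b), I would exhibit a Verifier strategy in the $\ell_\hi$-pebble game that survives $r$ rounds. The key invariant is that the pebbled set $P$ (of size $\leq \ell_\hi$) is currently contained in a window of $O(\ell_\hi)$ consecutive columns, and the restricted Tseitin system whose constraints touch this window admits a satisfying extension of the current assignment $\beta$. Using the window expansion, one shows that when Falsifier introduces a new pebble, the boundary of $P$ in the window remains nonempty, so Verifier can always choose a value that preserves the invariant. Because each round enlarges the ``column extent'' of $P$ by at most $1$, and because a Tseitin contradiction requires pebbles that collectively span \emph{every} column of $G$ (no proper sub-collection of the charges is by itself unsatisfiable), at least $r$ rounds must elapse before Falsifier can win.

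The main obstacle will be making the round lower bound in (b) rigorous. The delicate issue is that Falsifier may repeatedly delete pebbles and place new ones in distant columns, so a naive ``contiguous reach'' measure fails; one needs a robust potential function that combines the set of columns currently intersecting $P$ with an expansion-based penalty tracking how much parity information has been locked in across the graph. Defining this potential, showing it can grow by at most $O(1)$ per round, and invoking the window-expansion property of the layered expander to certify that Verifier's extension always exists, is where the bulk of the work lies. The condition $n \geq \alpha d^8 k^6$ in the final Theorem~\ref{thm:wl-round-lower-bound} will then emerge from the parameters $\delta$ and $\ell_\hi$ after the hardness condensation from \cite{BerkholzN16} is applied on top of this lemma.
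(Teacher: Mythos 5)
Your high-level idea (layered expanders yielding a hard XOR instance) matches the paper, but the concrete instantiation — a Tseitin system on a vertex-layered graph with edges as variables — is not the paper's construction and has two serious gaps.

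The first gap is in item~(a). You claim a constant-pebble Falsifier strategy ``sweeps'' column by column, keeping a ``narrow frontier'' of pebbled edge variables between consecutive columns. But if a column has $m = \Theta(\ell_\hi^2)$ vertices of constant degree $d$, then the edge-cut between two consecutive columns has $\Theta(m) = \Theta(\ell_\hi^2)$ edges. To propagate the Tseitin parity contradiction across that cut, Falsifier must hold the parities of essentially all of those cut edges simultaneously, so a Tseitin sweep on your graph needs $\Omega(\ell_\hi^2)$ pebbles, not $O(1)$. This is a structural obstruction, not a bookkeeping issue: for Tseitin, small-width refutations exist only when the graph has small edge-separators, which is exactly incompatible with the window expansion you need for~(b). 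The paper avoids this tension by using a \emph{bipartite} layered graph $G=(V,W,E)$ with variables on the $V$-side and one constraint per $W$-vertex, where each $w\in W_i$ has its expander neighbors in layer $V_{i-1}$ and a single matched neighbor in $V_i$. The matching makes the forcing propagate \emph{one vertex at a time}: if all of $V_{i-1}$ is forced to $0$, each $w\in W_i$ forces its unique matched $V_i$-vertex to $0$, so Falsifier only ever needs to pebble $N(w)$, which is $\le d+1$ variables (Lemma~\ref{la:falsifier-win}). Tseitin has no analogue of this single-vertex propagation.

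The second gap is item~(b), which you acknowledge as the hard part but for which you only gesture at ``a robust potential function'' combining column extent with an expansion penalty, without defining it. The issue you identify (Falsifier deleting pebbles and jumping to distant columns) is exactly what makes direct game-theoretic invariants painful, and the paper sidesteps it entirely by reducing the game to an algebraic question about the $k$-closure: by Lemma~\ref{la:dupl-closure-to-consistency}, it suffices to show that $(\{x_\ell\},0)$ does not lie in $\cl_k^{(r-1)}(\CC^*)$, and the paper then bounds how far the closure can spread layer-by-layer using an auxiliary set $\cl_{k,\alpha}^*(\CC_G)$ together with the single-neighbor expansion (Lemmas~\ref{la:single-neighbor-layered-small-sum-closed}, \ref{la:single-neighbor-layered-no-singletons}, and~\ref{la:verifier-win-few-rounds}). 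It is this closure-operator machinery, not a per-round potential on the game, that carries the argument; any rigorous version of your plan would likely have to reinvent it.
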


We remark that a similar result has also been obtained in \cite{BerkholzN16}, but with weaker guarantees on the number of rounds required to win the $\ell_\hi$-pebble game.
It is exactly this improvement that allows us to obtain stronger lower bounds on the iteration number of $k$-WL in comparison to \cite{BerkholzN16}.

\begin{remark}
 \label{rem:dummy-variables}
 When applying Lemma \ref{la:many-rounds}, we also require that $|V| \geq r$ which is not guaranteed by the lemma.
 However, if $|V| < r$ we can simply add dummy variables that do not appear in any constraint to increase the number of variables.
 It is easy to see that all properties guaranteed by the lemma remain valid.
 In particular, the dummy variables do not affect the winning strategy for either player (if Falsifier asks for an assignment of a dummy variable, Verifier simply chooses any value; since dummy variables do not appear in any constraints this is always safe).
\end{remark}

Afterwards, we rely on the following hardness compression lemma that reduces the number of variables while essentially maintaining the number of rounds that Falsifier requires to win the game.

\begin{lemma}[Berkholz, Nordstr{\"{o}}m {\cite[Lemma 3.3]{BerkholzN16}}]
 \label{la:condensation}
 There is an absolute constant $\Delta_0 \geq 1$ such that the following holds.
 Suppose $\CC$ is a set of XOR-constraints of arity at most $p$ over a set $V$ of size $|V| = m$.
 Also assume there are parameters $\ell_\lo > 0$, $\ell_\hi \geq \Delta_0 \ell_\lo$ and $r > 0$ such that Falsifier
 \begin{enumerate}[label = (\alph*)]
  \item wins the $\ell_\lo$-pebble game $\CG_{\ell_\lo}(V,\CC,\emptyset)$, but
  \item does not win the $r$-round $\ell_\hi$-pebble game $\CG_{\ell_\hi}^{r}(V,\CC,\emptyset)$.
 \end{enumerate}
 Let $\Delta$ be an integer such that $\Delta_0 \leq \Delta \leq \ell_\hi/\ell_\lo$ and $(2\ell_\hi\Delta)^{2\Delta} \leq m$.
 Then there is a set of XOR-constraints $\CD$ of arity at most $\Delta p$ over a set $W$ of size $|W| = \lceil m^{3/\Delta}\rceil$ such that Falsifier
 \begin{enumerate}[label = (\Alph*)]
  \item wins the $(\Delta \ell_\lo)$-pebble game $\CG_{\Delta \ell_\lo}(W,\CD,\emptyset)$, but
  \item does not win the $\frac{r}{2\ell_\hi}$-round $\ell_\hi$-pebble game $\CG_{\ell_\hi}^{r/(2\ell_\hi)}(W,\CD,\emptyset)$.
 \end{enumerate}
\end{lemma}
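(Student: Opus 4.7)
The plan is to construct $\CD$ from $\CC$ by an XOR-substitution along a carefully chosen ``dispersing'' map $\rho\colon V\to\binom{W}{\Delta}$. With $W$ of size $\lceil m^{3/\Delta}\rceil$, every constraint $(C,a)\in\CC$ is replaced by $\bigl(\triangle_{x\in C}\rho(x),\,a\bigr)\in\CD$, where $\triangle$ denotes symmetric difference over $W$. Viewing $\rho$ as a $0/1$-matrix $A\in\FF_2^{V\times W}$, a $W$-assignment $\alpha$ pushes forward to the $V$-assignment $A\alpha$ via $\beta(x)=\bigoplus_{w\in\rho(x)}\alpha(w)$, and the substitution preserves which constraints are violated. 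The arity bound $\Delta p$ on $\CD$ is immediate from $|\triangle_{x\in C}\rho(x)|\le\Delta|C|\le\Delta p$, and the variable bound $|W|=\lceil m^{3/\Delta}\rceil$ is by construction.

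The heart of the argument is choosing $\rho$ with enough expansion. The property I aim for is: for every $S\subseteq V$ with $|S|\le 2\ell_\hi$, the rows of $A$ indexed by $S$ are $\FF_2$-linearly independent, and each $x\in S$ has a ``private'' neighbour $w_x\in\rho(x)$ not appearing in $\rho(x')$ for any other $x'\in S$. The hypothesis $(2\ell_\hi\Delta)^{2\Delta}\le m$ is calibrated precisely so that a uniformly random $\rho\colon V\to\binom{W}{\Delta}$ on a $W$ of size $\lceil m^{3/\Delta}\rceil$ has both properties with positive probability; this follows from a direct union bound over the at most $m^{2\ell_\hi}$ candidate sets $S$, with the exponent $3/\Delta$ giving enough slack to absorb all error terms.

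For part (A), Falsifier's $\ell_\lo$-pebble winning strategy on $(V,\CC)$ lifts to a $(\Delta\ell_\lo)$-pebble winning strategy on $(W,\CD)$ by direct simulation: each virtual $V$-pebble on $x$ is realised by $\Delta$ real $W$-pebbles on $\rho(x)$, and the shadow value $\beta(x)\coloneqq\bigoplus_{w\in\rho(x)}\alpha(w)$ is computed from Verifier's responses. When the simulated $V$-play reaches a violated constraint $(C,a)\in\CC$, the at most $\Delta\ell_\lo$ pebbles on $\bigcup_{x\in C}\rho(x)$ are simultaneously on the $W$-board and witness that $\bigl(\triangle_{x\in C}\rho(x),a\bigr)\in\CD$ is violated.

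Part (B) is the main obstacle. Given Verifier's $r$-round $\ell_\hi$-pebble winning strategy $\tau$ on $(V,\CC)$, the goal is to build a $\lfloor r/(2\ell_\hi)\rfloor$-round $\ell_\hi$-pebble strategy on $(W,\CD)$ that maintains the invariant that the current pebbled $W$-partial assignment $\alpha$ on $P\subseteq W$ (with $|P|\le\ell_\hi$) is the $A$-image of a shadow $V$-partial assignment $\beta$ on $S\subseteq V$ with $|S|\le\ell_\hi$ consistent with $\tau$. When Falsifier removes $w\in P$, Verifier releases the shadow $x\in S$ whose private neighbour was $w$, simulating a removal move in the $\CC$-game. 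When Falsifier pebbles a new $w\notin P$, Verifier either identifies $w$ as the private neighbour of some existing $x\in S$ — in which case $\alpha(w)$ is forced by $\beta(x)$ together with the other values already pebbled on $\rho(x)$ — or introduces a new shadow variable $x$ played according to $\tau$ and responds with a compatible value. Expansion guarantees that the shadow is always well-defined, that Verifier's forced values are consistent, and that no $\CD$-constraint is violated before $\tau$ would lose in the $\CC$-game. Each $\CD$-round triggers at most $2\ell_\hi$ simulated $\CC$-moves (at most $\ell_\hi$ removals and $\ell_\hi$ additions to the shadow), which accounts for the factor $2\ell_\hi$ in the round budget and is where the main technical work lies.
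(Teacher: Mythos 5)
The paper does not actually prove Lemma~\ref{la:condensation}; it imports it verbatim from Berkholz and Nordstr\"om, so your proposal must be judged against their argument. Your overall architecture --- XOR-substitution along a random bipartite gadget graph, a probabilistically established expansion property, Falsifier lifting with $\Delta\ell_\lo$ pebbles, and a Verifier simulation losing a factor $2\ell_\hi$ in the round budget --- is the right template. But the combinatorial property you hang everything on is false for these parameters. You demand that for \emph{every} $S\subseteq V$ with $|S|\le 2\ell_\hi$, each $x\in S$ has a private neighbour. Since $|W|=\lceil m^{3/\Delta}\rceil$ while the left-degrees sum to $\Delta m$, the average right-degree is $\Delta m^{1-3/\Delta}\gg 2$, so for essentially every $x$ each $w\in\rho(x)$ has a second left-neighbour $x_w$; the set $S=\{x\}\cup\{x_w \mid w\in\rho(x)\}$ has size at most $\Delta+1\le 2\ell_\hi$ and leaves $x$ with no private neighbour. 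Correspondingly, your union bound does not close: there are about $m^{2\ell_\hi}$ candidate sets, but the per-set failure probability is only about $(2\ell_\hi\Delta)^{\Delta}m^{-3}$, so the exponent $3/\Delta$ buys slack $m^{-3}$, enough to absorb $m^{O(1)}$ bad events, not $m^{2\ell_\hi}$ of them. What the hypothesis $(2\ell_\hi\Delta)^{2\Delta}\le m$ does support is the weaker boundary-expansion property that every nonempty $S$ with $|S|\le 2\ell_\hi$ has at least one unique neighbour overall (equivalently $|N(S)|>\Delta|S|/2$); this gives the $\FF_2$-linear independence you also want, but not a per-variable private neighbour.

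This breaks your Part~(B) as described. A freshly pebbled $w$ is adjacent to very many $x\in V$, so ``$w$ is the private neighbour of some $x$'' does not identify a shadow variable; worse, one move can simultaneously complete $\rho(x)$ for several $x$, whose induced values $\beta(x)=\bigoplus_{w'\in\rho(x)}\alpha(w')$ are then all forced by the single bit $\alpha(w)$ --- in particular sums like $\beta(x_1)\oplus\beta(x_2)$ are already fixed by the previously pebbled values no matter how Verifier answers. Your invariant gives no reason why these forced combinations are consistent with the black-box strategy $\tau$. The actual Berkholz--Nordstr\"om proof replaces $\tau$ by the closure characterization of Verifier's winning positions (in this paper's notation, the analogue of Lemma~\ref{la:dupl-closure-to-consistency}) and uses boundary expansion to show that at most $\ell_\hi$ original variables are ever determined by $\ell_\hi$ pebbles on $W$ and that every forced linear constraint already lies in the relevant closure. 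You would need to rebuild (B) along those lines; Part~(A) and the arity/size bookkeeping are fine.
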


Combining Lemmas \ref{la:many-rounds} and \ref{la:condensation}, we obtain the following corollary.

\begin{corollary}
 \label{cor:maximal-rounds}
 There are absolute constants $k_0 \in \NN$ and $\alpha,\varepsilon > 0$ such that for every $d \geq k \geq k_0$ and every $n \geq \alpha \cdot d^8 \cdot k^6$
 there is a set of XOR-constraints $\CC$ of arity at most $k$ over a set $V$ of size $|V| \leq n$ such that Falsifier wins the $k$-pebble game $\CG_{k}(V,\CC,\emptyset)$, but does not win the $r$-round $d$-pebble game $\CG_{d}^{r}(V,\CC,\emptyset)$ for all $r \leq n^{\varepsilon k}$.
\end{corollary}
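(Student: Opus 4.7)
The plan is to obtain Corollary \ref{cor:maximal-rounds} by composing Lemma \ref{la:many-rounds} with the hardness condensation in Lemma \ref{la:condensation}. First I would invoke Lemma \ref{la:many-rounds} with $\ell_\hi \coloneqq d$ and $r \coloneqq 2d \cdot n^{\varepsilon k}$ (for some small $\varepsilon > 0$ to be fixed later), producing a set of XOR-constraints $\CC_0$ of arity at most $\ell_\lo$ over a universe $V_0$ of size $m \leq \delta \cdot d^2 \cdot r = 2\delta \cdot d^3 \cdot n^{\varepsilon k}$ on which Falsifier wins the $\ell_\lo$-pebble game but fails to win the $r$-round $d$-pebble game. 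Then I would apply Lemma \ref{la:condensation} with compression factor $\Delta \coloneqq \lfloor k/\ell_\lo \rfloor$; note that $\Delta \cdot \ell_\lo \leq k$, so the resulting arity $\Delta p \leq k$ is as required, and the output size becomes $\lceil m^{3/\Delta}\rceil$.

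Next I would verify the parameter conditions of Lemma \ref{la:condensation}. Choosing $k_0 \coloneqq \Delta_0 \ell_\lo + \ell_\lo$ ensures $\Delta \geq \Delta_0$. The inequality $\Delta \leq \ell_\hi/\ell_\lo = d/\ell_\lo$ follows since $d \geq k$. The technical condition $(2\ell_\hi \Delta)^{2\Delta} \leq m$, i.e. $(2d\Delta)^{2\Delta} \leq m$, may force $m$ to be larger than what Lemma \ref{la:many-rounds} gives directly; here I invoke Remark \ref{rem:dummy-variables} to pad $V_0$ with dummy variables up to any desired size without affecting either player's strategy. Using $n \geq \alpha d^8 k^6$ with $\alpha$ sufficiently large makes both $(2d\Delta)^{2\Delta} \leq n^{\varepsilon k}$ (for any fixed $\varepsilon > 0$ and $n$ large) and the ultimate size bound achievable.

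After condensation, the new instance $\CD$ lives on $|W| = \lceil m^{3/\Delta}\rceil$ variables, Falsifier wins the $(\Delta \ell_\lo)$-pebble game (hence also the $k$-pebble game since $\Delta \ell_\lo \leq k$), and does not win the $r/(2d)$-round $d$-pebble game. By the choice $r = 2d \cdot n^{\varepsilon k}$, Falsifier needs at least $n^{\varepsilon k}$ rounds in the $d$-pebble game, which is exactly the bound we want. Finally, I would check $|W| \leq n$: plugging in $m \approx 2\delta d^3 n^{\varepsilon k}$ and $3/\Delta \leq 3\ell_\lo/(k - \ell_\lo)$ gives
\[|W| \leq \bigl(2\delta d^3 n^{\varepsilon k}\bigr)^{3\ell_\lo/(k-\ell_\lo)} \leq n,\]
provided $\varepsilon$ is chosen (say) below $1/(6\ell_\lo)$ and $\alpha$ is taken large enough so that the polynomial factor $d^{O(\ell_\lo/k)}$ is absorbed; the assumption $n \geq \alpha d^8 k^6$ is tailored precisely to make this estimate go through uniformly in $d$ and $k$.

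The main obstacle I anticipate is the arithmetic juggling of parameters so that (i) the arity of $\CD$ is at most $k$, (ii) $|W| \leq n$, (iii) the post-condensation round lower bound is at least $n^{\varepsilon k}$, and (iv) the size hypothesis $(2d\Delta)^{2\Delta} \leq m$ of Lemma \ref{la:condensation} is met — all simultaneously, with absolute constants $k_0, \alpha, \varepsilon$ that do not depend on $d$ or $k$. There is nothing deep happening beyond a careful bookkeeping calculation, but pinning down the explicit exponent $d^8 k^6$ in the hypothesis on $n$ requires tracking how the loss factors $d^3$, $\Delta^{2\Delta}$, and $(2d\Delta)^{2\Delta}$ interact with the cube-root in the compression.
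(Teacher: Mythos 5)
Your plan follows the same two‑step skeleton as the paper: produce a hard instance via Lemma~\ref{la:many-rounds}, then compress via Lemma~\ref{la:condensation} with $\Delta = \lfloor k/\ell_\lo\rfloor$. The only structural difference is the direction of the parameter choice: you fix the target round count $r = 2d\, n^{\varepsilon k}$ first and then try to bound $|W| \leq n$, whereas the paper chooses $r$ to be the \emph{largest} integer with $(\delta\ell_\hi^2 r)^{3/\Delta} \leq n$ (so $|W| \leq n$ is automatic and the hypothesis $(2\ell_\hi\Delta)^{2\Delta}\leq m$ follows from $r \geq (2\ell_\hi\Delta)^{2\Delta}$, which is guaranteed by $n \geq n_0$), and then extracts the round bound $r/(2\ell_\hi) \geq n^{\varepsilon k}$ from the maximality of $r$. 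Both directions can be made to work, and the bookkeeping is comparable.

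However, the specific sub-claim ``$(2d\Delta)^{2\Delta}\leq n^{\varepsilon k}$'' is false in the regime you need and cannot be rescued by taking $\alpha$ large. With $n$ near the lower end $\alpha d^8 k^6$ and $\Delta \approx k/\ell_\lo$, the left side grows like $d^{2k/\ell_\lo}$ while $n^{\varepsilon k}\lesssim d^{8\varepsilon k}$, so the inequality would force $\varepsilon \geq 1/(4\ell_\lo)$; this contradicts the constraint $\varepsilon < 1/(6\ell_\lo)$ you impose two lines later for the output size bound. The fix is simply to not insist on that inequality: after padding so that $m = \max\bigl(|V_0|,\, r,\, (2d\Delta)^{2\Delta}\bigr)$, split into two cases. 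If $(2d\Delta)^{2\Delta}$ dominates, then $|W| = \lceil m^{3/\Delta}\rceil = \lceil (2d\Delta)^{6}\rceil \leq (2dk/\ell_\lo)^6 \leq n$ already from $n\geq\alpha d^8k^6$; otherwise $m \leq 2\delta d^3 n^{\varepsilon k}$ and your $|W|\leq n$ estimate applies. With that repair your argument is correct and matches the paper's.
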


\begin{proof}
 Let $\ell_\lo \geq 2$ and $\delta > 1$ denote the constants from Lemma \ref{la:many-rounds}.
 Also, let $\Delta_0$ denote the constant from Lemma \ref{la:condensation} and suppose without loss of generality that $\delta, \Delta_0$ are integers and $\Delta_0 \geq 3$.
 We choose
 \[k_0 \coloneqq \max\{\Delta_0\ell_\lo, 6\ell_\lo\}.\]
 
 Let $d \geq k \geq k_0$.
 We set $p \coloneqq \ell_\lo$, $\ell_\hi \coloneqq d$ and $\Delta \coloneqq \lfloor\frac{k}{\ell_\lo}\rfloor$.
 We have $\ell_\hi = d \geq k \geq k_0 \geq \Delta_0\ell_\lo$ and $\Delta_0 \leq \frac{k}{\ell_\lo}$.
 Since $\Delta_0$ is an integer, we conclude that $\Delta_0 \leq \Delta$.
 
 We define
 \[n_0 \coloneqq \max\{\left(\delta \cdot \ell_\hi^{2} \cdot (2\ell_\hi\Delta)^{2\Delta}\right)^{3/\Delta},4 \cdot \delta \cdot \ell_\hi^{3}\}\]
 and set $\alpha \coloneqq \max\{64\cdot \delta \cdot \ell_\lo^{-6},4\cdot\delta\}$.
 Let $n \geq \alpha \cdot d^8 \cdot k^6$.
 Using $\Delta \geq \Delta_0 \geq 3$, we get that
 \[\left(\delta \cdot \ell_\hi^{2} \cdot (2\ell_\hi\Delta)^{2\Delta}\right)^{3/\Delta} \leq \delta \cdot \ell_\hi^{2} \cdot (2\ell_\hi\Delta)^6 \leq 64 \cdot \delta \cdot \ell_\hi^8 \cdot \left(\frac{k}{\ell_\lo}\right)^6 \leq \alpha \cdot d^8 \cdot k^6\]
 and $4 \cdot \delta \cdot \ell_\hi^{3} \leq \alpha \cdot d^3$.
 So in particular $n \geq n_0$.
 Let $r$ be the maximal integer such that
 \begin{equation}
  \label{eq:choose-r}
  \left(\delta \cdot \ell_\hi^{2} \cdot r\right)^{3/\Delta} \leq n.
 \end{equation}
 Note that $r \geq (2\ell_\hi\Delta)^{2\Delta}$ since $n \geq n_0$.
 Let $\CC$ be the set of XOR-constraints of arity at most $\ell_\lo$ over a set $V$ of size $|V| \leq \delta \cdot \ell_\hi^{2} \cdot r$ obtained from Lemma \ref{la:many-rounds}.
 By adding dummy variables (see Remark \ref{rem:dummy-variables}), we may assume without loss of generality that $m \coloneqq |V| \geq r \geq (2\ell_\hi\Delta)^{2\Delta}$.
 
 By applying Lemma \ref{la:condensation}, we obtain a set of XOR-constraints $\CD$ of arity at most $\Delta p$ over a set $W$ of size $|W| = \lceil m^{3/\Delta}\rceil$ such that Falsifier
 \begin{enumerate}[label = (\Alph*)]
  \item wins the $(\Delta \ell_\lo)$-pebble game $\CG_{\Delta \ell_\lo}(W,\CD,\emptyset)$, but
  \item does not win the $\frac{r}{2\ell_\hi}$-round $\ell_\hi$-pebble game $\CG_{\ell_\hi}^{r/(2\ell_\hi)}(W,\CD,\emptyset)$.
 \end{enumerate}
 First observe that $\Delta p = \lfloor\frac{k}{\ell_\lo}\rfloor \cdot \ell_\lo \leq k$ and
 \[|W| = \left\lceil m^{3/\Delta}\right\rceil \leq \left\lceil\left(\delta \cdot \ell_\hi^{2} \cdot r\right)^{3/\Delta}\right\rceil \leq n.\]
 Since $\Delta \ell_\lo \leq k$, it holds that Falsifier wins the $k$-pebble game $\CG_{k}(W,\CD,\emptyset)$.
 Moreover, Falsifier does not win the $\frac{r}{2\ell_\hi}$-round $d$-pebble game $\CG_{d}^{r/(2\ell_\hi)}(W,\CD,\emptyset)$.
 We have that
 \[\left(\delta \cdot \ell_\hi^{2} \cdot 2r\right)^{3/\Delta} \geq n\]
 since $r$ is the maximal integer to satisfy Equation \eqref{eq:choose-r}.
 This implies that
 \[\frac{r}{2\ell_\hi} \geq \frac{n^{\Delta/3}}{4 \cdot \delta \cdot \ell_\hi^{3}} \geq n^{\frac{1}{3}\lfloor\frac{k}{\ell_\lo}\rfloor - 1} \geq n^{\varepsilon k}\]
 for some sufficiently small absolute constant $\varepsilon > 0$.
\end{proof}

With Corollary \ref{cor:maximal-rounds} at hand, we are now ready to prove Theorems \ref{thm:wl-round-lower-bound} and \ref{thm:quantifier-rank-lower-bound}.

\begin{proof}[Proof of Theorem \ref{thm:wl-round-lower-bound}]
 Let $k_0' \in \NN$ and $\alpha',\varepsilon' > 0$ denote the absolute constants from Corollary \ref{cor:maximal-rounds}.
 
 Let $k_0 \coloneqq \max\{k_0',3\}$.
 We set $\alpha \geq 1$ and $\varepsilon > 0$ in such a way that for all $d \geq k \geq k_0$ and $n \geq \alpha d^8 k^6$ it holds that
 \[\left\lfloor\frac{n}{2}\right\rfloor \geq \alpha'(d+1)^8(k+1)^6\]
 and
 \[\left(\frac{n}{2} - 1\right)^{\varepsilon' k} - d \geq n^{\varepsilon k}.\]
 Now, let us fix some $d \geq k \geq k_0$ and $n \geq \alpha d^8 k^6$.
 Let $d' \coloneqq d+1$, $k' \coloneqq k$ and $n' \coloneqq \lfloor\frac{n}{2}\rfloor$.
 We apply Corollary \ref{cor:maximal-rounds} with parameters $d',k',n'$ and obtain a set of XOR-constraints $\CC$ of arity at most $k'$ over a set $V'$ of size $|V'| \leq n'$ such that Falsifier wins the $k'$-pebble game $\CG_{k'}(V',\CC,\emptyset)$, but does not win the $r'$-round $d'$-pebble game $\CG_{d'}^{r'}(V',\CC,\emptyset)$ for $r' = (n')^{\varepsilon' k'}$.
 
 Let $\FA \coloneqq \FA(\CC)$ and $\FB \coloneqq \FB(\CC)$.
 Then $|V(\FA)| = |V(\FB)| = 2|V'| \leq 2n' \leq n$.
 Note that we can easily increase the size of both structures by adding isolated elements that do not participate in any relations.
 Also, note that both structures have arity at most $k' = k$.
 
 By Lemma \ref{la:translate-bounds} and Corollary \ref{cor:wl-logic-distinguish-structures}, $k$-WL distinguishes between $\FA$ and $\FB$.
 On the other hand, again by Lemma \ref{la:translate-bounds} and Corollary \ref{cor:wl-logic-distinguish-structures}, $d$-WL does not distinguish $\FA$ and $\FB$ after $r \coloneqq r' - d$ refinement rounds.
 We get that
 \[r = r' - d = (n')^{\varepsilon' k'} - d \geq \left(\frac{n}{2} - 1\right)^{\varepsilon' k} - d \geq n^{\varepsilon k}.\qedhere\]
\end{proof}

\begin{proof}[Proof of Theorem \ref{thm:quantifier-rank-lower-bound}]
 This follows directly from Corollary \ref{cor:maximal-rounds} and Lemma \ref{la:translate-bounds}.
\end{proof}

The remainder of this section is devoted to the proof of Lemma \ref{la:many-rounds}.

\subsection{The Closure of the Constraint Set}
\label{sec:closure-operator}

The critical step in the proof of Lemma \ref{la:many-rounds} is to argue that Verifier survives a linear number of rounds even for a large number of pebbles.
Here, we rely on an alternative description of winning positions in terms of a closure operator.

Let $k \in \mathbb{N}$.
Let $V$ be a finite set and let $\CC$ be a set of XOR-constraints over $V$ of arity at most $k$.
We define the \emph{$k$-attractor}
\[\attr_k(\CC) \coloneqq \CC \cup \Big\{(C_1 \oplus C_2,a_1 + a_2 \bmod 2) \;\Big|\; (C_1,a_1),(C_2,a_2) \in \CC, |C_1 \oplus C_2| \leq k\Big\}.\]
Here, $C_1 \oplus C_2$ denotes the symmetric difference between the two sets, that is, $C_1 \oplus C_2 \coloneqq (C_1 \cup C_2) \setminus (C_1 \cap C_2)$.

Intuitively speaking, if $x_1 + \dots + x_\ell \equiv a_1 \bmod 2$ and $y_1 + \dots + y_m \equiv a_2 \bmod 2$ are two constraints in $\CC$,
then every satisfying assignment also has to satisfy the equation $x_1 + \dots + x_\ell + y_1 + \dots + y_m \equiv a_1 + a_2 \bmod 2$.
Since all variables appearing in both sets $\{x_1,\dots,x_k\}$ and $\{y_1,\dots,y_m\}$ cancel over $\FF_2$, we only need to keep those variables appearing in the symmetric difference.
In the case that the resulting number of variables is bounded by $k$, we add the corresponding equation to the $k$-attractor of the constraint set.

We define $\cl_k^{(0)}(\CC) \coloneqq \CC$ and $\cl_k^{(r+1)}(\CC) \coloneqq \attr_k(\cl_k^{(r)}(\CC))$ for all $r \geq 0$.
Finally, we define the \emph{$k$-closure} of $\CC$ to be the set $\cl_k(\CC) \coloneqq \cl_k^{(r)}(\CC)$ for the minimal $r \geq 0$ such that $\cl_k^{(r+1)}(\CC) = \cl_k^{(r)}(\CC)$.

The following lemma provides the key method to prove that Verifier can survive a certain number of rounds.

\begin{lemma}
 \label{la:dupl-closure-to-consistency}
 Let $\beta\colon X \rightarrow \{0,1\}$ be a partial assignment with $|X| \leq k$ such that $\beta$ violates no XOR-constraint $(C,a) \in \cl_k^{(r)}(\CC)$.
 Then Verifier wins $\CG_k^{r}(V,\CC,\beta)$.
\end{lemma}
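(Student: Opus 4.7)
The plan is to proceed by induction on $r$. For the base case $r=0$ we have $\cl_k^{(0)}(\CC)=\CC$, and by definition of the pebble game Verifier wins the $0$-round game $\CG_k^0(V,\CC,\beta)$ precisely when $\beta$ violates no constraint of $\CC$, which is exactly our hypothesis.

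For the inductive step, assume the claim holds for $r$ and consider a position $\beta\colon X\to\{0,1\}$ with $|X|\le k$ that violates no constraint in $\cl_k^{(r+1)}(\CC)=\attr_k(\cl_k^{(r)}(\CC))$. Whenever Falsifier picks $x\in V\setminus X$ and $X'\subseteq X$ with $|X'\cup\{x\}|\le k$, Verifier's strategy is to choose $b\in\{0,1\}$ so that the resulting assignment $\beta'_b\colon X'\cup\{x\}\to\{0,1\}$ violates no constraint of $\cl_k^{(r)}(\CC)$; by the inductive hypothesis, Verifier then wins the remaining $r$-round game from position $\beta'_b$. Note that $\beta|_{X'}$ itself violates no constraint of $\cl_k^{(r)}(\CC)$, since $\cl_k^{(r)}(\CC)\subseteq\cl_k^{(r+1)}(\CC)$.

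The crux is to show that some such $b$ always exists. Suppose for contradiction that both $\beta'_0$ and $\beta'_1$ violate constraints $(C_0,a_0),(C_1,a_1)\in\cl_k^{(r)}(\CC)$, respectively. Since $\beta|_{X'}$ violates nothing in $\cl_k^{(r)}(\CC)$, we must have $x\in C_0\cap C_1$; consequently $x\notin C_0\oplus C_1\subseteq X'$, and $|C_0\oplus C_1|\le|X'|\le k-1$, so
\[(C_0\oplus C_1,\,a_0+a_1\bmod 2)\in\attr_k(\cl_k^{(r)}(\CC))=\cl_k^{(r+1)}(\CC).\]
Adding the two violation equations $\sum_{y\in C_b}\beta'_b(y)\equiv a_b+1\pmod 2$ and observing that the $x$-contribution equals $\beta'_0(x)+\beta'_1(x)=0+1=1$ while each $y\in(C_0\cap C_1)\setminus\{x\}$ contributes $2\beta(y)\equiv 0\pmod 2$, one obtains
\[1+\sum_{y\in C_0\oplus C_1}\beta(y)\equiv(a_0+1)+(a_1+1)\pmod 2,\]
i.e., $\sum_{y\in C_0\oplus C_1}\beta(y)\equiv a_0+a_1+1\pmod 2$. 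Thus $\beta$ violates $(C_0\oplus C_1,a_0+a_1\bmod 2)\in\cl_k^{(r+1)}(\CC)$, contradicting our standing hypothesis on $\beta$.

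The only delicate point is the bookkeeping in this cancellation: one must carefully use that $x\notin X'$, that $C_b\setminus\{x\}\subseteq X'$, and that $\beta'_0$ and $\beta'_1$ agree with $\beta$ on $X'$, in order to identify the derived parity equation as being evaluated at $\beta$ itself rather than at one of the $\beta'_b$. Everything else follows directly from the definitions of the attractor and of a violated constraint.
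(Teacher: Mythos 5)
Your proof is correct and follows essentially the same approach as the paper's: induction on $r$, a by-contradiction argument that if both responses $b\in\{0,1\}$ led to a violated constraint in $\cl_k^{(r)}(\CC)$, then $x$ must lie in both constraint sets, and their symmetric difference yields a constraint in $\cl_k^{(r+1)}(\CC)$ already violated by $\beta$. The only differences are cosmetic (you index the induction from $r$ to $r+1$ rather than $r-1$ to $r$, and you spell out the modular-arithmetic bookkeeping a bit more explicitly).
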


\begin{proof}
 We prove the statement by induction on $r$.
 For $r = 0$ the statement is trivial.
 So suppose $r \geq 1$ and Falsifier chooses $x \in V \setminus X$ and $X' \subseteq X$ such that $|X' \cup \{x\}| \leq k$ in the first round.
 For $b \in \{0,1\}$ let $\beta_b\colon X' \cup \{x\} \rightarrow \{0,1\}$ be the partial assignment with $\beta_b(x') = \beta(x')$ for $x' \in X'$ and $\beta_b(x) = b$.
 Assume towards a contradiction that, for every $b \in \{0,1\}$, there is some XOR-constraint $(C_b,a_b) \in \cl_k^{(r-1)}(\CC)$ violated by $\beta_b$.
 Observe that $x \in C_b$ for both $b \in \{0,1\}$ (since otherwise $\beta$ would violate $(C_b,a_b)$ contradicting our assumption).
 Let $C \coloneqq C_0 \oplus C_1 \subseteq X$ and $a \coloneqq (a_0 + a_1) \bmod 2$.
 Note that $|C| \leq k$ since $C \subseteq X$. 
 Then \[\sum_{y \in C} \beta(y) \equiv \sum_{y \in C_0\setminus\{x\}}\beta(y) + \sum_{y \in C_1\setminus\{x\}} \beta(y) \equiv 1 + \sum_{y \in C_0}\beta_0(y) + \sum_{y \in C_1} \beta_1(y) \equiv 1 + a \mod 2\]
 and $(C,a) \in \cl_k^{(r)}(\CC)$.
 Hence, $\beta$ violates some $(C,a) \in \cl_k^{(r)}(\CC)$ which is a contradiction.
 
 So there is some $b \in \{0,1\}$ such that $\beta_b$ violates no XOR-constraint in $\cl_k^{(r-1)}(\CC)$.
 Verifier chooses such a $b \in \{0,1\}$ and the game moves to position $\beta_b$ which violates no XOR-constraint in $\cl_k^{(r-1)}(\CC)$.
 So Verifier wins $\CG_k^{r-1}(V,\CC,\beta_b)$ by the induction hypothesis which implies that Verifier also wins $\CG_k^{r}(V,\CC,\beta)$.
\end{proof}

\subsection{Layered Graphs and Expansion}

Next, we discuss the construction of certain expander graphs.
Overall, we are aiming to construct what we refer to as \emph{single-neighbor layered expanders}.
Towards this end, we start with constructing standard bipartite expander graphs with an expansion that is close to the minimum degree of one side of the bipartite graph.
We then define \emph{single-neighbor expanders} and observe that bipartite expanders with large expansion also are single-neighbor expanders (with a slightly smaller expansion parameter).
Finally, we obtain \emph{single-neighbor layered expanders} by ``stacking single-neighbor expanders on top of each other''.

\subsubsection{Expander Graphs}

We start by defining standard bipartite expander graphs.

\begin{definition}
 Let $0 < \gamma < 1$ and $\alpha > 1$ be constants and let $G = (V,W,E)$ be a bipartite graph.
 We say that $G$ is an \emph{$(\alpha,\gamma)$-expander} if for every $\emptyset \neq Y \subseteq W$ with $|Y| \leq \gamma |W|$ it holds that \[N(Y) \geq \alpha |Y|.\]
\end{definition}

For more information on expander graphs we refer to \cite{MotwaniR95,Vadhan12}.
The references also contain variants of the following standard argument that guarantees the existence of graphs with good expansion properties.
For our purposes, the crucial property in the lemma below is that the expansion $\alpha$ is relatively close to the degree of the vertices in $W$.

\begin{lemma}
 \label{la:expander}
 There is some number $R_0 \geq 2$ such that for every $r \geq R_0$ and every $n \geq 4r$ there is a $(\frac{3}{4}r,\frac{1}{20r})$-expander $G = (V,W,E)$ such that $|V| = |W| = n$ and $\deg(w) = r$ for all $w \in W$.
\end{lemma}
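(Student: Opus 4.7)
The plan is to prove Lemma \ref{la:expander} via the standard probabilistic construction of bipartite expanders. Given $r \ge R_0$ (with the threshold $R_0$ to be fixed below) and $n \ge 4r$, I would sample a random bipartite graph $G = (V, W, E)$ with $|V| = |W| = n$ by letting each $w \in W$ independently pick a uniformly random $r$-element subset of $V$ as its neighborhood. This guarantees $\deg(w) = r$ for every $w \in W$ by construction, so the only task is to show that with positive probability $G$ is a $(\tfrac{3}{4}r, \tfrac{1}{20r})$-expander.

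For this, I would carry out a union bound over the ``bad'' pairs of sets. For any fixed $Y \subseteq W$ of size $s$ and $T \subseteq V$ of size $t = \lceil \tfrac{3}{4}rs \rceil - 1$, the probability that $N(Y) \subseteq T$ is at most $(\binom{t}{r}/\binom{n}{r})^s \le (t/n)^{rs}$, since the choices are independent and $\binom{t}{r}/\binom{n}{r} \le (t/n)^r$. Hence the probability that $G$ fails to be an expander is bounded by
$$\sum_{s=2}^{\lfloor n/(20r)\rfloor} \binom{n}{s}\binom{n}{t}\left(\frac{t}{n}\right)^{rs}$$
(the $s=1$ term vanishes because every single vertex automatically has $r \ge \tfrac{3}{4}r$ neighbors). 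Using $\binom{n}{s}\le (en/s)^s$ and $\binom{n}{t}\le (en/t)^t$ and writing $x = s/n \le 1/(20r)$, the $s$-th summand simplifies after routine algebra to
$$\left[e^{1+3r/4}\,(3r/4)^{r/4}\, x^{r/4-1}\right]^s \;\le\; \left[20er\cdot (3e^3/80)^{r/4}\right]^s.$$

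The decisive observation is that $3e^{3}/80 \approx 0.75 < 1$, so $(3e^3/80)^{r/4}$ decays exponentially in $r$ while the prefactor $20er$ only grows linearly. Consequently, there is an absolute constant $R_0$ such that, for all $r \ge R_0$, the bracketed expression is at most $1/3$, each summand is at most $3^{-s}$, and the whole sum is strictly less than $1$. Hence a bipartite graph $G$ with the required expansion and degree properties exists.

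The main (and really only) technical step is verifying the key inequality $3e^3/80 < 1$, which is what makes the exponential decay in $r$ possible. This constant also explains the particular parameter pairing $\alpha = \tfrac{3}{4}r$ with $\gamma = \tfrac{1}{20r}$: pushing $\alpha$ too close to $r$ or shrinking $\gamma$ further would push the analogous constant above $1$ and cause the union bound to fail, so the stated parameters are essentially tight for this argument.
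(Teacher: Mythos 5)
Your proof is correct and follows essentially the same route as the paper: the same random construction (each $w\in W$ independently picks a uniformly random $r$-subset of $V$), the same union bound over pairs $(Y,T)$, the same estimates $\binom{n}{s}\le (en/s)^s$, and the same algebraic simplification arriving at the decisive constant $3e^3/80<1$, from which the existence follows by a geometric-series bound. The only cosmetic differences are that you start the sum at $s=2$ (observing the $s=1$ term vanishes) whereas the paper starts at $s=1$, and that you take $t=\lceil\tfrac34 rs\rceil-1$ where the paper takes $\lfloor\tfrac34 rs\rfloor$; both are valid and neither affects the argument.
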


\begin{proof}
 Suppose $r$ is sufficiently large.
 Let $V,W$ be two sets with $|V| = |W| \geq 4r$.
 We construct a bipartite graph $G = (V,W,E)$ using the following random process:
 for each $w \in W$ we select independently and uniformly at random a set of $r$ distinct neighbors from $V$.
 We prove that, for $r$ sufficiently large, with positive probability the graph $G$ is a $(\frac{3}{4}r,\frac{1}{20r})$-expander.

 Let $n \coloneqq |V| = |W|$. For $X \subseteq V$ and $Y \subseteq W$ let~$p_{X,Y}$ denote the probability that $N(Y) \subseteq X$.
 Then
 \[p_{X,Y} \leq \left(\frac{|X|}{n}\right)^{r\cdot |Y|}.\]
 Furthermore, let $\alpha \coloneqq \frac{3}{4}r$ and $\gamma \coloneqq \frac{1}{20 r}$.
 Let $p$ be the probability that $G$ is not a $(\gamma,\alpha)$-expander.
 Then, using the inequality~$\binom{n}{k}\leq (ne/k)^k$, we get
 \begin{align*}
  p &\leq \sum_{\substack{Y \subseteq W\\|Y| \leq \gamma\cdot n}}\;\sum_{\substack{X \subseteq V\\|X| = \lfloor \alpha|Y|\rfloor}} p_{X,Y}\\
    &\leq \sum_{s = 1}^{\lfloor \gamma\cdot n\rfloor} \sum_{\substack{Y \subseteq W\\|Y| =s}}\;\sum_{\substack{X \subseteq V\\|X| =  \lfloor\alpha|Y|\rfloor}} \left(\frac{|X|}{n}\right)^{r\cdot |Y|}\\
    &\leq \sum_{s = 1}^{\lfloor \gamma\cdot n\rfloor} \binom{n}{s}\binom{n}{\lfloor \alpha s \rfloor} \left(\frac{\alpha s}{n}\right)^{r\cdot s}\\
    &\leq \sum_{s = 1}^{\lfloor \gamma\cdot n\rfloor} \left(\frac{ne}{s}\right)^{s} \left(\frac{ne}{\alpha s}\right)^{\alpha \cdot s} \left(\frac{\alpha s}{n}\right)^{r\cdot s}\\
    &= \sum_{s = 1}^{\lfloor \gamma\cdot n\rfloor} \left[\left(\frac{ne}{s}\right) \left(\frac{ne}{\alpha s}\right)^{\alpha} \left(\frac{\alpha s}{n}\right)^{r}\right]^{s}
    \\   
    &= \sum_{s = 1}^{\lfloor \gamma\cdot n\rfloor} \left[\left(\frac{s}{n}\right)^{r-\alpha-1} e^{1+\alpha} \alpha^{r-\alpha}\right]^{s} \\
    &= \sum_{s = 1}^{\lfloor \gamma\cdot n\rfloor} \left[\left(\frac{s}{n}\right)^{r/4-1} e^{1+3r/4} (3r/4)^{r/4}\right]^{s}\\
    &\leq \sum_{s = 1}^{\lfloor \gamma\cdot n\rfloor} \left[\gamma^{r/4-1} e^{1+3r/4} (3r/4)^{r/4}\right]^{s}\;.
 \end{align*}
 Now let $x := \gamma^{r/4-1} e^{1+3r/4} (3r/4)^{r/4}$.
 For~$r$ sufficiently large we get \[x = (20r)^{1-r/4} e^{1+3r/4} (3r/4)^{r/4} = 20er \left(\frac{3e^{3}}{80}\right)^{r/4} <1/10.\] 
 It follows that \[p \leq \sum_{s = 1}^{\infty} x^{s} = \frac{x}{1-x} \leq \frac{1}{9}.\]
 In particular, $p < 1$ which implies the existence of the desired expander graph.
\end{proof}

Next, we turn to what we call single-neighbor expanders where each sufficiently small set $Y \subseteq V$ is required to have a large number of neighbors that additionally satisfy the property that they are the neighbor of only a single vertex from $Y$.
Let $G = (V,W,E)$ be a bipartite graph.
For $Y \subseteq W$ we define
\[N^{*}(Y) = \{v \in N(Y) \mid |N(v) \cap Y| = 1\}.\]

\begin{definition}
 Let $0 < \gamma < 1$ and $\alpha > 1$ be constants and let $G = (V,W,E)$ be a bipartite graph.
 We say that $G$ is an \emph{$(\alpha,\gamma)$-single-neighbor expander} if for every $\emptyset \neq Y \subseteq W$ with $|Y| \leq \gamma |W|$ it holds that \[N^{*}(Y) \geq \alpha |Y|.\]
\end{definition}

We can obtain single-neighbor expanders from Lemma \ref{la:expander} by allowing some loss on the expansion parameter $\alpha$.

\begin{corollary}
 \label{cor:single-neighbor-expander}
 There is some number $R_0 \geq 5$ such that for every $r \geq R_0$ and every $n \geq 4r$ there is a $(\frac{1}{4}r,\frac{1}{20r})$-single-neighbor expander $G = (V,W,E)$ such that $|V| = |W| = n$ and $\deg(w) = r$ for all $w \in W$.
\end{corollary}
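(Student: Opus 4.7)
The plan is to deduce the corollary directly from Lemma \ref{la:expander} by a double-counting argument, losing only a constant factor in the expansion parameter $\alpha$. Let $R_0$ be the constant from Lemma \ref{la:expander} (taken at least $5$), and for $r \geq R_0$ and $n \geq 4r$ let $G = (V,W,E)$ be the $(\tfrac{3}{4}r, \tfrac{1}{20r})$-expander with $|V|=|W|=n$ and $\deg(w)=r$ for all $w \in W$ provided by the lemma. The claim is that this same graph $G$ is already a $(\tfrac{1}{4}r,\tfrac{1}{20r})$-single-neighbor expander.

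Fix $Y \subseteq W$ with $1 \leq |Y| \leq \tfrac{1}{20r}|W|$, and set $s \coloneqq |Y|$. Partition the neighborhood as $N(Y) = N^{*}(Y) \cup B$, where $B \coloneqq N(Y) \setminus N^{*}(Y)$ consists of the vertices in $V$ having at least two neighbors in $Y$. Count the edges between $Y$ and $N(Y)$ in two ways. On the one hand, since $\deg_G(w) = r$ for all $w \in Y$, the total count equals $rs$. On the other hand, each vertex of $N^{*}(Y)$ contributes exactly $1$ edge and each vertex of $B$ contributes at least $2$ edges, so
\[
|N^{*}(Y)| + 2|B| \;\leq\; rs.
\]

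Combining this with the expansion guarantee $|N^{*}(Y)| + |B| = |N(Y)| \geq \tfrac{3}{4}rs$ from Lemma \ref{la:expander}, subtracting yields $|B| \leq rs - \tfrac{3}{4}rs = \tfrac{1}{4}rs$, and hence
\[
|N^{*}(Y)| \;\geq\; \tfrac{3}{4}rs - |B| \;\geq\; \tfrac{3}{4}rs - \tfrac{1}{4}rs \;=\; \tfrac{1}{2}rs \;\geq\; \tfrac{1}{4}r|Y|,
\]
which is the required single-neighbor expansion. No step is really an obstacle here — the proof is essentially just the standard observation that a bipartite expander with $\alpha$ close to the right-degree automatically has linear single-neighbor expansion. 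The only thing to watch is the bookkeeping between the two inequalities, and possibly increasing $R_0$ if needed to ensure $R_0 \geq 5$ in addition to the threshold supplied by Lemma \ref{la:expander}.
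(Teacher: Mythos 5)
Your proof is correct and follows essentially the same double-counting argument as the paper: take the $(\tfrac{3}{4}r,\tfrac{1}{20r})$-expander from Lemma~\ref{la:expander}, compare the edge count $r|Y|$ against the expansion lower bound on $|N(Y)|$, and conclude that the number of multi-neighbor vertices $B$ must be small. Your bookkeeping even yields the slightly stronger bound $|N^{*}(Y)| \geq \tfrac{1}{2}r|Y|$, whereas the paper settles for the trivial bound $|B| \leq \tfrac{1}{2}r|Y|$ and obtains exactly $\tfrac{1}{4}r|Y|$.
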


\begin{proof}
 Choose $R_0 \coloneqq \max(5,R_0')$ where $R_0'$ is the constant from Lemma \ref{la:expander} and suppose $r \geq R_0$ and $n \geq 4r$.
 By Lemma \ref{la:expander}, there is a $(\frac{3}{4}r,\frac{1}{20r})$-expander $G = (V,W,E)$ such that $|V| = |W| = n$ and $\deg(w) = r$ for all $w \in W$.
 We claim that $G$ is a $(\frac{1}{4}r,\frac{1}{20r})$-single-neighbor expander.
 Let $Y \subseteq W$ with $|Y| \leq \frac{n}{20r}$.
 Then $|N(Y)| \geq \frac{3}{4}r|Y|$.
 Furthermore $|N(Y)| = |N^{*}(Y)| + |\{v \in N(Y) \mid |N(v) \cap Y| \geq 2\}| \leq |N^{*}(Y)| + \frac{1}{2}r|Y|$ because every vertex in $Y$ has degree $r$.
 Thus, $|N^{*}(Y)| \geq \frac{1}{4}r|Y|$.
\end{proof}

\subsubsection{Layered Graphs}

Now, we turn to the construction of single-neighbor layered expanders which is the main tool for constructing the desired constraint sets in the proof of Lemma \ref{la:many-rounds}.
We start by defining a certain notion of layered graphs (see also Figure \ref{fig:layered-graph}).

Let $\ell,m \in \NN$.
An \emph{$(\ell \times m)$-layered graph} is a bipartite graph $G = (V,W,E)$ for which there are partitions $V = V_0 \uplus \dots \uplus V_\ell$ and $W = W_1 \uplus \dots \uplus W_\ell$ such that
\begin{enumerate}
 \item $|V_i| = m$ for all $i \in [0,\ell]$,
 \item $|W_i| = m$ for all $i \in [\ell]$,
 \item $N_G(W_i) \subseteq V_{i-1} \cup V_i$ for all $i \in [\ell]$, and
 \item $G[V_i \cup W_i]$ is $1$-regular (i.e., a matching) for all $i \in [\ell]$.
\end{enumerate}

\begin{figure}
 \centering
 \begin{tikzpicture}
  \draw[thick,gray] (0,0) ellipse (4cm and 0.3cm);
  \draw[thick,gray] (0,1) ellipse (4cm and 0.3cm);
  \draw[thick,gray] (0,2) ellipse (4cm and 0.3cm);
  \draw[thick,gray] (0,3) ellipse (4cm and 0.3cm);
  \draw[thick,gray] (0,5) ellipse (4cm and 0.3cm);
  \draw[thick,gray] (0,6) ellipse (4cm and 0.3cm);
  
  \node at (5,0) {$V_0$};
  \node at (5,1) {$W_1$};
  \node at (5,2) {$V_1$};
  \node at (5,3) {$W_2$};
  \node at (5,5) {$W_\ell$};
  \node at (5,6) {$V_\ell$};
  
  \node at (-2,4) {$\vdots$};
  \node at (2,4) {$\vdots$};
  
  \foreach \i in {0,1,2,3,5,6}{
   \foreach \j in {0,1,2,3,4,8,9,10}{
    \node[smallvertex] (v-\i-\j) at (-3 + \j*0.6,\i) {};
   }
   \node at (0.6,\i) {$\dots$};
  }
  
  \foreach \i/\j in {1/2,5/6}{
   \foreach \k in {0,1,2,3,4,8,9,10}{
    \draw (v-\i-\k) edge (v-\j-\k);
   }
  }
  
  \foreach \i/\j in {0/1,2/3}{
   \foreach \k/\m in {0/0,0/2,1/0,1/1,1/2,2/1,2/3,3/2,3/3,3/4,4/3,4/4,8/8,8/9,9/8,9/10,10/9,10/10}{
    \draw (v-\j-\k) edge (v-\i-\m);
   }
  }
  
 \end{tikzpicture}
 \caption{Visualization of $(\ell \times m)$-layered graphs.}
 \label{fig:layered-graph}
\end{figure}
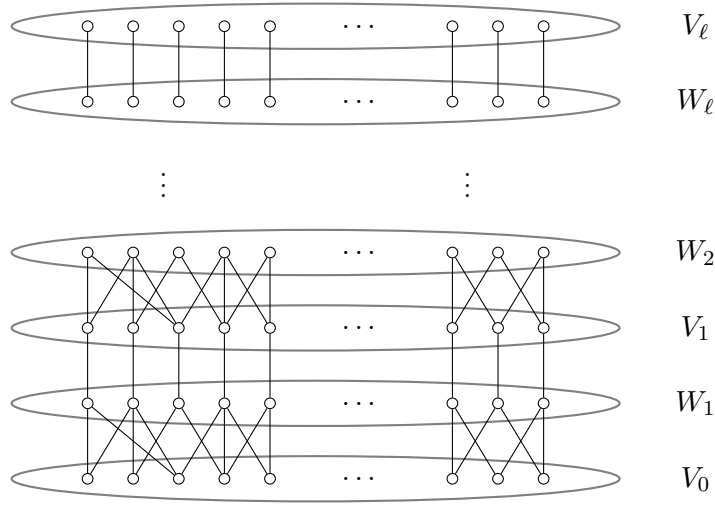

With this, we are now ready to define the notion of single-neighbor layered expanders.

\begin{definition}
 Let $\ell,m \geq 2$.
 Let $0 < \gamma < 1$ and $\alpha > 1$ be constants and let $G = (V,W,E)$ be an $(\ell \times m)$-layered graph.
 We say that $G$ is an \emph{$(\alpha,\gamma)$-single-neighbor $(\ell \times m)$-layered expander} if for every $\emptyset \neq Y \subseteq W$ with $|Y| \leq \gamma m$ it holds that \[N^{*}(Y) \geq \alpha |Y|.\]
\end{definition}

Note that an $(\alpha,\gamma)$-single-neighbor $(\ell \times m)$-layered expander is not a $(\alpha,\gamma)$-single-neighbor expander since we are only considering sets $Y \subseteq W$ of size $|Y| \leq \gamma m$, i.e., we are only considering sets that are smaller (by a factor of $\gamma$) than a single layer of a layered graph.
In particular, the reader is encouraged to think of $\ell$ being much larger than $m$.
In this case, such a graph is far from being a (global) expander, but the key property is that it behaves like an expander when only considering a few layers of the graph.

By again allowing some small loss on the expansion parameter $\alpha$, we can obtain single-neighbor layered expanders by ``stacking $\ell$ copies of a single-neighbor expander on top of each other''.

\begin{corollary}
 \label{cor:single-neighbor-layered-expander}
 There is some number $R_0 \geq 9$ such that for every $r \geq R_0$, every $\ell \geq 1$, and every $m \geq 4r$ there is a $(\frac{1}{4}r - 1,\frac{1}{20r})$-single-neighbor $(\ell \times m)$-layered expander $G = (V,W,E)$ with partitions $V = V_0 \uplus \dots \uplus V_\ell$ and $W = W_1 \uplus \dots \uplus W_\ell$ such that $N_G(w) \cap V_{i-1} = r$ for all $w \in W_i$ and all $i \in [\ell]$.
\end{corollary}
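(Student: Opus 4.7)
\medskip

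\noindent\textbf{Proof proposal.} The plan is to construct $G$ by stacking $\ell$ independent single-neighbor expanders vertically, interleaved with perfect matchings that provide the ``rung'' edges required by the definition of an $(\ell\times m)$-layered graph. Concretely, take pairwise disjoint sets $V_0,\dots,V_\ell,W_1,\dots,W_\ell$, each of cardinality $m$. For every $i\in[\ell]$, invoke Corollary~\ref{cor:single-neighbor-expander} with parameters $r$ and $m\ge 4r$ to obtain a $(\tfrac14 r,\tfrac1{20r})$-single-neighbor expander $H_i=(V_{i-1},W_i,E_i)$ in which every $w\in W_i$ has degree $r$; and independently add a perfect matching $M_i$ between $V_i$ and $W_i$. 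Let $G=(V,W,E)$ with $V=\bigcup_{i}V_i$, $W=\bigcup_{i}W_i$, $E=\bigcup_{i}(E_i\cup M_i)$. Choosing $R_0\coloneqq\max\{9,R_0'\}$ (where $R_0'$ is the constant from Corollary~\ref{cor:single-neighbor-expander}) guarantees $\tfrac14 r-1>1$ and the applicability of that corollary. It is immediate from the construction that $G$ is an $(\ell\times m)$-layered graph and that $|N_G(w)\cap V_{i-1}|=r$ for every $w\in W_i$.

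To verify the single-neighbor expansion, fix $\emptyset\neq Y\subseteq W$ with $|Y|\le\tfrac{m}{20r}$ and set $Y_i\coloneqq Y\cap W_i$. Observe that the neighbourhood structure of $G$ is very restricted: a vertex $v\in V_j$ with $1\le j\le\ell-1$ has neighbours only in $W_{j+1}$ (through $H_{j+1}$) and a single matching partner in $W_j$; the boundary cases $V_0$ and $V_\ell$ contribute neighbours only to $W_1$ and to $W_\ell$ respectively. Since $|Y_{j+1}|\le |Y|\le\tfrac m{20r}$, Corollary~\ref{cor:single-neighbor-expander} applied to $H_{j+1}$ yields at least $\tfrac14 r|Y_{j+1}|$ vertices in $V_j$ with exactly one neighbour in $Y_{j+1}$ via $H_{j+1}$. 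Such a vertex fails to lie in $N^*(Y)$ only if its matching partner in $W_j$ is itself in $Y_j$, and at most $|Y_j|$ vertices of $V_j$ have this property. Therefore, for $1\le j\le\ell-1$,
\[
  |N^*(Y)\cap V_j|\;\ge\;\tfrac14 r|Y_{j+1}|-|Y_j|,
\]
while $|N^*(Y)\cap V_0|\ge\tfrac14 r|Y_1|$ and $|N^*(Y)\cap V_\ell|=|Y_\ell|$ (as every $v\in V_\ell$ has a unique neighbour, namely its matching partner).

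Summing these contributions gives
\[
  |N^*(Y)|\;\ge\;\tfrac14 r\sum_{j=1}^{\ell}|Y_j|\;-\;\sum_{j=1}^{\ell-1}|Y_j|\;+\;|Y_\ell|
  \;\ge\;\bigl(\tfrac14 r-1\bigr)|Y|,
\]
which is the desired expansion. The only non-routine point is this telescoping computation: one has to observe that the ``loss'' caused by the matching edges at interior layers is compensated both by the endpoint layer $V_\ell$ and by the fact that the expander neighbourhoods of different layers lie in disjoint vertex sets, so there is no further double-counting. Everything else (the layered structure, the regularity of the matchings, and the degree condition $|N_G(w)\cap V_{i-1}|=r$) is built in by construction, and the parameter constraints $r\ge R_0$, $m\ge 4r$ are exactly what Corollary~\ref{cor:single-neighbor-expander} needs for the per-layer expanders to exist.
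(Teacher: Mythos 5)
Your proof is correct and follows essentially the same route as the paper: stack single-neighbor expanders connected by perfect matchings, partition $Y$ by layer, apply the per-layer expansion, and account for the at most $|Y_j|$ losses the matching edges can cause. The only cosmetic difference is that you use $\ell$ independently-drawn expanders where the paper duplicates a single one, and your accounting is organized from the $V_j$-side whereas the paper's is organized by contributions $\alpha|Y_i|-|Y_{i-1}|$; both telescope to the same $(\tfrac14 r-1)|Y|$ bound.
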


\begin{proof}
 Choose $R_0 \coloneqq \max(9,R_0')$ where $R_0'$ is the constant from Corollary \ref{cor:single-neighbor-expander} and suppose $r \geq R_0$, $\ell \geq 1$, and $m \geq 4r$.
 By Corollary \ref{cor:single-neighbor-expander}, there is a $(\frac{1}{4}r,\frac{1}{20r})$-single-neighbor expander $G' = (V',W',E')$ such that $|V'| = |W'| = m$ and $\deg_{G'}(w') = r$ for all $w' \in W'$.
 Suppose $V' = \{v_1',\dots,v_m'\}$ and $W' = \{w_1',\dots,w_m'\}$.
 
 We set $V_i \coloneqq \{v_{i,1},\dots,v_{i,m}\}$ for all $i \in [0,\ell]$ and $W_i \coloneqq \{w_{i,1},\dots,w_{i,m}\}$ for all $i \in [\ell]$.
 Also, we set
 \[E \coloneqq \{v_{i-1,j}w_{i,k} \mid i \in [\ell], v_j'w_k' \in E'\} \cup \{v_{i,j}w_{i,j} \mid i \in [\ell], j \in [m]\}.\]
 Clearly, $G = (V,W,E)$ is an $(\ell \times m)$-layered graph.
 
 Let $\alpha \coloneqq \frac{1}{4}r$ and $\gamma \coloneqq \frac{1}{20r}$.
 Also let $Y \subseteq W$ such that $|Y| \leq \gamma m$.
 We define $Y_i \coloneqq Y \cap W_i$ for all $i \in [\ell]$.
 Observe that $|Y_i| \leq \gamma m$ for all $i \in [\ell]$ and $Y_1,\dots,Y_\ell$ forms a partition of $Y$.
 Let $\mathcal{I} \coloneqq \{i \in [\ell] \mid Y_i \neq \emptyset\}$.
 Since $G'$ is an $(\alpha,\gamma)$-single-neighbor expander, we conclude that
 \[|N^{*}(Y_i) \cap V_{i-1}| \geq \alpha |Y_i|\]
 for all $i \in \mathcal{I}$. 
 Moreover, since $G[V_{i} \cup W_{i}]$ is $1$-regular (i.e., a matching) for all $i \in [\ell]$, we conclude that
 \[|N^{*}(Y) \cap V_{i-1}| \geq \alpha |Y_i| - |Y_{i-1}|\]
 for all $i \in \mathcal{I}$ (we set $Y_0 \coloneqq \emptyset$).
 So overall
 \[|N^{*}(Y)| \geq \sum_{i \in \mathcal{I}} \alpha |Y_i| - |Y_{i-1}| \geq \sum_{i \in \mathcal{I}} (\alpha - 1) |Y_i| = (\alpha - 1)|Y|\]
 as desired.
\end{proof}

\subsection{Constraint Sets from Layered Expanders}

Now, we turn to the construction of constraint sets from layered graphs.
For a bipartite graph $G = (V,W,E)$ we define the XOR-constraint set $\CC_G \coloneqq \{(N(w),0) \mid w \in W\}$ over the variable set $V$.
Slightly abusing notation, for $C \subseteq V$, we shall also write $C \in \CC_G$ if $(C,0) \in \CC_G$.

The basic idea for the construction of the XOR-constraint set $\CC$ is to take a layered graph $G = (V,W,E)$ with partitions $V = V_0 \uplus \dots \uplus V_\ell$ and $W = W_1 \uplus \dots \uplus W_\ell$, and set
\[\CC \coloneqq \CC_G \cup \big\{(\{x\},0) \bigmid x \in V_0\big\} \cup \big\{(\{x_\ell\},1)\big\}\]
for some arbitrary $x_\ell \in V_\ell$.
It is not difficult to see that this constraint set is unsatisfiable.
Indeed, every variable in layer $V_0$ needs to be set to $0$, and if all variables in layer $V_{i-1}$ are set to $0$, then the constraints obtained from the vertices in $W_i$ enforce that every variable in layer $V_i$ needs to be set to $0$ as well (using that $G[V_i \cup W_i]$ is a matching).
This inductive argument can be easily turned into a winning strategy for Falsifier that requires $O(\ell)$ many rounds (assuming the degree of all vertices in $W$ is bounded by some absolute constant $d \leq k$ where $k$ denotes the number of variables available in the game).

Now, the central claim is that, if we start with a single-neighbor layered expander, this strategy is essentially optimal.
Let us suppose for the moment that only constraints from $\CC_G$ are present and consider the $k$-closure $\cl_k(\CC_G)$.
What we need to avoid is that $\cl_k(\CC_G)$ contains some constraint that is ``non-local''.
For example, if $\cl_k(\CC_G)$ would contain a constraint $(\{x_1,x_2,x_3\},0)$ such that $x_1,x_2 \in V_0$ and $x_3 \in V_\ell$, then Falsifier could use such a (derived) constraint to immediately conclude that certain variables in the last layer need to be set to $0$ and potentially follow a different strategy to win the game faster.
The main point is that, by using single-neighbor layered expanders, we ensure that all ``relevant'' constraints in $\cl_k(\CC_G)$ are ``local'', i.e., they can only contain variables of $O(k)$ consecutive layers.
(Here, the reader may note that if $|N(w_1) \cup N(w_2)| \leq k$ then $(N(w_1) \oplus N(w_2),0)$ is always contained in the closure even if $w_1$ and $w_2$ are far apart. However, in such a case, $N(w_1) \cap N(w_2) = \emptyset$ and the derived constraint $(N(w_1) \cup N(w_2),0)$ is not ``relevant'' since, whenever it is violated by a partial assignment, one of the constraints associated with $w_1$ or $w_2$ is also violated.)
This way, even when adding all constraints from $\cl_k(\CC_G)$ to the initial set, the best that Falsifier can do is essentially to follow the above inductive strategy (with the exception that Falsifier may skip up to $O(k)$ layers in one step which, however, does not cause any problems for our arguments).

For technical reasons, the formal arguments slightly deviate from the intuitive ideas described above.
To start, instead of working with the $k$-closure $\cl_k(\CC_G)$, it turns out to be more convenient to work with the following set. 

Let $\alpha > 1$ and $k \geq 1$.
We define the set
\[\cl_{k,\alpha}^{*}(\CC_G) \coloneqq \Bigg\{\Bigg(\bigoplus_{D \in \CD}D,0\Bigg) \Biggmid \CD \subseteq \CC_G,|\CD| \leq \frac{k}{\alpha}, \Big|\bigoplus_{D \in \CD}D\Big| \leq k\Bigg\}.\]
We remark that, for $\CD = \emptyset$, the constraint $(\emptyset,0)$ is added to $\cl_{k,\alpha}^{*}(\CC_G)$.
Observe that $\CC_G \subseteq \cl_{k,\alpha}^{*}(\CC_G)$ if $k \geq \alpha$ (which is always the case in our constructions).
So the next lemma implies that $\cl_k(\CC_G) \subseteq \cl_{k,\alpha}^{*}(\CC_G)$ if $G$ is a suitable single-neighbor layered expander.

\begin{lemma}
 \label{la:single-neighbor-layered-small-sum-closed}
 Suppose $\alpha > 1$ and $0 < \gamma < 1$.
 Let $G = (V,W,E)$ be an $(\alpha,\gamma)$-single-neighbor $(\ell \times m)$-layered expander such that $\deg(w) \leq d$ for all $w \in W$ and suppose $d \leq k \leq \frac{1}{2}\gamma m$.
 Then
 \[\attr_k\left(\cl_{k,\alpha}^{*}(\CC_G)\right) = \cl_{k,\alpha}^{*}(\CC_G).\]
\end{lemma}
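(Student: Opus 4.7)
The inclusion $\cl_{k,\alpha}^{*}(\CC_G) \subseteq \attr_k(\cl_{k,\alpha}^{*}(\CC_G))$ holds by definition of $\attr_k$, so the plan is to prove the reverse inclusion. Note that every constraint in $\cl_{k,\alpha}^{*}(\CC_G)$ has second coordinate $0$, so it suffices to check that if $(C_1,0),(C_2,0) \in \cl_{k,\alpha}^{*}(\CC_G)$ and $|C_1 \oplus C_2| \leq k$, then $(C_1 \oplus C_2, 0) \in \cl_{k,\alpha}^{*}(\CC_G)$.

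By definition, there are $\CD_1,\CD_2 \subseteq \CC_G$ with $|\CD_i| \leq k/\alpha$ and $C_i = \bigoplus_{D \in \CD_i} D$. Let $\CD \coloneqq \CD_1 \oplus \CD_2$ (symmetric difference of the two subfamilies of $\CC_G$). Using that symmetric difference of sets is associative and that $D \oplus D = \emptyset$, we obtain
\[
C_1 \oplus C_2 \;=\; \bigoplus_{D \in \CD_1}\! D \;\oplus\; \bigoplus_{D \in \CD_2}\! D \;=\; \bigoplus_{D \in \CD} D.
\]
Identify $\CD$ with the set $Y \subseteq W$ of vertices whose neighborhoods form $\CD$, so $|Y| = |\CD| \leq |\CD_1| + |\CD_2| \leq 2k/\alpha$. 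The hypotheses $k \leq \tfrac{1}{2}\gamma m$ and $\alpha > 1$ imply $2k/\alpha \leq \gamma m$, so $|Y| \leq \gamma m$.

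The main step is the expansion argument. If $|Y| = 0$, then $C_1 \oplus C_2 = \emptyset$ and the conclusion follows by choosing $\CD' = \emptyset$. Otherwise, by the single-neighbor expansion property, $|N^{*}(Y)| \geq \alpha |Y|$. Every vertex in $N^{*}(Y)$ has exactly one neighbor in $Y$, hence belongs to $N(w)$ for exactly one $w \in Y$, and therefore lies in $\bigoplus_{w \in Y} N(w) = \bigoplus_{D \in \CD} D = C_1 \oplus C_2$. This gives
\[
\alpha |Y| \;\leq\; |N^{*}(Y)| \;\leq\; |C_1 \oplus C_2| \;\leq\; k,
\]
so $|Y| \leq k/\alpha$. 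Since also $|C_1 \oplus C_2| \leq k$, the pair $(C_1 \oplus C_2, 0)$ is witnessed by $\CD$ itself and therefore lies in $\cl_{k,\alpha}^{*}(\CC_G)$, as desired.

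I do not foresee a serious obstacle: the only subtle point is the verification $|Y| \leq \gamma m$, which is needed before invoking the single-neighbor expansion bound, and this is exactly why the hypothesis $k \leq \tfrac{1}{2}\gamma m$ is imposed. Everything else is a direct unfolding of definitions and a single application of the expansion inequality, combined with the observation $N^{*}(Y) \subseteq \bigoplus_{w \in Y} N(w)$ that converts a vertex-expansion statement into a symmetric-difference statement.
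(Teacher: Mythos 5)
Your proof is correct and follows essentially the same argument as the paper: decompose $C_1\oplus C_2$ as a symmetric difference over a subfamily $\CD\subseteq\CC_G$ of size at most $2k/\alpha\le\gamma m$, pass to the corresponding set $Y\subseteq W$, and apply the single-neighbor expansion together with $N^*(Y)\subseteq C_1\oplus C_2$ to force $|Y|\le k/\alpha$. The only cosmetic difference is that you argue directly while the paper phrases the last step as a proof by contradiction (and you add an explicit $|Y|=0$ case, which the paper absorbs by allowing $\CD=\emptyset$ in the definition of $\cl_{k,\alpha}^*$).
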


\begin{proof}
 Let $\CC^{*} \coloneqq \cl_{k,\alpha}^{*}(\CC_G)$.
 Suppose $C \in \attr_k(\CC^{*})$, that is, there are $C_1,C_2 \in \CC^{*}$ such that $C \coloneqq C_1 \oplus C_2$ and $|C| \leq k$.
 By definition, there are integers $s,t \leq \frac{k}{\alpha}$ and $D_1,\dots,D_s,D_{s+1},\dots,D_{s+t} \in \CC_G$ such that $C_1 = D_1 \oplus \dots \oplus D_s$ and $C_2 = D_{s+1} \oplus \dots \oplus D_{s+t}$.
 Moreover, $D_1,\dots,D_s$ are pairwise distinct as well as $D_{s+1},\dots,D_{s+t}$ are pairwise distinct.
 We have $C = D_1 \oplus \dots \oplus D_{s+t}$.
 Let
 \[\CD \coloneqq \{D_1,\dots,D_s\} \oplus \{D_{s+1},\dots,D_{s+t}\}\] and let $Y \coloneqq \{w \in W \mid N(w) \in \CD\}$.
 Clearly, $C = \bigoplus_{D \in \CD} D$.
 Suppose towards a contradiction that $|\CD| > \frac{k}{\alpha}$.
 Then $|Y| > \frac{k}{\alpha}$ and moreover, $|Y| \leq s+t \leq 2\frac{k}{\alpha} \leq 2k \leq \gamma m$ and thus, $|N^{*}(Y)| \geq \alpha|Y| > k$. But on the other hand $N^{*}(Y) \subseteq C$ which implies that $|N^{*}(Y)| \leq k$.
 This is a contradiction.
 So $|\CD| \leq \frac{k}{\alpha}$ which implies that $C \in \CC^{*}$ as desired.
\end{proof}

\begin{lemma}
 \label{la:single-neighbor-layered-no-singletons}
 Suppose $\alpha > 1$ and $0 < \gamma < 1$.
 Let $G = (V,W,E)$ be an $(\alpha,\gamma)$-single-neighbor $(\ell \times m)$-layered expander such that $\deg(w) \leq d$ for all $w \in W$ and suppose $d \leq k \leq \frac{1}{2}\gamma m$.
 Then $|C| \geq 2$ for all $C \in \cl_{k,\alpha}^{*}(\CC_G)$ such that $C \neq \emptyset$.
\end{lemma}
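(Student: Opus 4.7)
The plan is to apply the single-neighbor expansion property of $G$ directly to the set $Y \subseteq W$ whose neighborhoods are XOR-ed to produce $C$, and exploit the fact that every vertex in $N^*(Y)$ contributes exactly one incidence and hence survives the XOR.

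First I would unfold the definition: for any nonempty $C \in \cl_{k,\alpha}^*(\CC_G)$, fix a representation $C = \bigoplus_{D \in \CD} D$ with $\CD \subseteq \CC_G$ and $|\CD| \leq k/\alpha$, and set $Y := \{w \in W \mid N(w) \in \CD\}$, so that $|Y| = |\CD|$. Since the empty symmetric difference is empty and $C \neq \emptyset$, we must have $\CD \neq \emptyset$, hence $|Y| \geq 1$. Combining $|Y| \leq k/\alpha \leq k$ with the hypothesis $k \leq \tfrac{1}{2}\gamma m$ gives $|Y| \leq \gamma m$, so $Y$ lies in the regime where the single-neighbor expansion guarantee of $G$ applies and yields $|N^*(Y)| \geq \alpha|Y| \geq \alpha$.

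Second I would establish the key set inclusion $N^*(Y) \subseteq C$: by definition of $N^*(Y)$, any $v \in N^*(Y)$ has exactly one neighbor $w \in Y$, so $v$ belongs to exactly one constraint $N(w) \in \CD$ and is therefore not cancelled when we XOR the constraints in $\CD$, i.e.\ $v \in C$. Combining this with the expansion lower bound gives $|C| \geq |N^*(Y)| \geq \alpha > 1$, and since $|C|$ is an integer the conclusion $|C| \geq 2$ follows.

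I do not expect a genuine obstacle; the proof is essentially a one-line application of single-neighbor expansion once one sets up $Y$ correctly. The two points that require care are (i) dispatching the degenerate case $\CD = \emptyset$ separately, since the definition of $\cl_{k,\alpha}^*$ explicitly permits this possibility, and (ii) checking that $|Y| \leq \gamma m$ before invoking expansion, which is where the hypothesis $k \leq \tfrac{1}{2}\gamma m$ is used.
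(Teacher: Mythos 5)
Your proposal is correct and follows essentially the same route as the paper: take a representation $C = \bigoplus_{D \in \CD} D$ with $|\CD| \le k/\alpha$, form the corresponding witness set $Y \subseteq W$, check $1 \le |Y| \le k \le \gamma m$, and conclude via $N^*(Y) \subseteq C$ and the single-neighbor expansion bound $|N^*(Y)| \ge \alpha|Y| > 1$. The only difference is cosmetic — you spell out why $N^*(Y) \subseteq C$ (single-incidence survives the XOR) and explicitly dispatch the $\CD = \emptyset$ case, whereas the paper asserts both; both arguments implicitly use that the map $w \mapsto N(w)$ is injective on $Y$ when asserting $|Y| \le k$, which is harmless here since duplicate neighborhoods would already contradict single-neighbor expansion on a two-element set.
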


\begin{proof}
 Let $C \in \cl_{k,\alpha}^{*}(\CC_G)$ such that $C \neq \emptyset$ and let $C_1,\dots,C_s \in \CC_G$ such that $C = C_1 \oplus \dots \oplus C_s$ for some $s \leq \frac{k}{\alpha} \leq k$.
 Furthermore, let $Y \coloneqq \{w \in W \mid \exists i \in [s] \colon N(w) = C_i\}$.
 Observe that $1 \leq |Y| \leq k \leq \gamma m$.
 Then $N^{*}(Y) \subseteq C$ and thus, $|C| \geq |N^{*}(Y)| \geq \alpha |Y| > 1$.
\end{proof}

Next, we prove that Falsifier wins the pebble game if we set all variables in layer $V_0$ to $0$, and a single variable in the last layer $V_\ell$ to $1$.
For technical reasons, we do not add $(\{x_\ell\},1)$ to the constraint set, but rather consider an initial assignment that assigns value $1$ to variable $x_\ell$.

\begin{lemma}
 \label{la:falsifier-win}
 Let $G = (V,W,E)$ be an $(\ell \times m)$-layered graph with partitions $V = V_0 \uplus \dots \uplus V_\ell$ and $W = W_1 \uplus \dots \uplus W_\ell$ such that $\deg(w) \leq k$ for all $w \in W$.
 Let $x_\ell \in V_\ell$ and suppose $\beta_\ell\colon\{x_\ell\} \rightarrow \{0,1\}$ is the partial assignment defined via $\beta_\ell(x_\ell) = 1$.
 Then Falsifier wins $\CG_k(W,\CC,\beta_\ell)$ where
 \[\CC \coloneqq \CC_G \cup \big\{(\{x\},0) \bigmid x \in V_0\big\}.\]
\end{lemma}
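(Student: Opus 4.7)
The plan is to describe a concrete, round-by-round winning strategy for Falsifier that uses the layered matching structure of $G$ to push the inconsistency from the top layer $V_\ell$ all the way down to $V_0$, where a singleton constraint will be violated. Maintaining the invariant ``some variable $x_i \in V_i$ is currently pebbled with value $1$'' throughout, the initial position $\beta_\ell$ establishes this for $i = \ell$, and the base case $i = 0$ is immediate: the variable $x_0 \in V_0$ is pebbled with value $1$, so the constraint $(\{x_0\}, 0) \in \CC$ is violated and Falsifier wins.

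The core of the argument is the inductive step from layer $i$ to layer $i-1$. Suppose $x_i \in V_i$ is pebbled with value $1$. Let $w \in W_i$ be the unique vertex matched to $x_i$ under the matching $G[V_i \cup W_i]$, and set $S \coloneqq N(w) \setminus \{x_i\} \subseteq V_{i-1}$. Since $\deg(w) \leq k$, we have $|S| \leq k - 1$. Falsifier first discards every pebble except the one on $x_i$, and then, in $|S|$ subsequent rounds, asks Verifier to assign values to the elements of $S$ one at a time, always keeping $x_i$ pebbled. The number of pebbles in use at any moment is at most $|S| + 1 \leq k$, so each move is legal. After the final of these rounds, the current assignment $\beta$ has $x_i$ and all of $S$ pebbled. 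The constraint $(N(w), 0) = (\{x_i\} \cup S, 0) \in \CC_G \subseteq \CC$ then reads
\[
\beta(x_i) + \sum_{y \in S} \beta(y) \equiv 0 \pmod 2.
\]
If this fails, Falsifier has already won. Otherwise, since $\beta(x_i) = 1$, there exists $y \in S$ with $\beta(y) = 1$; pick this $y$ as the new $x_{i-1} \in V_{i-1}$, discard the remaining pebbles, and the invariant is restored at level $i - 1$.

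Iterating this procedure at most $\ell$ times drives the process down to $V_0$, at which point the singleton constraint wins the game for Falsifier. There is no real obstacle here beyond bookkeeping, since the lemma imposes no bound on the number of rounds; the only subtlety is checking at each step that the pebble budget $k$ is respected, which follows from $\deg(w) \leq k$ and the fact that Falsifier is free to drop pebbles before placing a new one. Note that the hypotheses on $G$ (being an expander, the second partition side $W$, etc.) are not needed for this lemma; only the $(\ell \times m)$-layered structure and the degree bound $\deg(w) \leq k$ are used.
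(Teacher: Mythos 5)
Your proof is correct and follows essentially the same inductive strategy as the paper: maintain the invariant that some $x_i \in V_i$ is pebbled with value $1$, use the matching vertex $w \in W_i$ and its XOR-constraint $(N(w),0)$ to push a forced value-$1$ variable into $V_{i-1}$, and conclude at $V_0$ with a violated singleton constraint. Your version merely spells out the round-by-round pebbling more explicitly (the paper states ``Falsifier can move to a partial assignment $\beta_{i-1}\colon N_G(w_i)\to\{0,1\}$'' without elaborating, and treats the degenerate case $N_G(w_i)=\{x_i\}$ separately where your argument absorbs it uniformly), and your closing observation that the expansion hypotheses are not used here is correct.
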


\begin{proof}
 We prove by induction on $i = 0,\dots,\ell$ that Falsifier wins $\CG_k(V,\CC,\beta_i)$ where $\beta_i$ is any partial assignment for which $\beta_i(x_i) = 1$ for some $x_i \in V_i$.
 
 The base case $i = 0$ is trivial since $(\{x_0\},0) \in \CC$ for every $x_0 \in V_0$.
 For the inductive step, suppose $i \in [\ell]$ and consider some partial assignment $\beta_i$ for which there is some $x_i \in V_i$ such that $\beta_i(x_i) = 1$.
 Since $G = (V,W,E)$ is an $(\ell \times m)$-layered graph, there is a unique vertex $w_i \in W_i$ such that $w_ix_i \in E$.
 Moreover, $N_G(w_i) \subseteq V_{i-1} \cup V_i$.
 If $N_G(w_i) = \{x_i\}$, then $(\{x_i\},0) \in \CC_G$ and Falsifier wins immediately.
 So suppose that $N_G(w_i) \cap V_{i-1} \neq \emptyset$.
 Since $\deg(w_i) \leq k$, Falsifier can move to a partial assignment $\beta_{i-1}\colon X_i \rightarrow \{0,1\}$ where $X_i = N_G(w_i)$ and $\beta_{i-1}(x_i) = 1$.
 If $\beta_{i-1}$ violates the XOR-constraint $(X_i,0)$, then Falsifier wins immediately.
 Otherwise, $\sum_{y \in X_i} \beta_{i-1}(y) = 0$.
 Together with the fact that $\beta_{i-1}(x_i) = 1$, this implies that there is some $x_{i-1} \in X_i \cap V_{i-1}$ such that $\beta_{i-1}(x_{i-1}) = 1$.
 So Falsifier wins by the induction hypothesis.
\end{proof}

The next lemma forms the key technical lemma stating that Falsifier requires a large number of rounds to win if the constraint set is obtained from a single-neighbor layered expander.

\begin{lemma}
 \label{la:verifier-win-few-rounds}
 Suppose $\alpha > 1$ and $0 < \gamma < 1$.
 Let $G = (V,W,E)$ be an $(\alpha,\gamma)$-single-neighbor $(\ell \times m)$-layered expander with partitions $V = V_0 \uplus \dots \uplus V_\ell$ and $W = W_1 \uplus \dots \uplus W_\ell$ such that $\deg(w) \leq d$ for all $w \in W$ and suppose $d \leq k \leq \frac{1}{2}\gamma m$.
 
 Let $x_\ell \in V_\ell$ and suppose $\beta_\ell\colon\{x_\ell\} \rightarrow \{0,1\}$ is the partial assignment defined via $\beta_\ell(x_\ell) = 1$.
 Then Verifier wins $\CG_k^{r-1}(V,\CC,\beta_\ell)$ where
 \[\CC \coloneqq \CC_G \cup \big\{(\{x\},0) \bigmid x \in V_0\big\}\]
 and $r \coloneqq \lfloor\ell/2k\rfloor$.
\end{lemma}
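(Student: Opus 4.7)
The plan is to apply Lemma~\ref{la:dupl-closure-to-consistency}, so it suffices to show that $\beta_\ell$ violates no XOR-constraint in $\cl_k^{(r-1)}(\CC)$. A key preliminary observation is that every constraint in $\CC$ has right-hand-side value $0$---both the $\CC_G$-constraints $(N(w),0)$ and the singletons $(\{x\},0)$ for $x \in V_0$---and since $\attr_k$ takes sums modulo $2$, every $(C,a)\in\cl_k(\CC)$ satisfies $a=0$. Because the domain of $\beta_\ell$ is $\{x_\ell\}$ with $\beta_\ell(x_\ell)=1$, the only constraint in $\cl_k^{(r-1)}(\CC)$ that $\beta_\ell$ could violate is $(\{x_\ell\},0)$. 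The entire proof therefore reduces to establishing
\[(\{x_\ell\},0) \notin \cl_k^{(r-1)}(\CC).\]

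I would prove this by induction on $t$, aiming at the following invariant: whenever a singleton $(\{x\},0)$ with $x \in V_i$ and $i\ge 1$ lies in $\cl_k^{(t)}(\CC)$, one necessarily has $i \le 2kt$. Applied with $t=r-1=\lfloor\ell/(2k)\rfloor-1$ and $x=x_\ell \in V_\ell$ this forces $\ell \le 2k(r-1) \le \ell - 2k$, a contradiction. The base case $t=0$ is immediate since the only singletons in $\CC$ live in $V_0$. For the induction step, write $\{x\} = C_1 \oplus C_2$ with $(C_1,0),(C_2,0) \in \cl_k^{(t-1)}(\CC)$ and $|C_1|,|C_2| \le k$; without loss of generality $x \in C_1$ and $C_2 = C_1 \setminus \{x\}$. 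The heart of the argument is to extract from the pair $(C_1,C_2)$ a ``descendant'' singleton $(\{x'\},0) \in \cl_k^{(t-1)}(\CC)$ with $x' \in V_{i'}$ and $i' \ge i-2k$; applying the inductive hypothesis to $x'$ then gives $i - 2k \le i' \le 2k(t-1)$, hence $i \le 2kt$.

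To perform this extraction, I would decompose each $C_j$ as a ``pure'' part in $\cl_{k,\alpha}^*(\CC_G)$ combined with a $V_0$-tail, exploiting Lemma~\ref{la:single-neighbor-layered-small-sum-closed} so that pure parts behave well under $\attr_k$ and Lemma~\ref{la:single-neighbor-layered-no-singletons} to rule out the pure part from itself being a singleton above $V_0$. Since $|C_1|+|C_2| \le 2k$, the symmetric difference can only ``bridge'' at most $2k$ layers of the layered expander per iteration, which is exactly where the constant $2k$ in the inductive bound comes from; the single-neighbor expansion then forces any non-trivial cancellation across this span to leave a witness $\{x'\}$ at a layer within $2k$ of $V_i$.

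The main obstacle will be making this descendant-extraction step quantitatively airtight: the symmetric difference $C_1 \oplus C_2 = \{x\}$ permits the two constraints to cancel in intricate ways across several layers simultaneously, and the bookkeeping must track not just the ``pure'' parts in $\cl_{k,\alpha}^*(\CC_G)$ but also how the $V_0$-tails interact with them. Concretely, one has to argue that after removing from $C_1$ and $C_2$ their common $V_0$-contributions, at least one of the two remaining ``non-$V_0$'' parts is itself a singleton lying in a layer $V_{i'}$ with $i' \ge i-2k$, to which the inductive hypothesis can be applied cleanly.
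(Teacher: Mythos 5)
The inductive invariant you propose---tracking only singletons in the closure---is too weak to carry the induction, and the ``descendant extraction'' step you flag as the main obstacle cannot be made to work in the form stated. The parents $C_1,C_2 \in \cl_k^{(t-1)}(\CC)$ from which $\{x\}$ is derived need not themselves be singletons, and there is in general no singleton $(\{x'\},0)$ at a nearby layer already sitting in $\cl_k^{(t-1)}(\CC)$: a derivation can pass entirely through larger constraints that collapse to a singleton only at the final step. Worse, Lemma~\ref{la:single-neighbor-layered-no-singletons} says precisely that the nonempty ``pure'' parts in $\cl_{k,\alpha}^*(\CC_G)$ are \emph{never} singletons, so the hope that ``at least one of the two remaining non-$V_0$ parts is itself a singleton'' is ruled out by the very lemma you invoke. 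Your decomposition into a pure part plus a ``$V_0$-tail'' is also too restrictive: the low-layer tail must be allowed to grow with the iteration count, not stay confined to $V_0$.

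The paper's proof repairs all of this by strengthening the invariant to cover every constraint, not just singletons: it proves $\cl_k^{(i)}(\CC^*) \subseteq \CC_i^*$, where $\CC^* = \cl_{k,\alpha}^*(\CC_G) \cup \{(\{x\},0): x\in V_0\} \supseteq \CC$ and $\CC_i^*$ consists of all $C$ with $|C|\le k$ writable as $C = D \oplus U$ with $D \in \cl_{k,\alpha}^*(\CC_G)$ and $U \subseteq V_0 \cup \dots \cup V_{2ik}$. To propagate this under $\attr_k$, the key device---absent from your sketch---is a ``pivot'' layer $\lambda \in \{2ik+1,\dots,2(i+1)k\}$ that misses all of the at most $2k/\alpha < 2k$ generating vertices of $D_1,D_2$; splitting each $C_j$ at $\lambda$ keeps the high parts inside $\cl_{k,\alpha}^*(\CC_G)$ (using Lemma~\ref{la:single-neighbor-layered-small-sum-closed}) while dumping everything below $\lambda$ into the enlarged tail. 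The final step, showing $\{x_\ell\}\notin \CC_{r-1}^*$, then uses Lemma~\ref{la:single-neighbor-layered-no-singletons} essentially as you intended, but on the high fragment of the pure part $D$ rather than on a purported singleton ancestor.
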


\begin{proof}
 Let
 \[\CC_G^{*} \coloneqq \cl_{k,\alpha}^{*}(\CC_G)\]
 and define $\CC^{*} \coloneqq \CC_G^{*} \cup \{(\{x\},0) \mid x \in V_0\}$.
 We show that Verifier wins $\CG_k^{r-1}(V,\CC^{*},\beta_\ell)$ which clearly implies the claim since $\CC \subseteq \CC^{*}$ (using that $\alpha \leq d \leq k$).
 By Lemma \ref{la:dupl-closure-to-consistency}, it suffices to show that $\beta_\ell$ violates no XOR-constraint from the set $\cl_k^{(r-1)}(\CC^{*})$, or equivalently $(\{x_\ell\},0) \notin \cl_k^{(r-1)}(\CC^{*})$ (note that all constraints in $\cl_k^{(r-1)}(\CC^{*})$ are of the form $(C,0)$).
 
 We define
 \[\CV_i \coloneqq \bigcup_{j = 0}^{2ik} V_j\]
 for all $i \in \{0,\dots,\lfloor\ell/2k\rfloor\}$.
 Finally, we define
 \[\CC_i^{*} \coloneqq \Big\{C \subseteq V \Bigmid |C| \leq k, C = D \oplus U \text{ for some } D \in \CC_G^{*}, U \subseteq \CV_i\Big\}\]
 for all $i \in \{0,\dots,\lfloor\ell/2k\rfloor\}$.
 
 \begin{claim}
  \label{claim:attr-update}
  $\attr_k(\CC_i^{*}) \subseteq \CC_{i+1}^{*}$ for all $i \in \{0,\dots,\lfloor\ell/2k\rfloor-1\}$.
 \end{claim}
 \begin{claimproof}
  Let $C_1,C_2 \in \CC_i^{*}$ such that $|C_1 \oplus C_2| \leq k$.
  Let $C \coloneqq C_1 \oplus C_2$.
  For $j \in \{1,2\}$ pick $D_j \in \CC_G^{*}$ and $U_j \subseteq \CV_i$ such that $C_j = D_j \oplus U_j$.
  Let $U' \coloneqq U_1 \oplus U_2$.
  Clearly, $U' \subseteq \CV_i$ and $C = D_1 \oplus D_2 \oplus U'$.
  
  Let $Y_j \subseteq W$, $j \in \{1,2\}$, be a set of vertices of size $|Y_j| \leq \frac{k}{\alpha} < k$ such that $D_j = \bigoplus_{w \in Y_j} N(w)$ (recall that such a set $Y_j$ exists by the definition of $\cl_{k,\alpha}^{*}(\CC_G)$; for $D_j = \emptyset$ we set $Y_j \coloneqq \emptyset$).
  Then there is some $\lambda \in \{2ik+1,\dots,2(i+1)k\}$ such that $W_\lambda \cap (Y_1 \cup Y_2) = \emptyset$.
  We define
  \[Y_j^{< \lambda} \coloneqq Y_j \cap W_{< \lambda}\]
  where $W_{< \lambda} \coloneqq \bigcup_{\mu < \lambda} W_\mu$ and
  \[Y_j^{> \lambda} \coloneqq Y_j \cap W_{> \lambda}\]
  where $W_{> \lambda} \coloneqq \bigcup_{\mu > \lambda} W_\mu$.
  Moreover, let
  \[C_j^{> \lambda} \coloneqq \bigoplus_{w \in Y_j^{> \lambda}}N(w)\]
  for both $j \in \{1,2\}$.
  We have
  \[C_j^{> \lambda} \subseteq C_j\]
  because $C_j^{> \lambda} \subseteq D_j$ (since $W_\lambda \cap Y_j = \emptyset$) and $C_j^{> \lambda} \cap U_j = \emptyset$ (since $\lambda > 2ki$).
  Also let
  \[C_{> \lambda} \coloneqq C_1^{> \lambda} \oplus C_2^{> \lambda} \subseteq C.\]
  Hence, $|C_j^{> \lambda}| \leq k$ and $|C_{> \lambda}| \leq k$.
  So $C_j^{> \lambda} \in \CC_G^{*}$ for both $j \in \{1,2\}$.
  It follows that $C_{> \lambda} \in \CC_G^{*}$ by Lemma \ref{la:single-neighbor-layered-small-sum-closed}.
  
  Now, $C = C_{> \lambda} \oplus U$ for some $U \subseteq V_0 \cup \dots \cup V_{\lambda-1} \subseteq \CV_{i+1}$.
  It follows that $C \in \CC_{i+1}^{*}$.
 \end{claimproof}
 
 Since $\CC^{*} \subseteq \CC_0^{*}$ (this holds since $(\emptyset,0) \in \CC_G^{*}$) it follows by induction that
 \begin{equation}
  \cl_k^{(i)}(\CC^{*}) \subseteq \CC_i^{*}
 \end{equation}
 for all $i \in \{0,\dots,\lfloor\ell/2k\rfloor\}$ using Claim \ref{claim:attr-update}.
 So it only remains the prove the following claim.
 
 \begin{claim}
  $\{x_\ell\} \notin \CC_{r-1}^{*}$.
 \end{claim}
 \begin{claimproof}
  Let $C \in \CC_{r-1}^{*}$ such that $C \cap V_\ell \neq \emptyset$.
  Also pick $D \in \CC_G^{*}$ and $U \subseteq \CV_{r-1}$ such that $C = D \oplus U$ (which exist by the definition of $\CC_{r-1}^{*}$).
  We have that
  \[U \subseteq \CV_{r-1} = \bigcup_{i = 0}^{2k(r-1)} V_i \subseteq \bigcup_{i = 0}^{\ell - 2k} V_i.\]
  Let $Y \subseteq W$ such that $|Y| \leq \frac{k}{\alpha} < k$ and $D = \bigoplus_{w \in Y} N(w)$.
  Let $\lambda \in [\ell]$ be the maximal number such that $Y \cap W_\lambda = \emptyset$.
  Note that $\lambda > \ell - k$ since $|Y| < k$.
  Now let $D' \coloneqq \bigoplus_{w \in Y \cap W_{> \lambda}} N(w)$ where $W_{> \lambda} \coloneqq \bigcup_{\mu > \lambda} W_\mu$.
  Then $D' = C \cap (V_\lambda \cup \dots \cup V_\ell)$ and hence, $|D'| \leq k$.
  It follows that $D' \in \cl_{k,\alpha}^{*}(\CC_G)$.
  Also $|D'| \geq 1$ since $C \cap V_\ell \neq \emptyset$.
  So $|D'| \geq 2$ by Lemma \ref{la:single-neighbor-layered-no-singletons} and thus, $|C| \geq 2$.
 \end{claimproof}
\end{proof}

Finally, we require one more technical lemma that allows us to add the XOR-constraint $(\{x_\ell\},1)$ to the final constraint set. 

\begin{lemma}
 \label{la:strategy-assign-one-variable}
 Let $k \geq 2$ and $r \geq 1$.
 Let $V$ be a finite set and let $\CC$ be a set of XOR-constraints over $V$.
 Let $x_0 \in V$ and define $\beta_0\colon\{x_0\} \rightarrow \{0,1\}$ via $\beta_0(x_0) = 1$.
 If Verifier wins $\CG_k^{r}(V,\CC,\beta_0)$, then Verifier also wins $\CG_{k-1}^{r}(V,\CC \cup \{(\{x_0\},1)\},\emptyset)$.
\end{lemma}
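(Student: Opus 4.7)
The plan is to convert a winning strategy $\sigma$ for Verifier in the game $\CG_k^r(V,\CC,\beta_0)$ into a winning strategy $\tau$ for Verifier in the game $\CG_{k-1}^r(V,\CC\cup\{(\{x_0\},1)\},\emptyset)$. The idea is that while Verifier plays $\tau$ in the new $(k-1)$-pebble game, she simulates a ``shadow'' play of the original $k$-pebble game, where the shadow position $\gamma$ always coincides with the real position $\beta$ extended (if $x_0\notin\mathrm{dom}(\beta)$) by the extra pair $(x_0,1)$. This costs exactly one additional pebble, so $|\mathrm{dom}(\gamma)|\le k$ throughout. Initially, $\beta=\emptyset$ and $\gamma=\beta_0=\{(x_0,1)\}$. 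The invariant to be maintained is that $\gamma$ is a position from which $\sigma$ still wins $\CG_k^s(V,\CC,\gamma)$, where $s$ is the number of remaining rounds.

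For each round, Falsifier chooses $x\in V\setminus\mathrm{dom}(\beta)$ and $X'\subseteq\mathrm{dom}(\beta)$ with $|X'\cup\{x\}|\le k-1$. If $x\ne x_0$, Verifier treats this as a shadow move with the choice $(x,X'\cup\{x_0\})$; the bound $|X'\cup\{x_0,x\}|\le k$ makes this a legal move in the $k$-game, and $\sigma$ prescribes a response $b$ which Verifier plays in both games. If $x=x_0$, Verifier must answer $b=1$ (otherwise the constraint $(\{x_0\},1)$ is immediately violated in the real game), and the shadow position is updated to $\gamma|_{X'\cup\{x_0\}}$, which equals the new real position. In both cases one checks that the invariants tying $\beta$ to $\gamma$ are preserved and that no constraint of $\CC$ is violated in the real game (any such violation would already occur in the shadow, contradicting $\sigma$).

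The main obstacle is the case $x=x_0$: restricting $\gamma$ is not itself a Falsifier move in the original $k$-game, so one must still certify that $\sigma$ can be continued from the restricted shadow position. I would handle this via an auxiliary claim: if Verifier wins $\CG_k^s(V,\CC,\gamma)$ and $S\subseteq\mathrm{dom}(\gamma)$, then Verifier also wins $\CG_k^s(V,\CC,\gamma|_S)$. The proof is a short induction on $s$: given any Falsifier move $(x,X')$ from $\gamma|_S$, either $x\in\mathrm{dom}(\gamma)\setminus S$, in which case Verifier safely answers $\gamma(x)$ and the new position is a further restriction of $\gamma$ (apply the inductive hypothesis); or $x\notin\mathrm{dom}(\gamma)$, in which case Verifier copies $\sigma$'s response to the identical move $(x,X')$ from $\gamma$ (legal since $X'\subseteq S\subseteq\mathrm{dom}(\gamma)$), and the resulting position coincides with the one $\sigma$ promises to be winning for $s-1$ more rounds. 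With this restriction-closure property of winning positions, the overall proof that $\tau$ wins $\CG_{k-1}^r(V,\CC\cup\{(\{x_0\},1)\},\emptyset)$ reduces to a routine induction on $r$ verifying that the invariants persist through each simulated round.
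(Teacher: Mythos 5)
Your proof is correct and takes essentially the same route as the paper: both maintain the invariant that the current position of the $(k-1)$-game, extended by $x_0 \mapsto 1$ when $x_0$ is not pebbled, is a winning position for Verifier in the remaining $k$-pebble game on $\CC$, and that whenever $x_0$ is pebbled it carries value $1$. The one thing you add that the paper's terse argument glosses over is the auxiliary restriction-closure claim (winning positions remain winning after dropping pebbles), which is exactly what is needed to handle the case where Falsifier queries $x_0$ itself; making that step explicit is a genuine improvement in rigor rather than a different approach.
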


\begin{proof}
 Consider a position $\beta\colon X \rightarrow \{0,1\}$ of the game $\CG_{k-1}^{r}(V,\CC \cup \{(\{x_0\},1)\},\emptyset)$.
 Throughout the game, by following a winning strategy for $\CG_k^{r}(V,\CC,\beta_0)$, Verifier can maintain the following properties after every round $\ell \in [0,r]$:
 \begin{enumerate}[label = (\roman*)]
  \item If $x_0 \in X$, then $\beta(x_0) = 1$, and
  \item Verifier wins the game $\CG_k^{r-\ell}(V,\CC,\beta')$ where $\beta' \colon X \cup \{x_0\} \rightarrow \{0,1\}$ is defined via $\beta'(x) \coloneqq \beta(x)$ for all $x \in X$ and $\beta(x_0) \coloneqq 1$.
 \end{enumerate}
 Observe that the condition is satisfied initially since Verifier wins $\CG_k^{r}(W,\CC,\beta_0)$.
 All positions reached this way clearly satisfy all XOR-constraints in $\CC \cup \{(\{x_0\},1)\}$ which implies that Verifier wins $\CG_{k-1}^{r}(W,\CC \cup \{(\{x_0\},1)\},\emptyset)$.
\end{proof}

With this, we are ready to prove Lemma \ref{la:many-rounds}.

\begin{proof}[Proof of Lemma \ref{la:many-rounds}]
 Let $R_0 \geq 9$ denote the constant from Corollary \ref{cor:single-neighbor-layered-expander} and define $\ell_\lo \coloneqq R_0 + 1$.
 Let $d \coloneqq R_0$, $\alpha \coloneqq \frac{1}{4}d - 1 > 1$ and $\gamma \coloneqq \frac{1}{20d}$.
 Let $\ell_\hi \geq \ell_\lo$ and $r \geq 1$ be given.
 We define $k \coloneqq \ell_{\hi} + 1$.
 Also, let $m \coloneqq 2 \cdot \frac{k}{\gamma} = 40dk \geq 4d$ and $\ell \coloneqq 2k(r+1)$.
 
 By Corollary \ref{cor:single-neighbor-layered-expander}, there is an $(\alpha,\gamma)$-single-neighbor $(\ell \times m)$-layered expander $G = (V,W,E)$ such that $\deg(w) = d+1$ for all $w \in W$.
 Let $V_0,\dots,V_\ell$ and $W_1,\dots,W_\ell$ denote the layers of $G$.
 Also fix some arbitrary element $x_\ell \in V_\ell$.
 We define
 \[\CC \coloneqq \CC_G \cup \{(\{x\},0) \mid x \in V_0\} \cup \{(\{x_\ell\},1)\}.\]
 Note that $\CC$ is a set of XOR-constraints over $V$ of arity at most $d+1 = \ell_\lo$.
 
 To complete the proof, we show that $\CC$ has the desired properties.
 First,
 \[|V| = (\ell + 1)m = (2k(r+1) + 1)40dk \leq 8kr \cdot 40dk = 320R_0 (\ell_\hi + 1)^{2}r \leq \delta \cdot \ell_\hi^{2} \cdot r\]
 for some suitable absolute constant $\delta$.
 Moreover, Falsifier wins the $\ell_\lo$-pebble game $\CG_{\ell_\lo}(V,\CC,\emptyset)$ by Lemma \ref{la:falsifier-win}.
 Finally, by Lemma \ref{la:verifier-win-few-rounds}, Verifier wins $\CG_k^{r}(V,\CC \setminus \{(\{x_\ell\},1)\},\beta_\ell)$ where $\beta_\ell\colon\{x_\ell\} \rightarrow \{0,1\}$ is the partial assignment defined via $\beta_\ell(x_\ell) = 1$.
 So Verifier wins the $r$-round $\ell_\hi$-pebble game $\CG_{\ell_\hi}^{r}(V,\CC,\emptyset)$ by Lemma \ref{la:strategy-assign-one-variable}.
\end{proof}

\section{Trading Variable Number for Quantifier Depth}
\label{sec:tradeoffs}
In this section, we investigate tradeoffs between the number of variables and the quantifier rank of formulas used to distinguish relational structures.
More concretely, suppose $\FA$ and $\FB$ are two structures of size $n$ that are distinguished by $k$-WL.
By Corollary \ref{cor:wl-logic-distinguish-structures}, there is a formula $\varphi \in \LCk{k+1}$ such that $\FA \models \varphi$ and $\FB \not\models \varphi$.
Using Theorem \ref{thm:wl-round-upper-bound}, we may assume that $\varphi$ has quantifier rank at most $O(kn^{k-1}\log n)$.
In this section, we show that there are sentences $\psi$ that distinguish between $\FA$ and $\FB$ with smaller quantifier rank if we are allowed to increase the number of variables by some function in $k$.
In other words, we can show improved bounds on the number of WL-iterations required to distinguish between $\FA$ and $\FB$ (compared to Theorem \ref{thm:wl-round-upper-bound}) by increasing the dimension of the WL-algorithm.

\begin{theorem}[Theorem \ref{thm:trading-upper-bound-intro} restated]
 \label{thm:trading-upper-bound}
 Let $k \geq 2$.
 Let $\FA$ and $\FB$ be two relational structures of arity at most $k$ such that $n \coloneqq |V(\FA)| = |V(\FB)|$.
 Also suppose there is a sentence $\varphi \in \LCk{k+1}$ such that $\FA \models \varphi$ and $\FB \not\models \varphi$.
 Let $d \coloneqq \lceil\frac{3(k+1)}{2}\rceil$.
 Then there is a sentence $\psi \in \LCkq{d}{q}$ of quantifier rank $q = O(k^2 \cdot n^{\lfloor k/2\rfloor + 1} \log n)$ such that $\FA \models \psi$ and $\FB \not\models \psi$.
\end{theorem}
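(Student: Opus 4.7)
The plan is to generalize the algebraic framework of Section~\ref{sec:upper-bounds} by equipping $\Complex^{V^k}$ with a richer tensor multiplication whose corresponding matrix algebra has smaller ambient dimension, at the price of requiring more variables to express. Let $s \coloneqq \lfloor k/2 \rfloor + 1$ and $t \coloneqq k - s = \lceil k/2 \rceil - 1$, and associate to every $\vec a \in \Complex^{V^k}$ the matrix $M_{\vec a} \in \Complex^{V^s \times V^s}$ defined by
\[
 M_{\vec a}\bigl((v_1,\dots,v_s),(w_1,\dots,w_s)\bigr) \coloneqq
 \begin{cases}
  \vec a(v_1,\dots,v_s,w_{s-t+1},\dots,w_s) & \text{if } v_i = w_i \text{ for all } i \in [s-t],\\
  0 & \text{otherwise.}
 \end{cases}
\]
The map $\vec a \mapsto M_{\vec a}$ is injective and linear, and pulling back standard matrix multiplication on $\FullMatrixAlg_{n^s}(\Complex)$ yields the associative product
\[
 (\vec a \cdot \vec b)(v_1,\dots,v_k) \coloneqq \sum_{u_{s-t+1},\dots,u_s \in V} \vec a(v_1,\dots,v_s,u_{s-t+1},\dots,u_s) \, \vec b(v_1,\dots,v_{s-t},u_{s-t+1},\dots,u_s,v_{s+1},\dots,v_k)
\]
on $\Complex^{V^k}$. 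Together with the natural involution (conjugating and swapping coordinates $s-t+1,\dots,s$ with $s+1,\dots,k$), this becomes a $*$-subalgebra of $\FullMatrixAlg_{n^s}(\Complex)$, so Theorem~\ref{thm:semisimple-sequence-length} bounds any chain of semisimple $*$-subalgebras by $2 n^{\lfloor k/2 \rfloor + 1}$.

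Next I would prove the analog of Lemma~\ref{la:multiplication-in-logic}: if $\Alg_\CP$ (with respect to the new multiplication) distinguishes two $k$-tuples $\vec v,\vec w$, then there is a formula $\varphi(\vec x) \in \LCkq{d}{q_0}$ with $q_0 = O(t \lceil \log \ell \rceil) = O(k^2 \log n)$ that distinguishes them over $\CP$, where $\ell \leq n^k$ is the product-length bound given by Lemma~\ref{la:multiplication-length}. The argument mirrors the original: split the product $\vec c_{i_1} \cdots \vec c_{i_\ell}$ into two halves of comparable length, apply the induction hypothesis to each half, and combine using $t$ nested counting quantifiers that implement the summation over $u_{s-t+1},\dots,u_s$. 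The combined formula uses $k + t$ variables simultaneously; since $d - (k+t) \in \{2,3\}$ for both parities of $k$, the remaining slack suffices to carry out the recursive substitutions without variable collisions.

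The third step is to adapt the proof scheme of Theorem~\ref{thm:upper-bound}. Interpret each ``round'' of the underlying refinement as one application of the new multiplication, which absorbs $t$ quantifiers worth of $\LCk{d}$-formulas at a time; associate to each round $r$ the $*$-subalgebra $\Alg^{(r)}$ of the new algebra generated by the characteristic vectors of the current coloring. By the analog of Claim~\ref{claim:algebra-inclusion} in the proof of Theorem~\ref{thm:upper-bound}, any element of $\Alg^{(r)}$ corresponds to a formula of rank $q_0$ in $\LCk{d}$, which is absorbed by the coloring after $\lceil q_0 / t \rceil + 1 = O(k \log n)$ further rounds; combined with strict refinement this forces a strict inclusion in the algebra chain every $O(k \log n)$ rounds. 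With at most $2 n^{\lfloor k/2 \rfloor + 1}$ strict inclusions in total, the iteration terminates after $O(k \cdot n^{\lfloor k/2 \rfloor + 1} \cdot \log n)$ rounds, and since each round contributes $t = O(k)$ quantifiers to the resulting $\LCk{d}$-formula, a distinguishing sentence for $\FA$ and $\FB$ has quantifier rank $O(k^2 \cdot n^{\lfloor k/2 \rfloor + 1} \cdot \log n)$ as required.

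The main obstacle will be the variable and quantifier-rank bookkeeping in the divide-and-conquer step: I must verify that $d = \lceil 3(k+1)/2 \rceil$ is enough to host the external tuple $\vec v$, the summed tuple $\vec u$, and the recursive scratch needed when substituting into the formulas for $\vec a$ and $\vec b$. A related subtlety is to confirm that the new ``block-diagonal'' embedding $\vec a \mapsto M_{\vec a}$ interacts correctly with the shufflability and equality-compatibility conditions needed to guarantee that each $\Alg^{(r)}$ is semisimple; because the embedding asymmetrically distinguishes the first $s-t$ coordinates, one may need to reorder coordinates or slightly strengthen the compatibility conditions. Both issues should be resolvable with routine adaptations of the arguments in Section~\ref{sec:upper-bounds}.
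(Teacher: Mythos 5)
Your approach is genuinely different from the paper's. The paper proves this theorem by a binarization reduction: it packs $\ell = \lceil (k+1)/2 \rceil$ vertices into a single vertex of a binary structure $\Binary(\FA)$ over $V^\ell$, applies the $O(N \log N)$ iteration bound for $2$-WL to $\Binary(\FA)$, and translates the resulting $3$-variable distinguishing sentence over $\Binary(\FA)$ back into a $3\ell$-variable sentence over $\FA$ (Lemmas~\ref{la:translate-formula-from-binary-structure} and~\ref{la:wl-eq-from-binary-structure}). You instead propose to stay on $\FA$ and re-run the algebraic machinery of Section~\ref{sec:upper-bounds} with a different product on $\Complex^{V^k}$ whose matrix embedding lands in $\FullMatrixAlg_{n^s}(\Complex)$. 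The algebraic core — the block-diagonal embedding, the involution, the chain-length bound $2n^s$ from Theorem~\ref{thm:semisimple-sequence-length}, and the divide-and-conquer logic lemma with nested counting over $t$-tuples — is sound, and the variable bookkeeping does work out (with slack $d-(k+t) \in \{2,3\}$, which you in fact do not even need, since bound variables inside $\varphi_1$ and $\varphi_2$ can shadow the $t$ external $x$-variables that each of them omits).

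However, there is a genuine gap in your third step. The argument of Theorem~\ref{thm:upper-bound} bounds the length of an \emph{a priori given} coloring sequence $\chi_0 \succ \chi_1 \succ \dots$ that satisfies conditions \ref{item:upper-bound-shufflable}--\ref{item:upper-bound-strict}; for Theorem~\ref{thm:wl-round-upper-bound} this sequence is simply the $k$-WL iteration, which is known to terminate in a coloring that distinguishes $\FA$ and $\FB$. Your sketch gestures at ``the underlying refinement'' but never pins down which coloring sequence on $V^k$ you are bounding. This matters for two reasons. First, to run the analog of Claim~\ref{claim:multiplication-logic} you need a correspondence ``one round of the new refinement $\leftrightarrow$ $t$ quantifiers of $\LCk{d}$'', i.e.\ an analog of Theorem~\ref{thm:wl-logic} for your non-standard refinement operator; this is not automatic and is precisely what the binarization buys you for free by reducing to $2$-WL. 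Second, and more critically, you never argue that the stable coloring of your refinement process actually separates $\FA$ from $\FB$. Your algebra refines along a specific coordinate split $(v_1,\dots,v_{s-t} \mid v_{s-t+1},\dots,v_s \mid v_{s+1},\dots,v_k)$, and one round of this refinement is not, on its face, at least as strong as one round of $k$-WL, which symmetrically inspects all $k$ single-coordinate substitutions; so the chain of algebras could in principle stabilize on a coloring that does not distinguish the two structures, in which case the formula you extract would not be a distinguishing sentence. The paper handles exactly this point via Lemma~\ref{la:wl-eq-from-binary-structure}, whose proof is a non-trivial simulation argument showing that $2$-stability of $\Binary(\FC)$ forces $k$-stability of the induced coloring of $V(\FC)^k$. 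You would need an analogous lemma for your refinement, and that is the genuinely missing piece — not a ``routine adaptation''.
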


Toward the proof of this theorem, let us fix some $k \geq 2$ and suppose that $k$ is odd, i.e., $k = 2\ell - 1$ for some integer $\ell \geq 2$ (this is the crucial case).
Let $\FA$ be a relational structure of arity at most $k$.
We translate $\FA$ into a binary structure (i.e., a structure of arity at most two) $\Binary(\FA)$ defined as follows.
The universe of $\Binary(\FA)$ is set to
\[V(\Binary(\FA)) \coloneqq (V(\FA))^\ell.\]
For every atomic type $\typ \in \{\atp_{\FA}(\vec v) \mid \vec v \in (V(\FA))^{2\ell}\}$ (on $2\ell$ vertices) we introduce a binary relation symbol $R_{\typ}$ and set
\[R_{\typ}^{\Binary(\FA)} \coloneqq \big\{\big((v_1,\dots,v_\ell),(v_{\ell+1},\dots,v_{2\ell})\big) \bigmid \atp_{\FA}(v_1,\dots,v_{2\ell}) = \typ\big\}.\]

Now, the key idea behind the proof of Theorem \ref{thm:trading-upper-bound} is to use $d$ variables to simulate the execution of $2$-WL on the binary structure $\Binary(\FA)$.
We can then obtain the upper bound on the quantifier rank by exploiting that $2$-WL stabilizes after at most $O(n \log n)$ rounds (see Theorem \ref{thm:wl-round-upper-bound}).

The next lemma translates a formula that distinguishes between $\Binary(\FA)$ and $\Binary(\FB)$ into a formula distinguishing $\FA$ and $\FB$.

\begin{lemma}
 \label{la:translate-formula-from-binary-structure}
 Let $\FA$ and $\FB$ be two relational structures of arity at most $k$.
 Suppose there is a sentence $\varphi \in \LCkq{d}{q}$ such that $\Binary(\FA) \models \varphi$ and $\Binary(\FB) \not\models \varphi$.
 Then there is a sentence $\widetilde{\varphi} \in \LCkq{d\cdot\ell}{q\cdot\ell}$ such that $\FA \models \widetilde{\varphi}$ and $\FB \not\models \widetilde{\varphi}$.
\end{lemma}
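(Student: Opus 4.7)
The plan is to define $\widetilde\varphi$ by syntactic recursion on $\varphi$, replacing each variable $x$ of $\varphi$ by an $\ell$-tuple $(x_1,\ldots,x_\ell)$ of fresh variables over $\FA$; the $d$ variables of $\varphi$ thus become at most $d\cdot\ell$ variables in $\widetilde\varphi$. The atomic cases are immediate: $x=y$ becomes $\bigwedge_{t=1}^{\ell} x_t = y_t$, and $R_{\typ}(x,y)$ becomes the quantifier-free formula over $\FA$ asserting $\atp_{\FA}(x_1,\ldots,x_\ell,y_1,\ldots,y_\ell)=\typ$. Boolean connectives translate componentwise. None of these steps contribute to the quantifier rank.

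The main obstacle is the counting quantifier $\exists^{\geq j} x\,\psi$, whose translation must express $|\{(x_1,\ldots,x_\ell)\in V(\FA)^\ell : \widetilde\psi\}|\geq j$ in $\LC$ using only the $\ell$ variables $x_1,\ldots,x_\ell$ beyond those already occurring in $\widetilde\psi$, and with exactly $\ell$ extra quantifier rank. I handle this by an auxiliary construction $A_{j,s}(\chi;y_1,\ldots,y_s)$ asserting $|\{(y_1,\ldots,y_s):\chi\}|\geq j$, defined by induction on $s$. For $s=1$, set $A_{j,1}(\chi;y_1)\coloneqq\exists^{\geq j}y_1\,\chi$. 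For $s\geq 2$, I partition $s$-tuples by their first coordinate: with $m_c\coloneqq|\{y_1 : |\{(y_2,\ldots,y_s):\chi\}|=c\}|$, the identity $|\{\vec y:\chi\}| = \sum_{c=0}^{n^{s-1}} c\cdot m_c$ shows that $|\{\vec y:\chi\}|\geq j$ is equivalent to the finite disjunction, over all profiles $(m_0,\ldots,m_{n^{s-1}})$ of non-negative integers with $\sum_c m_c=n$ and $\sum_c c\cdot m_c\geq j$, of
\[\bigwedge_{c=0}^{n^{s-1}} \exists^{=m_c}y_1\bigl(A_{c,s-1}(\chi;y_2,\ldots,y_s)\wedge\neg A_{c+1,s-1}(\chi;y_2,\ldots,y_s)\bigr),\]
where $\exists^{=m}$ is the standard abbreviation $\exists^{\geq m}\wedge\neg\exists^{\geq m+1}$, available in $\LC$. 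A straightforward induction on $s$ then gives $\qr(A_{j,s}(\chi;y_1,\ldots,y_s))=\qr(\chi)+s$.

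The remaining work is routine bookkeeping. The quantifier-rank bound $\qr(\widetilde\varphi)\leq q\cdot\ell$ follows by induction on $\varphi$: atomic and boolean cases add nothing, and each counting quantifier contributes exactly $\ell$ via $A_{j,\ell}$. The variable bound $d\cdot\ell$ is maintained because the nested auxiliary construction for $\exists^{\geq j}x$ reuses the $\ell$ variables already assigned to $x$. Finally, the semantic equivalence $\Binary(\FA)\models\varphi\iff\FA\models\widetilde\varphi$ is verified by induction on $\varphi$, using the definition of $R_{\typ}^{\Binary(\FA)}$ in the atomic case and the partition identity $|\{\vec y:\chi\}|=\sum_c c\cdot m_c$ in the counting case. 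Applying the equivalence to both $\FA$ and $\FB$ yields the lemma.
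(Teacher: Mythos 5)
Your proof is correct and supplies the details of the ``standard syntactic translation'' that the paper delegates to Otto's monograph; the only non-routine step is the counting quantifier, where $\exists^{\geq j}x$ over $\Binary(\FA)$ must be rendered as a count of $\ell$-tuples over $V(\FA)$ using only $\ell$ extra variables and $\ell$ extra quantifier rank, and your profile-based construction $A_{j,s}$ (a disjunction over all histograms $(m_0,\ldots,m_{n^{s-1}})$ with $\sum_c m_c = n$ and $\sum_c c\,m_c\geq j$) handles this correctly, with the inductions on quantifier rank and variable reuse checking out. The dependence of $\widetilde\varphi$ on $n=|V(\FA)|$ through the profiles is harmless, since the lemma only asks for a single sentence distinguishing the given pair, and structures of different sizes are trivially separated by one counting quantifier.
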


The proof of the lemma is a standard syntactic translation (see, e.g., \cite[Chapter 1.5]{Otto17}) and we omit the details here.

\begin{lemma}
 \label{la:wl-eq-from-binary-structure}
 Let $\FA$ and $\FB$ be two relational structures of arity at most $k$ such that $\Binary(\FA) \simeq_{2} \Binary(\FB)$.
 Then $\FA \simeq_k \FB$.
\end{lemma}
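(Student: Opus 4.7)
The plan is to prove the contrapositive: if $\FA \not\simeq_k \FB$, then $\Binary(\FA) \not\simeq_2 \Binary(\FB)$. Because $k = 2\ell - 1$ is odd (the case for which $\Binary$ is defined in the paper), Corollary~\ref{cor:wl-logic-distinguish-structures} supplies a sentence $\varphi \in \LCk{k+1} = \LCk{2\ell}$ with $\FA \models \varphi$ and $\FB \not\models \varphi$. I will translate $\varphi$ syntactically into a sentence $\widetilde\varphi \in \LCk{3}$ satisfying $\Binary(\FA) \models \widetilde\varphi$ and $\Binary(\FB) \not\models \widetilde\varphi$; Corollary~\ref{cor:wl-logic-distinguish-structures} applied in the converse direction then yields $\Binary(\FA) \not\simeq_2 \Binary(\FB)$.

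The key observation is that each element of $V(\Binary(\FA)) = (V(\FA))^{\ell}$ is itself an $\ell$-tuple of $\FA$-vertices, so a single $\Binary$-variable simultaneously encodes an assignment to $\ell$ variables of $\FA$. I partition the $2\ell$ variables of $\varphi$ into two halves $X = \{x_1, \ldots, x_\ell\}$ and $Y = \{x_{\ell+1}, \ldots, x_{2\ell}\}$ and maintain, as an invariant during the translation, which two of the three $\LCk{3}$-variables $\{u_1, u_2, u_3\}$ currently carry the $X$-assignment and the $Y$-assignment, leaving the third free as scratch space for the next nested quantifier. Concretely, at a subformula $\varphi'$ with current $X$-slot $u_\alpha$ and current $Y$-slot $u_\beta$, the translation $\widetilde{\varphi'}(u_\alpha, u_\beta)$ is designed so that for all $\vec a, \vec b \in (V(\FA))^{\ell}$,
\[
 \Binary(\FA) \models \widetilde{\varphi'}(\vec a, \vec b) \iff \FA \models \varphi'(a_1, \ldots, a_\ell, b_1, \ldots, b_\ell).
\]

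The induction on $\varphi$ proceeds as follows. An atomic formula $R(x_{i_1}, \ldots, x_{i_r})$ or equality $x_i = x_j$ is determined by the atomic type of the $2\ell$-tuple $(\vec a, \vec b)$ in $\FA$, so it translates to the finite disjunction $\bigvee_{\typ \in T} R_\typ(u_\alpha, u_\beta)$ ranging over the types $\typ$ in which the atomic fact holds; boolean connectives translate trivially. For a counting quantifier $\exists^{\geq m} x_i \, \varphi'$ with $x_i$ the $j$-th variable of $X$ (the case $x_i \in Y$ is symmetric), let $u_\gamma$ be the unused variable in $\{u_1, u_2, u_3\} \setminus \{u_\alpha, u_\beta\}$ and set
\[
 \widetilde{\exists^{\geq m} x_i \, \varphi'}(u_\alpha, u_\beta) \coloneqq \exists^{\geq m} u_\gamma \bigl[\alpha_j(u_\alpha, u_\gamma) \wedge \widetilde{\varphi'}(u_\gamma, u_\beta)\bigr],
\]
where $\alpha_j(u_\alpha, u_\gamma) \coloneqq \bigvee_{\typ \in T_j} R_\typ(u_\alpha, u_\gamma)$ is the quantifier-free disjunction over those atomic types of $2\ell$-tuples whose equality pattern forces the two $\ell$-blocks to coincide at every coordinate except the $j$-th. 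This constrains $u_\gamma$ to range exactly over the $\ell$-tuples that agree with $u_\alpha$ outside position $j$, putting them in bijection with elements $v \in V(\FA)$ and therefore preserving the count in $\exists^{\geq m}$; in the recursive call, the roles rotate to $u_\gamma$ as current $X$-slot, $u_\beta$ as current $Y$-slot, and $u_\alpha$ as new scratch.

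Correctness is then a routine induction on $\varphi$, using the two observations above (that $R_\typ$ encodes the full atomic type of a $2\ell$-tuple, including its equality pattern, and that $\alpha_j$ precisely enumerates the $\ell$-tuples that differ from $u_\alpha$ only at position $j$). Applied to the original sentence $\varphi$, the construction produces a formula $\widetilde\varphi(u_1, u_2)$ whose truth value is independent of $u_1, u_2$; prefixing it with $\exists u_1 \exists u_2$ (reusing two of the three variables) turns it into an $\LCk{3}$ sentence distinguishing $\Binary(\FA)$ from $\Binary(\FB)$. The main subtlety I anticipate is the bookkeeping of the variable rotation: one must check that at each nested quantifier the scratch variable $u_\gamma$ is indeed free in both $\alpha_j$ and in the recursive translation, which works precisely because at most two of the three variables ever carry a meaningful $X$- or $Y$-assignment at a given subformula.
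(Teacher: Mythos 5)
Your proof is correct, and it takes a genuinely different route from the paper's. You prove the contrapositive by a purely syntactic compilation: starting from a distinguishing sentence in $\LCk{k+1}$ (provided by Corollary~\ref{cor:wl-logic-distinguish-structures}), you translate it into a distinguishing sentence in $\LCk{3}$ over the binary structures using a three-variable rotation and the fact that the relations $R_{\typ}$ carry the full atomic type (including the equality pattern) of a $2\ell$-tuple, and then apply Corollary~\ref{cor:wl-logic-distinguish-structures} once more in the converse direction. The paper instead argues on the coloring side: it takes the $2$-stable coloring $\chi_2$ of $\Binary(\FC)$ for $\FC$ the disjoint union of $\FA$ and $\FB$, pulls it back to a coloring $\chi(v_1,\dots,v_k) \coloneqq \chi_2\big((v_1,\dots,v_\ell),(v_{\ell+1},\dots,v_k,v_k)\big)$ on $(V(\FC))^k$, and verifies directly that $\chi$ refines the atomic types and is $k$-stable, so that $\chi \preceq \WL{k}{\FC}$. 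Your argument is in effect the mirror image of Lemma~\ref{la:translate-formula-from-binary-structure} run in the variable-compressing direction; it is constructive and explicitly bounds the quantifier rank of $\widetilde\varphi$ by that of $\varphi$, at the price of somewhat more delicate bookkeeping of variable roles, whereas the paper's coloring argument avoids syntax entirely and invokes the WL--logic correspondence only once, outside the lemma. One small point of precision: $T_j$ should consist of all atomic types whose equality pattern forces $v_i = v_{\ell+i}$ for every $i \neq j$ while leaving position $j$ unconstrained (types with $v_j = v_{\ell+j}$ and with $v_j \neq v_{\ell+j}$ must both be included); with this reading the $u_\gamma$ satisfying $\alpha_j(u_\alpha,u_\gamma)$ are exactly in bijection with $V(\FA)$ and the count under $\exists^{\geq m}$ is preserved.
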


\begin{proof}
 Consider an arbitrary structure $\FC$ and define $\chi_2 \coloneqq \WL{2}{\Binary(\FC)}$ to be the coloring computed by $2$-WL on the structure $\Binary(\FC)$.
 We define a coloring $\chi\colon (V(\FC))^k \rightarrow C$ by setting
 \[\chi(v_1,\dots,v_k) \coloneqq \chi_2((v_1,\dots,v_\ell),(v_{\ell+1},\dots,v_{k-1},v_k,v_k)).\]
 \begin{claim}
  Suppose $\atp_\FC(v_1,\dots,v_k) \neq \atp_\FC(v_1',\dots,v_k')$.
  Then $\chi(v_1,\dots,v_k) \neq \chi(v_1',\dots,v_k')$.
 \end{claim}
 \begin{claimproof}
  Let $\typ \coloneqq \atp_\FC(v_1,\dots,v_{k-1},v_k,v_k)$.
  Then $((v_1,\dots,v_\ell),(v_{\ell+1},\dots,v_k,v_k)) \in R_{\typ}^{\Binary(\FC)}$, but on the other hand $((v_1',\dots,v_\ell'),(v_{\ell+1}',\dots,v_k',v_k')) \notin R_{\typ}^{\Binary(\FC)}$.
  So
  \[\atp_{\Binary(\FC)}((v_1,\dots,v_\ell),(v_{\ell+1},\dots,v_k,v_k)) \neq \atp_{\Binary(\FC)}((v_1',\dots,v_\ell'),(v_{\ell+1}',\dots,v_k',v_k'))\]
  which implies that
  \[\chi_2((v_1,\dots,v_\ell),(v_{\ell+1},\dots,v_k,v_k)) \neq \chi_2((v_1',\dots,v_\ell'),(v_{\ell+1}',\dots,v_k',v_k')).\]
  This directly implies the claim.
 \end{claimproof}
 \begin{claim}
  $\chi$ is $k$-stable.
 \end{claim}
 \begin{claimproof}
  Let $\vec v,\vec v' \in (V(\FC))^k$ such that $\chi(\vec v) = \chi(\vec v')$.
  Suppose $\vec v = (v_1,\dots,v_k)$ and $\vec v' = (v_1',\dots,v_k')$.
  Let us write $\vec v_1 \coloneqq (v_1,\dots,v_\ell)$ for the ``first half'' of $\vec v$, and $\vec v_2 \coloneqq (v_{\ell+1},\dots,v_k)$ for the ``second half''.
  Note that $\vec v_2$ has only $\ell - 1$ entries since $k = 2\ell-1$.
  Similarly, we define $\vec v_1' \coloneqq (v_1',\dots,v_\ell')$ and $\vec v_2' \coloneqq (v_{\ell+1}',\dots,v_k')$.
  For $w \in V(\FC)$ we write $\vec v_2 \circ w$ for the tuple $(v_{\ell+1},\dots,v_k,w)$ obtained from $\vec v_2$ by appending $w$.
  The tuple $\vec v_2' \circ w$ is defined analogously.
  
  Since $\chi_2$ is $2$-stable and $\chi_2(\vec v_1,\vec v_2 \circ v_k) = \chi_2(\vec v_1',\vec v_2' \circ v_k')$, we conclude that
  \[\Big\{\!\!\Big\{\big(\chi_2(\vec v_1,\vec w),\chi_2(\vec w,\vec v_2 \circ v_k)\big) \Bigmid \vec w \in (V(\FC))^\ell\Big\}\!\!\Big\} = \Big\{\!\!\Big\{\big(\chi_2(\vec v_1',\vec w),\chi_2(\vec w,\vec v_2' \circ v_k')\big) \Bigmid \vec w \in (V(\FC))^\ell\Big\}\!\!\Big\}.\]
  Using that $\chi_2$ refines the coloring by atomic types, it follows that
  \begin{align*}
           &\Big\{\!\!\Big\{\big(\chi_2(\vec v_1,\vec v_2 \circ w),\chi_2(\vec v_2 \circ w,\vec v_2 \circ v_k)\big) \Bigmid w \in V(\FC)\Big\}\!\!\Big\}\\
   = \quad &\Big\{\!\!\Big\{\big(\chi_2(\vec v_1',\vec v_2' \circ w),\chi_2(\vec v_2' \circ w,\vec v_2' \circ v_k')\big) \Bigmid w \in V(\FC)\Big\}\!\!\Big\}.
  \end{align*}
  In particular, we get that
  \[\Big\{\!\!\Big\{\chi_2(\vec v_1,\vec v_2 \circ w) \Bigmid w \in V(\FC)\Big\}\!\!\Big\} = \Big\{\!\!\Big\{\chi_2(\vec v_1',\vec v_2' \circ w) \Bigmid w \in V(\FC)\Big\}\!\!\Big\}.\]
  
  Now let $w,w' \in V(\FC)$ such that $\chi_2(\vec v_1,\vec v_2 \circ w) = \chi_2(\vec v_1',\vec v_2' \circ w')$.
  Then
  \[\chi(\vec v[w/i]) = \chi(\vec v'[w'/i])\]
  for all $i \in [k]$ using again that $\chi_2$ is $2$-stable and refines the coloring by atomic types.
  It follows that
  \[\Big\{\!\!\Big\{\big(\chi(\vec v[w/1]),\dots,\chi(\vec v[w/k])\big) \Bigmid w \in V(\FC)\Big\}\!\!\Big\} = \Big\{\!\!\Big\{\big(\chi(\vec v'[w/1]),\dots,\chi(\vec v'[w/k])\big) \Bigmid w \in V(\FC)\Big\}\!\!\Big\}.\]
  Overall, this implies that $\chi$ is $k$-stable.
 \end{claimproof}
 
 Combining both claims, we obtain that $\chi \preceq \WL{k}{\FC}$.
 Now, we complete the proof by setting $\FC$ to the disjoint union of $\FA$ and $\FB$.
\end{proof}

\begin{proof}[Proof of Theorem \ref{thm:trading-upper-bound}]
 First suppose that $k$ odd, i.e., $k = 2\ell-1$ for some integer $\ell \geq 2$.
 Since there is a sentence $\varphi \in \LCk{k+1}$ such that $\FA \models \varphi$ and $\FB \not\models \varphi$, we conclude that $\FA \not\simeq_k \FB$ using Corollary \ref{cor:wl-logic-distinguish-structures}.
 So $\Binary(\FA) \not\simeq_{2} \Binary(\FB)$ by Lemma \ref{la:wl-eq-from-binary-structure}.
 By Theorem \ref{thm:wl-round-upper-bound}, the $2$-WL algorithm distinguishes between $\Binary(\FA)$ and $\Binary(\FB)$ after at most $r = O(|V(\Binary(\FA))| \log |V(\Binary(\FA))|) = O(\ell \cdot n^\ell \cdot \log n)$ many refinement rounds.
 Using Corollary \ref{cor:wl-logic-distinguish-structures} again, this means there is a sentence $\varphi' \in \LCkq{3}{r}$ such that $\Binary(\FA) \models \varphi'$ and $\Binary(\FB) \not\models \varphi'$.
 So there is a sentence $\psi \in \LCkq{3\cdot\ell}{r\cdot\ell}$ such that $\FA \models \psi$ and $\FB \not\models \psi$ using Lemma \ref{la:translate-formula-from-binary-structure}.
 Note that $3\ell = 3 \cdot \frac{k+1}{2} = d$ and $r \cdot \ell = O(\ell^{2} \cdot n^\ell \cdot \log n) =  O(k^2 \cdot n^{(k+1)/2} \log n)$.
 
 For $k$ being even, the statement the of theorem follows by applying the first case to $k' = k+1$.
\end{proof}

\section{Conclusion}
\label{sec:conclusion}
We obtained new upper and lower bounds for the iteration number of the WL algorithm.
First, we showed that $k$-WL always stabilizes after at most $O(kn^{k-1}\log n)$ rounds for all $k \geq 2$, which is the first non-trivial upper bound on the iteration number for $k \geq 3$.
We complemented this result by a lower bound of $n^{\Omega(k)}$ which improves over the previously known lower bound of $n^{\Omega(k/\log k)}$~\cite{BerkholzN16}.
Finally, we also investigated tradeoffs between the dimension and the iteration number of WL.
Using known characterizations of WL, our results also imply upper and lower bounds on the quantifier rank of formulas in $\LCk{k}$ required to distinguish between two structures.

Still, several questions remain open.
The first question concerns the iteration number of $k$-WL on graphs.
The structures on which our lower bounds hold are $n$-element structures of arity $\Theta(k)$ and size $n^{\Theta(k)}$, and the increase in arity is inherent in the hardness condensation from~\cite{BerkholzN16}.
The best known lower bound on the iteration number of $k$-WL on graphs is $\Omega(n)$ due to F{\"{u}}rer \cite{Furer01}.
As an intermediate question, one can also ask for improved lower bounds in the size of the structure (i.e., the sum of the sizes of all relations), i.e., are there structures on which the iteration number of $k$-WL exceeds $\Omega(m)$ where $m$ denotes the size of the structure?

Our next question concerns the quantifier rank of formulas in $\LLk{k}$.
While our lower bounds extend to the logic $\LLk{k}$ (see Theorem \ref{thm:quantifier-rank-lower-bound}), this is not the case for the upper bounds that crucially rely on the availability of counting quantifiers.
A non-trivial upper bound of $O(n^{2}/ \log n)$ on the quantifier rank of formulas in $\LLk{3}$ has been obtained in \cite{KieferS19}.
Can we also obtain improved upper bounds on the quantifier rank of formulas in $\LLk{k}$ for $k \geq 4$?

Finally, we ask for further results on tradeoffs between the variable number and the quantifier rank.
Specifically, is there an integer $d \geq 3$ such that, for all structures $\FA$ and $\FB$ of size $n$ distinguished by $3$-WL, $d$-WL distinguishes between $\FA$ and $\FB$ in at most $\widetilde{O}(n)$ rounds (where $\widetilde{O}(\cdot)$ hides polylogarithmic factors)?
We remark that even $d = 3$ may be a valid choice, but any $d \geq 3$ is sufficient to obtain further tradeoffs in the spirit of Theorem \ref{thm:trading-upper-bound-intro}.

\bibliographystyle{plainurl}
\small
\bibliography{literature}

\end{document}